\title{\texorpdfstring{Algebraic Characterizations of Classes of Regular Languages in \(\DynFO\)}{Algebraic Characterizations of Classes of Regular Languages in DynFO}} %
\author{Corentin Barloy}{Ruhr University Bochum, Germany \and \url{https://barloy.name/}}{corentin.barloy@rub.de}{https://orcid.org/0000-0001-5420-8761}{}
\author{Felix Tschirbs}{Ruhr University Bochum, Germany}{felix.tschirbs@rub.de}{}{}
\author{Nils Vortmeier}{Ruhr University Bochum, Germany}{nils.vortmeier@rub.de}{https://orcid.org/0009-0000-2821-7365}{}
\author{Thomas Zeume}{Ruhr University Bochum, Germany}{thomas.zeume@rub.de}{https://orcid.org/0000-0002-5186-7507}{}
\authorrunning{C.~Barloy, F.~Tschirbs, N.~Vortmeier and T.~Zeume} %
\keywords{Dynamic descriptive complexity, formal languages, monoid theory} %
\newcommand{\Member}[1]{\myproblem{Member}(#1)}
\newcommand{\StrictPrefixPb}{\myproblem{StrictPrefix}}
\newcommand{\StrictPrefix}[1]{\myproblem{StrictPrefix}(#1)}
\newcommand{\StrictSuffixPb}{\myproblem{StrictSuffix}}
\newcommand{\StrictSuffix}[1]{\myproblem{StrictSuffix}(#1)}
\newcommand{\regex}[1]{\ensuremath{\mathtt{#1}}}
\newcommand{\up}[1]{\ensuremath{\uparrow #1}}
\newcommand{\tpl}{\bar}
\newcommand{\mtext}[1]{\textsc{#1}}
\newcommand{\set}{\mtext{set}\xspace}
\newcommand{\schema}{\ensuremath{\tau}\xspace}
\newcommand{\aux}{\ensuremath{\calA}\xspace}%
\newcommand{\df}{\ensuremath{\mathrel{\smash{\stackrel{\scriptscriptstyle{
    \text{def}}}{=}}}} \;}
\newcommand  {\myclass} [1]  {\ensuremath{\textsf{\upshape #1}}}
\newcommand{\StaClass}[1]{\myclass{#1}\xspace}
\newcommand{\DynClass}[1]{\myclass{Dyn#1}\xspace}
\newcommand{\UDynClass}[1]{\myclass{UDyn#1}\xspace}
\newcommand  {\myproblem} [1] {\normalfont{\textsc{#1}}\xspace}
\newcommand{\FO}{\StaClass{FO}}
\newcommand{\CQ}[1][]{\StaClass{CQ}}
\newcommand{\UCQ}[1][]{\StaClass{UCQ}}
\newcommand{\CQneg}[1][]{\StaClass{CQ\ensuremath{^{\mneg}}}}
\newcommand{\UCQneg}[1][]{\StaClass{UCQ\ensuremath{^{\mneg}}}}
\newcommand{\Prop}{\StaClass{Prop}}
\newcommand{\mneg}{\neg} %
\newcommand{\DynProp}{\DynClass{Prop}}
\newcommand{\DynFO}{\DynClass{FO}}
\newcommand{\DynC}{\DynClass{\ensuremath{\calC}}}
\newcommand{\DynSt}{\DynClass{\ensuremath{\Sigma_{2}}}}
\newcommand{\UDynProp}{\UDynClass{Prop}}
\newcommand{\UDynC}{\UDynClass{\ensuremath{\calC}}}
\newcommand{\UDynSt}{\UDynClass{\ensuremath{\Sigma_{2}}}}
\newcommand{\UDynSo}{\UDynClass{\ensuremath{\Sigma_{1}}}}
\newcommand{\UDynSop}{\UDynClass{\ensuremath{\Sigma_{1}^+}}}
\newtheorem{fact}[theorem]{Fact}
\theoremstyle{definition}
\newtheorem*{question*}{Question}
\newtheorem*{openquestion*}{Open question}
\newenvironment{proofsketch}{\begin{proof}[Proof sketch.]}{\end{proof}}
\providecommand {\calA}      {{\mathcal A}\xspace}
\providecommand {\calC}      {{\mathcal C}\xspace}
\providecommand {\calL}      {{\mathcal L}\xspace}
\providecommand {\calV}      {{\mathcal V}\xspace}
\newcommand{\cA}{\mathcal{A}}
\newcommand{\cV}{\mathcal{V}}
\newcommand{\bV}{\mathbf{V}}
\newcommand{\bW}{\mathbf{W}}
\newcommand{\bB}{\mathbf{B}}
\newcommand{\bE}{\mathbf{E}}
\newcommand{\bG}{\mathbf{G}}
\newcommand{\bJ}{\mathbf{J}}
\newcommand{\prog}{\ensuremath{\Pi}\xspace}
\newcommand{\auxSchema}{\ensuremath{\schema_{\text{aux}}}\xspace}
\newcommand{\greenH}{\mathcal{H}}
\newcommand{\greenR}{\mathcal{R}}
\newcommand{\greenL}{\mathcal{L}}
\newcommand{\greenJ}{\mathcal{J}}
\newcommand{\greenK}{\mathcal{K}}
\definecolor{iltisBeige1}{HTML}{fef6ee}
\definecolor{iltisBeige2}{HTML}{e3d5c8}
\definecolor{iltisBeige3}{HTML}{cfbeb0}
\definecolor{iltisBeige4}{HTML}{bfac9b}
\definecolor{iltisGrey1}{HTML}{edebe8}
\definecolor{iltisGrey2}{HTML}{d9d6d2}
\definecolor{iltisGrey3}{HTML}{c2bfbc}
\definecolor{iltisGrey4}{HTML}{a6a4a1}
\definecolor{iltisLightGreen1}{HTML}{f4ffe9}
\definecolor{iltisLightGreen2}{HTML}{d4f7b2}
\definecolor{iltisLightGreen3}{HTML}{bde697}
\definecolor{iltisLightGreen4}{HTML}{a3d177}
\definecolor{iltisGreen1}{HTML}{cfffe1}
\definecolor{iltisGreen2}{HTML}{a2e8bd}
\definecolor{iltisGreen3}{HTML}{86d1a2}
\definecolor{iltisGreen4}{HTML}{6fbf8d}
\definecolor{iltisYellow1}{HTML}{fef2d0}
\definecolor{iltisYellow2}{HTML}{ffe3a2}
\definecolor{iltisYellow3}{HTML}{ffd77d}
\definecolor{iltisYellow4}{HTML}{f2c55e}
\definecolor{iltisRed1}{HTML}{ffe9e6}
\definecolor{iltisRed2}{HTML}{eda498}
\definecolor{iltisRed3}{HTML}{e08475}
\definecolor{iltisRed4}{HTML}{c26e60}
\definecolor{iltisOrange1}{HTML}{ffdfb3}
\definecolor{iltisOrange2}{HTML}{ffc97d}
\definecolor{iltisOrange3}{HTML}{ebb467}
\definecolor{iltisOrange4}{HTML}{e0a34c}
\definecolor{iltisCyan1}{HTML}{e0fffe}
\definecolor{iltisCyan2}{HTML}{b4e0df}
\definecolor{iltisCyan3}{HTML}{95c7c5}
\definecolor{iltisCyan4}{HTML}{81b3b0}
\definecolor{iltisBlue1}{HTML}{cce8ff}
\definecolor{iltisBlue2}{HTML}{8db8d9}
\definecolor{iltisBlue3}{HTML}{6f9abd}
\definecolor{iltisBlue4}{HTML}{5c86a8}
\definecolor{iltisBlue5}{HTML}{2e5b80}
\definecolor{iltisViolet1}{HTML}{ede6ff}
\definecolor{iltisViolet2}{HTML}{b1a5cc}
\definecolor{iltisViolet3}{HTML}{9b8eba}
\definecolor{iltisViolet4}{HTML}{8578a6}
\colorlet{midgray}{black!50}
\tikzstyle{dEdge}=[
\tikzstyle{dhEdge}=[
\tikzstyle{uEdge}=[
\tikzstyle{uhEdge}=[
\tikzstyle{cEdge}=[
\tikzstyle{dotsEdge}=[
\tikzstyle{snakeEdge}=[
\tikzstyle{snakeEdgea}=[
\tikzstyle{blackNode}=[
\begin{document}

\maketitle

\begin{abstract}
	
	This paper explores the fine-grained structure of classes of regular languages maintainable in fragments of first-order logic within the dynamic descriptive complexity framework of Patnaik and Immerman. A result by Hesse states that the class of regular languages is maintainable by first-order formulas even if only unary auxiliary relations can be used. Another result by Gelade, Marquardt, and Schwentick states that the class of regular languages coincides with the class of languages maintainable by quantifier-free formulas with binary auxiliary relations.
	
	We refine Hesse's result and show that with unary auxiliary data $\exists^*\forall^*$-formulas can maintain all regular languages. We then obtain precise algebraic characterizations of the classes of languages maintainable with quantifier-free formulas and positive $\exists^*$-formulas in the presence of unary auxiliary relations.

\end{abstract}

\newpage

\section{Introduction}
\label{sec:intro}
In seminal work by Schützenberger as well as McNaugthon and Papert, the class of regular languages has been characterized algebraically as the class of languages with finite syntactic monoids~\cite{Schutzenberger_syntactic_1955} and logically as the class of languages definable by monadic second-order sentences~\cite{buchi_mso_reg_1960}.
In subsequent work, a multitude of  algebraic and logical characterisations have been obtained for subclasses of regular languages,
including a characterization of the class of star-free regular languages (languages describable by regular expressions without Kleene star but with negation) as the class of languages with finite aperiodic monoids and definable in first-order logic~\cite{schutzenberger_aperiodic_1965,mcnaughton_fo_star_free_1971};
and the characterization of piecewise testable languages  (languages describable by the set of subwords appearing in the word) as languages with $\mathcal J$-trivial syntactic monoids and definable by first-order formulas without quantifier alternation~\cite{simon_j_1975}.

Here we are interested in algebraic properties of formal languages defined in the dynamic descriptive complexity framework of Patnaik and Immerman \cite{PatnaikI97}.  In this framework, strings are subject to changes, and the goal is to maintain membership within a language via logical formulas, possibly using auxiliary information stored in auxiliary relations (which need to be updated by logical formulas as well). An important class of languages in this setting are the languages maintainable via first-order formulas, called $\DynFO$. It has been shown early on that $\DynFO$ contains all regular and even all context-free languages \cite{PatnaikI97, GeladeMS12}.

The goal of this paper is to initiate a search for connections between algebra and dynamic descriptive complexity classes. This is in a similar spirit as the search for connections between algebra and logics initiated by Schützenberger, McNaugthon and Papert for the static realm, as sketched above. Our guiding question is:
\smallskip
\begin{itemize}
 \item[] \emph{Are there restrictions of $\DynFO$ that lead to natural subclasses of the regular languages with nice algebraic characterizations?}
\end{itemize}
\smallskip
For answering this question, it is natural to first identify restrictions of $\DynFO$ that can still maintain all regular languages. This has been studied in prior work. A result by Hesse states that the class of regular languages is maintainable by first-order formulas even if only very restricted auxiliary information, namely only unary relations, can be used~\mbox{\cite[Theorem 2.8]{hesse_phd_2003}}\footnote{A variant of this result has already been reported in \cite{PatnaikI97}, but additional built-in arithmetic has been used as auxiliary information.}. A seminal result by Gelade, Marquardt, and Schwentick states that, if binary auxiliary relations can be used, then the class of regular languages coincides with the class $\DynProp$ of languages maintainable by quantifier-free formulas \cite[Theorem 3.2]{GeladeMS12}. 

We refine the result by Hesse. Denote by \(\UDynSt\) the class of languages maintainable by $\exists^*\forall^*$-formulas using unary auxiliary relations.
\begin{theorem} \label{thm:thm1}
  All regular language are in \(\UDynSt\).
\end{theorem}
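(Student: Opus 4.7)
The plan is to maintain, using unary auxiliary relations only, the prefix- and suffix-product colorings of the input word with respect to the syntactic monoid of the given regular language $L$, and then to express letter-change, insertion, and deletion updates as $\exists^*\forall^*$-formulas over these colorings. Let $\mu:\Sigma^*\to M$ be the syntactic morphism of $L$ with accepting set $F\subseteq M$. The auxiliary data consist of the unary relations $R_m(j) \Leftrightarrow \mu(\sigma(1)\cdots\sigma(j))=m$ and their duals $S_m(j) \Leftrightarrow \mu(\sigma(j+1)\cdots\sigma(n))=m$. Membership is then a straightforward query: existentially pick the last position and check that $R_m$ there belongs to $F$, which fits the $\exists^*\forall^*$-shape.

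The heart of the argument will be the update formulas after a letter change at position $i$ from $a$ to $b$ (insertion and deletion are treated afterward). Positions $j<i$ keep their prefix color (and symmetrically for suffixes), which is quantifier-free. For $j\geq i$ one must recompute $p^{\text{new}}(j) = p^{\text{old}}(i-1)\cdot b\cdot \pi_{(i,j]}$, where $\pi_{(i,j]} = \mu(\sigma(i+1)\cdots\sigma(j))$ is a ``middle product'' not directly stored anywhere in the unary auxiliary data. The plan for recovering this middle product is to existentially guess a constant-size certificate for it and to universally verify the certificate against the old auxiliary coloring and the input.

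The intended source of such a short certificate is a bounded-depth factorization of $\pi_{(i,j]}$ in the style of Simon's factorization forest theorem: every word admits a factorization tree of depth bounded solely in terms of $|M|$, with each internal node being either a binary product of labeled factors or a repetition of a single idempotent $e$ whose children all carry label $e$. Since both the depth and the branching of this tree are constants depending only on $M$, one can guess \emph{all} its anchor positions and node labels at once using a bounded number of existential quantifiers. The universal quantifiers then perform the required local checks at every position: at anchors, the guessed label must agree with the old $R$- and $S$-relations; within an idempotent-labeled segment, every letter's image in $M$ must be absorbed by that idempotent; and the guessed labels must compose, along the guessed tree, to the claimed target value. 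This yields $p^{\text{new}}(j)$, and hence all updated $R_m$-relations, via an $\exists^*\forall^*$-formula.

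The main obstacle I foresee is keeping the quantifier structure within $\exists^*\forall^*$ despite the recursive nature of factorization forests: a naive unrolling would nest $\exists$ and $\forall$ alternations to a depth equal to the forest depth. The trick will be to guess the complete decomposition ``in parallel'' at all levels simultaneously, so that the remaining verification decomposes into a conjunction of purely universal local conditions. A secondary technical point is that insertion and deletion updates must be reduced to a letter-change combined with a uniform index-shift, which can be folded into the same $\exists^*\forall^*$-shape without further quantifier blow-up, and that the dual suffix relations $S_m$ must be updated by a symmetric construction.
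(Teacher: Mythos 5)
Your high-level architecture---unary colorings plus an existentially guessed, universally verified constant-size certificate for the ``middle product''---is the same as the paper's, but the certificate you propose does not have constant size, so the core step fails. The decisive error is the claim that a Simon factorization forest has depth \emph{and} branching bounded in terms of $|M|$: only the depth is bounded. Idempotent nodes have unbounded arity (a tree with $n$ leaves and constant depth must have nodes with polynomially many children), so the forest has unboundedly many anchor positions and node labels and cannot be guessed by a fixed block of existential quantifiers. Your fallback local test for idempotent segments---``every letter's image is absorbed by the idempotent''---is not equivalent to the segment evaluating to $e$ (take the syntactic monoid of $\regex{(ab)^{*}}$ with $e=\mu(ab)$: the correct forest condition is that every \emph{child factor} evaluates to $e$, and those factors are themselves defined by the lower levels of the recursion, which reintroduces exactly the unbounded alternation you are trying to avoid). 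A further gap is that your only auxiliary data are the prefix colors $R_m$ and suffix colors $S_m$: in a non-group monoid, knowing the prefix colors at two anchors does not determine the product of the infix between them (this is precisely the obstruction that makes the group case of Section~\ref{sec:udynprop} special), so ``the guessed label must agree with the old $R$- and $S$-relations at anchors'' is not a sound verification step. (A minor point: the change model here has only $\set_\sigma(i)$ with $\sigma\in\Sigma\cup\{\epsilon\}$, so no index-shift machinery for insertions and deletions is needed.)

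The paper avoids all of this with a different decomposition. Evaluating a word left to right, the $\greenJ$-class of the partial products can strictly fall at most $|M|$ times (Lemma~\ref{fact:J_falling}), so any infix splits into at most $|M|$ blocks---genuinely constantly many factors, not merely constant depth---and these constantly many break points can be guessed existentially. The auxiliary relations are not plain prefix colorings but the Green-theoretic relations $R_{\geq J,y,x}$ and $L_{\geq J,y,x}$, which record for each position the evaluation of the maximal factor staying $\greenJ$-above a given class; these are exactly what is needed to verify each block with universal quantifiers, to pin down the $\greenR$- and $\greenL$-class (hence the $\greenH$-class) of the infix, and then to recover the precise element via Green's lemma. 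To repair your argument you would need to replace the factorization forest by this $\greenJ$-fall decomposition, or by some other decomposition with a constant \emph{number} of factors each verifiable from maintainable unary data.
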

This result can be obtained by close inspection of Hesse's proof, which maintains properties of the execution graph of finite state automata. Here, we provide an algebraic reformulation of the proof relying on Green's relations (see Section \ref{sec:udynst}).

Thus, when looking for restrictions of $\DynFO$ that correspond to natural subclasses of the regular languages with nice algebraic characterizations, one can explore (a) fragments of (binary) $\DynProp$, or (b) fragments of  \(\UDynSt\). 

We obtain results along both directions. For binary $\DynProp$, it is natural to further restrict the auxiliary relations. We precisely characterize the languages in $\DynProp$ with unary auxiliary relations  by algebraic properties of their syntactic monoids (see Section \ref{sec:udynprop}).

\begin{theorem}\label{thm:thm2}
 	A regular language is in $\UDynProp$ if any only if its syntactic monoid is a group.
\end{theorem}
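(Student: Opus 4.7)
The plan is to prove both directions separately.

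For the easier direction, assume $M := M(L)$ is a group with syntactic morphism $\eta : \Sigma^* \to M$. I would introduce one unary auxiliary relation $R_m$ per $m \in M$ and maintain the invariant $R_m(x) \iff \eta(w[1..x-1]) = m$, together with a $0$-ary relation Accept tracking $\eta(w) \in \eta(L)$. After a change of the letter at position $i$ from $a$ to $b$, the new prefix image at any $x > i$ equals $c \cdot \eta(w[1..x-1])$ for the constant $c = P \cdot \eta(b) \cdot \eta(a)^{-1} \cdot P^{-1}$, where $P$ is the old prefix image at $i$. Because $M$ is a group, $\eta(a)^{-1}$ and $P^{-1}$ exist, so $c$ is a well-defined element of $M$; since $P$ is available via the unary auxiliary relations at the change parameter $i$, each $R_m(x)$ updates by a quantifier-free boolean combination of old unary atoms. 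The update of Accept is analogous via the total monoid image, and insertion/deletion variants are handled similarly once we also maintain suffix images.

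For the harder direction, assume $L \in \UDynProp$ via a program $\Pi$ and suppose for contradiction that $M(L)$ is not a group. Then some letter $a \in \Sigma$ acts non-injectively on $M$ by right multiplication, giving $m_1 \neq m_2$ with $m_1 [a] = m_2 [a]$; choose $u_1, u_2 \in \Sigma^*$ realising $m_1, m_2$ and a distinguishing context $(v, w)$ with $v u_1 w \in L \iff v u_2 w \notin L$, while $v u_1 a w \sim_L v u_2 a w$. I would build both $v u_1 a w$ and $v u_2 a w$ via parallel canonical update histories from the empty configuration, reaching aux states $S_1$ and $S_2$, and then apply one update that removes or overwrites the inserted $a$.

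The key locality fact is that in $\UDynProp$ the new value of Accept after a change with parameter $\bar c$ is a quantifier-free boolean combination of the old unary atoms at $\bar c$ and the old Accept bit. Since $v u_1 a w \sim_L v u_2 a w$, the old Accept bits in $S_1$ and $S_2$ coincide; the only obstruction to the final update producing the same new Accept is a mismatch in the local unary type at the change position. I would rule this out by pigeonhole: since $M$ is finite, $\eta^{-1}(m_1)$ and $\eta^{-1}(m_2)$ are infinite, so I may vary $u_1, u_2$ (e.g., of equal length and within matching $\greenL$- and $\greenR$-classes, using the Green-relation vocabulary from Section~\ref{sec:udynst}) over infinitely many pairs while preserving the algebraic condition. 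Each such pair induces a local unary type at the change position from a fixed finite set, so two choices must yield agreeing types — forcing identical new Accept yet requiring opposite membership, a contradiction.

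The main obstacle is making this pigeonhole step fully rigorous. Aux states depend on the entire update history, so I would fix one canonical build-up order and take $u_1, u_2$ of the same length so the two executions remain structurally parallel; I would also argue that, as $u_1$ (respectively $u_2$) varies within the algebraic constraints, the local type at the $a$-position in $S_1$ (respectively $S_2$) ranges over a set bounded by the finite number of unary auxiliary relations of $\Pi$. If $\UDynProp$ admits a built-in linear order, the update rule may additionally depend on how a position is ordered relative to $\bar c$, but this only enlarges the finite type and does not affect the pigeonhole conclusion.
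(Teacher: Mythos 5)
Your upper bound is correct and is essentially the paper's argument (\cref{lem:groups_in_udynprop}): both exploit group inverses to reconstruct infix evaluations from unary prefix/suffix information plus the global value, thereby eliminating the binary infix relations of Gelade et al. The constant $c=P\cdot\eta(b)\cdot\eta(a)^{-1}\cdot P^{-1}$ computation is a fine way to phrase it.

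The lower bound, however, has a genuine gap, and it is exactly the step you flag as ``the main obstacle'': the pigeonhole does not close. You need \emph{one} $u_1\in\eta^{-1}(m_1)$ and \emph{one} $u_2\in\eta^{-1}(m_2)$ whose runs produce the \emph{same} unary type at the $a$-position. Pigeonhole over a finite set of types only guarantees collisions \emph{within} each infinite family, not \emph{across} the two families, and for a natural program the cross-family collision provably never happens: under your canonical left-to-right build-up, a quantifier-free program with unary relations can correctly record the prefix image $\eta(vu_i)$ at each position (each insertion at the right end only needs the type of the previous position), so the $a$-position carries the flag $m_1$ in one run and $m_2$ in the other for \emph{every} choice of $u_1,u_2$. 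More fundamentally, your argument reduces everything to a single final change, but the obstruction to $\UDynProp$ is not the inability to answer after one change --- it is the inability to keep the unary auxiliary relations correct across \emph{long sequences} of changes in the middle of the word (which would require binary infix information or inverses). The paper's proof is built around exactly this: it applies the Substructure Lemma together with Higman's Lemma to the words $\regex{a}^n$ and $\regex{a}^m$ ($n<m$), finds an order-preserving isomorphic embedding of the small annotated word into the large one, and then applies the $\pi$-respecting sequences $\set_b(1)\cdots\set_b(n)$ and $\set_b(\pi(1))\cdots\set_b(\pi(n))$; the small word becomes all-$\regex{b}$ while the large one retains an $\regex{a}$, yet the program must answer identically. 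This shows $\regex{(a+b)^{*}a(a+b)^{*}}\notin\UDynProp$, which is then lifted to all non-group monoids via the observation that $U_1$ divides every non-group monoid, and transferred from monoids back to languages via the variety/Eilenberg machinery (\cref{claim:language_to_monoid}). To repair your proof you would need to replace the single-change pigeonhole by an argument over change sequences of this kind; as written, the claimed contradiction does not follow.
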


For $\UDynSt$, it is natural to further restrict the syntactic structure of formulas used for updates. We precisely characterize the languages maintainable by positive $\exists^*$-formulas, denoted $\UDynSop$, by the algebraic properties of their ordered syntactic monoids (see Section~\ref{sec:udynsop}).

\begin{theorem}\label{thm:thm3}
	A regular language is in $\UDynSop$ if and only if its ordered syntactic monoid is in $\bJ^{+} \ast \bG$.
\end{theorem}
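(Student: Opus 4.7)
The plan is to establish both directions of the equivalence separately, in the spirit of the proof of Theorem~\ref{thm:thm2}, but now exploiting the order structure on the syntactic monoid and the positivity of the update formulas. Throughout, the key algebraic handle is the standard description of the Mal'cev product: an ordered monoid $M$ lies in $\bJ^+\ast\bG$ if and only if there is a surjective monoid morphism $\pi\colon M\to G$ onto a finite group $G$ such that $\pi^{-1}(1_G)$ is positively $\mathcal{J}$-trivial, i.e.\ satisfies $x\leq 1$ for all of its elements.

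For the ``if'' direction, assume the ordered syntactic monoid $M$ of $L$ admits such a $\pi\colon M\to G$. The strategy is to maintain two layers of information in unary auxiliary relations. The first layer records, at each position of the word, the $G$-image of the prefix ending at that position; since $G$ is a group, this information can be maintained by quantifier-free formulas over unary auxiliary data exactly as in Theorem~\ref{thm:thm2}. The second layer captures, for each relevant pattern in the $\bJ^+$-kernel, whether a suitable ``group-aligned'' witness configuration occurs as a scattered factor of the word. Because $\bJ^+$-languages are characterized by the presence (never the absence) of certain subwords, these configurations are naturally expressible by a positive existential formula over the unary auxiliary relations built in the first layer. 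The membership query for $L$ is then a positive Boolean combination of these two kinds of atoms, placing $L$ in $\UDynSop$.

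For the ``only if'' direction, assume $L\in\UDynSop$ and let $M$ be its ordered syntactic monoid. The goal is to exhibit the morphism $\pi\colon M\to G$ required by the Mal'cev-product characterization. The natural candidate for $G$ is obtained by quotienting $M$ by the congruence induced on ``$\UDynProp$-maintainable'' information: by Theorem~\ref{thm:thm2}, everything maintainable by quantifier-free updates with unary auxiliary data factors through a group. The remaining task is to show that the kernel $\pi^{-1}(1_G)$ is positively $\mathcal{J}$-trivial. For this, I would use an Ehrenfeucht–Fra\"{\i}ss\'e-style game for $\exists^*$-formulas over structures with only unary predicates, comparing two words that differ only inside a single $G$-fiber: positivity of the update formulas implies that the maintained auxiliary unary predicates satisfy a monotonicity property along factor extensions, and this monotonicity is exactly the algebraic statement $x\leq 1$ inside the kernel.

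The main obstacle will be this lower bound, and more precisely the step that turns the syntactic restriction to \emph{positive} $\exists^*$-updates over \emph{unary} auxiliary data into the algebraic identity defining $\bJ^+$ inside the kernel. The delicate point is that the maintenance scheme may use auxiliary unary relations whose meaning is not obviously monotone, so one must argue via the game that any violation of the $\bJ^+$-identity in the kernel of $\pi$ produces two update sequences yielding identical auxiliary configurations up to a partial isomorphism preserved by $\exists^*$-formulas, while the target language separates them — contradicting maintainability. Matching the algebraic side (violation of $x\leq 1$ in the kernel) with the logical side (indistinguishability by positive existential formulas over unary data) is where the technical heart of the proof lies.
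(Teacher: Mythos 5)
Your plan rests on an algebraic characterization of $\bJ^{+}\ast\bG$ that is false, and this undermines both directions. You claim that $(M,\leq)\in\bJ^{+}\ast\bG$ iff there is a surjective \emph{monoid morphism} $\pi\colon M\to G$ onto a finite group whose kernel $\pi^{-1}(1_G)$ lies in $\bJ^{+}$. Counterexample: the ordered syntactic monoid of $\regex{(a+b)^{*}aa(a+b)^{*}}$ is in $\bJ^{+}\ast\bG$ (it satisfies $1\leq e$ for every idempotent $e$, and by Pin--Weil, \cref{thm:EJp}, $\bJ^{+}\ast\bG=\bE\bJ^{+}$), yet it is aperiodic, so its only group quotient is trivial; the ``kernel'' is then the whole monoid, which is not in $\bJ^{+}$ since ``contains the factor $aa$'' is not $\Sigma_{1}$-definable. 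The correct handles are either the semidirect-product definition itself (used for the upper bound: unfold the product as in \cref{eq:unfold_wreath}, maintain the group coordinate with positive quantifier-free formulas as in \cref{lem:groups_in_udynprop}, and evaluate the $\bJ^{+}$-coordinate with a $\Sigma_{1}^{+}$ formula) or the identity $\bJ^{+}\ast\bG=\bE\bJ^{+}$ (used for the lower bound). A Mal'cev-type description exists but requires \emph{relational} morphisms, not morphisms, so your reduction to a single group quotient of $M$ does not get off the ground. (You also invert the order convention: in this paper $\bJ^{+}$ means $1\leq x$, not $x\leq 1$.)

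The lower bound, which you yourself flag as the technical heart, is essentially absent and also structurally incomplete. The paper does not run a game on an arbitrary monoid violating the identity; it (i) proves one concrete language, $\regex{b^{*}}$, is not in $\UDynSop$ via a monotone variant of the Substructure Lemma (\cref{lem:extended_substructure}) combined with Higman's lemma (\cref{lemma:b_star_lower_bound}); (ii) shows via $\bE\bJ^{+}$ that $U_{1}^{-}$, the ordered syntactic monoid of $\regex{b^{*}}$, divides every ordered monoid outside $\bJ^{+}\ast\bG$ (\cref{fact:uonep_smallest}); and (iii) transfers from languages to ordered syntactic monoids using that the regular languages of $\UDynSop$ form a \emph{positive variety} together with the ordered Eilenberg theorem (\cref{thm:positive_eilenberg}, \cref{claim:language_to_ordered_monoid}). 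Your sketch omits step (iii) entirely --- maintainability of $L$ does not directly give you a program for $\Member{M,\leq}$ without the positive-variety closure argument --- and replaces steps (i)--(ii) by an unspecified Ehrenfeucht--Fra\"{\i}ss\'e argument whose conclusion is aimed at the wrong algebraic target. To repair the proof you would need to identify a concrete obstruction language, prove its non-maintainability against positive existential updates with unary auxiliary data, and propagate that obstruction through division and the positive-variety machinery.
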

Here, $\bJ^{+} \ast \bG$ denotes the wreath product of the classes $\bJ^{+}$ and $\bG$. 

In the proofs of both Theorem \ref{thm:thm2} and Theorem \ref{thm:thm3}, the ``if'' direction requires to provide update formulas for all regular languages whose (ordered) syntactic monoids satisfy some property and is standard. The ``only-if'' direction requires to prove lower bounds against $\DynProp$ (resp.  $\UDynSop$) for all non-group languages (resp. all languages whose ordered syntactic monoid is not in $\bJ^{+} \ast \bG$). To this end we prove a lower bound for a single language using variants of the Substructure lemma \cite{GeladeMS12} (see also \cite{Zeume15thesis}) and lift this lower bounds to all languages without the property using algebraic insights.

We leave open the question to find a precise algebraic characterization of languages in $\UDynSo$, i.e., languages maintainable by (not necessarily positive) $\exists^*$-formulas and unary auxiliary relations. We discuss challenges towards answering this question in Section \ref{sec:conclusion}.

\section{Preliminaries}
\label{sec:prelim}

In this section, also to fix notation, we recall basics of dynamic descriptive complexity and algebraic formal language theory. 
\subsection{Dynamic descriptive complexity}

In this paper, we are interested in the dynamic membership problem for fixed regular languages. Let $L \subseteq \Sigma^*$ be a language over some alphabet $\Sigma$. 
The input structure for the dynamic membership problem $\Member{L}$ of $L$ is an encoding of a word $w = w_1 \cdots w_{n}$ as a relational structure over the domain $\{1, \ldots, n\}$ of the positions of the word. This structure has (1) for every $\sigma \in \Sigma$ a unary relation $W_\sigma$ for storing the positions of $w$ that contain $\sigma$ and (2) a binary relation $\leq$ representing the linear order of the positions of $w$. 
Following Patnaik and Immerman as well as Gelade et al. \cite{PatnaikI97, GeladeMS12}, each $w_i$ is either in $\Sigma$ or $\epsilon$, i.e., each position $i \in [n]$ can occur in at most one relation $W_\sigma$. For the sake of simplicity, we denote both input words and their encoding by the same symbol $w$.

In the dynamic membership problem, the input structure can be changed by operations of the form $\set_\sigma(y)$, for $\sigma \in \Sigma \cup \{\epsilon\}$. A concrete change $\set_\sigma(i)$ instantiates $y$ and sets the symbol at position $i$ to $\sigma$. The dynamic membership problem $\Member{L}$ asks whether the current input structure encodes a word in $L$, i.e. whether $w_1 \cdots  w_{n} \in L$.%

We will explore the resources needed for maintaining the dynamic membership problem within the dynamic descriptive complexity framework of Patnaik and Immerman \cite{PatnaikI97}.

A \emph{dynamic program} $\prog$ stores the input structure as well as a set~$\calA$ of auxiliary relations over some (relational) schema $\auxSchema$ and over the same domain as the input word. For every auxiliary relation symbol $R \in \auxSchema$ and every change operation $\set_\sigma$, the program $\prog$ has an \emph{update formula} $\varphi_\sigma^R(\tpl x;y)$, which can access input and auxiliary relations. Whenever the current input word $w$ is changed by $\set_\sigma(i)$ to the new input word $w'$, the new auxiliary relation $R^{\aux'}$ in the updated set $\aux'$ consists of all tuples $\tpl a$ such that $\varphi_\sigma^A(\tpl a;i)$ is satisfied in the structure $(w', \aux)$.

A dynamic program $\prog$ maintains the dynamic membership problem $\Member{L}$ if it has a distinguished auxiliary bit $q$ (i.e., a $0$-ary auxiliary relation), which always indicates whether the current input word is in $L$. More precisely, following Patnaik and Immerman, we assume that the initial input word $w_0$ is empty and the initial auxiliary structure $\aux_0$ is initialized by a first-order formula depending on $w_0$. The program $\prog$ \emph{maintains} $\Member{L}$, if after changing $w_0$ to \(w'\) by any sequence of changes and subsequently applying the corresponding update formulas starting from $(w_0, \aux_0)$, the obtained bit $q$ is true if and only if $w' \in L$.

The dynamic problem $\Member{L}$ is in the class \DynFO if it can be maintained by a dynamic program with \FO update formulas and an \FO initialization formula. It is in $k$-ary \DynFO if all auxiliary relations are at most $k$-ary. For a class \(\calC\)  of formulas, it is in \(\DynC\) if all update formulas are in \(\calC\) and the initialization formula is in \(\FO\).

Throughout this paper we assume that formulas are in prenex normal form. Among the classes $\calC$ we study are the class \(\Prop\) of quantifier-free formulas; the class \(\Sigma_{1}\) of $\exists^*$-formulas  (i.e., formulas in prenex form with only existential quantifiers); and the class \(\Sigma_{2}\) of $\exists^* \forall^*$-formulas  (i.e., formulas in prenex form starting with a block of existential quantifiers followed by a block of universal quantifiers). We also consider the variants \(\Prop^{+}\), \(\Sigma_{1}^{+}\) and \(\Sigma_{2}^{+}\) where negations may only be used directly in front of the linear order symbol $\leq$. A focus will be on programs that can only use unary auxiliary relations, in which case we denote classes by \(\UDynC\) for classes \(\calC\) of update formulas.

\begin{example}\label{example:aa}
We show that the regular language \(\regex{\Sigma^{*}aa\Sigma^{*}}\) over \(\Sigma = \{a,b\}\) is in \(\UDynSo\), so, can be maintained using only unary auxiliary relations and update formulas in prenex form that only use existential quantifiers.

Our goal is to maintain a unary relation \(N\) that stores all positions $j$ such that the next letter to the right of $j$ that does not contain $\epsilon$ is an $a$.
This relation can directly be maintained using both \(\Sigma_{1}\) and \(\Pi_{1}\) update formulas. As \(\Pi_{1}\) update formulas are not allowed, we store a slightly differently defined relation $N'$ and an additional bit $c$.
If the flag $c$ is true, the relation $N'$ is equal to $N$. If $c$ is false, then $N'$ is the complement of $N'$, that is, contains all position $j$ such that either all positions to the right of $j$ are $\epsilon$ or the first positions that is not $\epsilon$ is a $b$.
Using this relation, we can obtain \(N\) as \(N(j) = (\neg c \land \neg N'(j)) \lor (c \land N'(j))\).

We now describe the update formulas for maintaining \(N'\) and $c$. We additionally use unary auxiliary relations $W^o_\sigma$ to store the old input word before the change. These are trivial to maintain.
	\begin{itemize}
	\item In case of a change of the form \(\set_{b}(i)\), we set $c = 1$ and the relation $N'$ after the update is equal to $N$. 
	A position $j$ is in $N'$ if $j \in N$ before the change and either $i \leq j$ or there is a position between $i$ and $j$ that is an $a$. The corresponding update formula for $N'$ is 
	$\varphi_b^{N'}(j;i) = \exists k \big[ N(j) \wedge \big(i \leq j \vee (j < i \wedge j < k \wedge k < i \wedge W_a(k)) \big) \big]$, where $x < y$ is an abbreviation for $x \leq y \wedge \neg(x = y)$.
	\item In case of a change of the form \(\set_{a}(i)\), we set $c = 0$ and the relation $N'$ after the update is the complement of $N$. 
	A position $j$ is in $N'$ if $j \not\in N$ before the change and either $i \leq j$ or there is a position between $i$ and $j$ that is a $b$. The corresponding update formula for $N'$ is 
	$\varphi_a^{N'}(j;i) = \exists k \big[ \neg N(j) \wedge \big(i \leq j \vee (j < i \wedge j < k \wedge k < i \wedge W_b(k)) \big) \big]$.
	\item In case of a change \(\set_{\epsilon}(i)\), the update depends on the old symbol at position~$i$. 
	If the symbol was an $a$, the relation $N'$ is equal to $N$ after the update and we set $c=1$. If the removed $a$ was the first letter to the right of $j$, then we have to check the first letter to the right of $i$. Otherwise, the change is not relevant for position $j$.
	So, if $j < i$, the next letter from \(j\) is an \(a\) if and only if \(N(j)\) and \(N(i)\) were both true before the change, or there is an $a$ between \(j \) and \(i\) and \(N(j)\) was already true.
	
	If the change replaces $b$ with $\epsilon$, the relation $N'$ is the complement of $N$ after the update and we set $c=0$. If $j < i$, the next letter from \(j\) is not an \(a\) if and only if \(N(j)\) and \(N(i)\) were both false before the change, or there is a $b$ between \(j \) and \(i\) and \(N(j)\) was false.
	The update formula is
	\begin{align*}
		& \varphi_\epsilon^{N'}(j;i) = \exists k \Big[ \\
			& W^o_a(i) \wedge \Big((i \leq j \wedge N(j)) \vee \big(j < i \wedge N(j) \wedge (N(i) \vee ( j < k \wedge k < i \wedge W_a(k)) ) \big)  \Big) \vee \\ 
		 & W^o_b(i) \wedge \Big((i \leq j \wedge \neg N(j)) \vee \big(j < i \wedge \neg N(j) \wedge (\neg N(i) \vee ( j < k \wedge k < i \wedge W_b(k)) ) \big)  \Big) \Big].
	\end{align*}
\end{itemize}
	Similarly, we can maintain a unary auxiliary relation \(P\) that contains all positions $j$ such that the next letter to the left of $j$ that does not contain $\epsilon$ is an $a$.
	
	We now describe the maintenance of the bit \(q\) that is set to \(1\) whenever \(w\) is in \(L= \Sigma^{*}aa\Sigma^{*}\).
	\begin{itemize}
		\item In case of a change \(\set_{a}(i)\), the new word \(w\) is in \(L\) if and only if \(q\) or one of \(N(i)\) and \(P(i)\) was already true before the change.
		\item In case of a change \(\set_{b}(i)\), we have \(w \in L\) if and only if there exists a position \(k> i\) with an \(a\) such that \(N(k)\) holds, or there exists a position \(k< i\) with an \(a\) such that \(P(k)\) holds.
		\item In case of a change \(\set_{\epsilon}(i)\) it holds \(w \in L\) if and only if there exists a position \(k> i\) with an \(a\) such that \(N(k)\) holds, or there exists a position \(k< i\) with an \(a\) such that \(P(k)\) holds, or both \(N(i)\) and \(P(i)\) hold.
	\end{itemize}
	These conditions can easily be described by existential update formulas. \lipicsEnd
\end{example}

During our study, we will need the following closure properties.
For a language $L$ over alphabet $\Sigma$ and a letter $\sigma \in \Sigma$, we define \(L \sigma^{-1} = \{ w \mid w\sigma\in L\}\) and \( \sigma^{-1}L = \{ w \mid \sigma w\in L\}\).
\begin{restatable}{lemma}{invmorphismquotients}
\label{lem:inv_morphism_quotients}
	Let $L$ be a language over alphabet $\Sigma$ and $\DynC \in \{\UDynProp, \UDynSop\}$. 
	If $\Member{L}$ can be maintained in $\DynC$, then so can
	\begin{enumerate}
		\item $\Member{h^{-1}(L)}$ %
		for any mapping $h : \Gamma \to \Sigma^*$ and any alphabet $\Gamma$, %
		\item $\Member{L \sigma^{-1}}$ and $\Member{\sigma^{-1}L}$, for any $\sigma \in \Sigma$.
	\end{enumerate}
\end{restatable}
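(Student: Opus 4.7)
The plan is to prove both items by direct simulation of $\prog_L$, leveraging that $\Prop$ and $\Sigma_1^+$ are preserved under syntactic substitution, so that composing a bounded number of update formulas keeps them in the same class.

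For item (1), we simulate $\prog_L$ on the virtual $\Sigma$-word $h(u)$. Setting $K = \max_{\gamma\in\Gamma}|h(\gamma)|$ and writing $h(\gamma) = h_1(\gamma)\cdots h_K(\gamma)$ with trailing $\epsilon$-padding, we split every real position $i$ of the $\Gamma$-word $u$ into $K$ virtual positions $(i,1),\ldots,(i,K)$. For every unary auxiliary relation $R_k$ of $\prog_L$ we keep $K$ unary relations $R_k^1,\ldots,R_k^K$ over the domain of $u$ with the intended meaning $R_k^j(i)\Leftrightarrow R_k((i,j))$. A single change $\set_\gamma(i)$ on $u$ corresponds to $K$ consecutive changes $\set_{h_j(\gamma)}((i,j))$ on $h(u)$, which we simulate by composing $K$ copies of $\prog_L$'s update formulas and translating every atom syntactically: $R_{k'}((i',j'))$ becomes $R_{k'}^{j'}(i')$; $W_\sigma((i',j'))$ becomes $\bigvee_{\gamma\colon h_{j'}(\gamma)=\sigma} W^{\Gamma}_\gamma(i')$; the lexicographic order on virtual positions is expressed using $\leq$ on first coordinates combined with constants on second coordinates; and in the $\Sigma_1^+$ case an existential over virtual positions is rewritten as a disjunction of $K$ existentials over real positions, one per slot.

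For item (2), we simulate $\prog_L$ on the virtual word $w\sigma$ (respectively $\sigma w$). Beyond the unary relations $R_1,\ldots,R_\ell$ restricted to real positions, we store one $0$-ary bit $b_k$ per $R_k$ representing its value at the virtual position $*$. The update formula for $R_k$ at a real position $x$ is obtained from $\prog_L$'s formula by splitting every quantifier over positions of the virtual word into a \emph{real-position} case and a \emph{virtual-position} case; in the latter case, atoms involving $*$ are replaced by constants ($W_{\sigma'}(*)$ by $[\sigma'=\sigma]$, $z\leq *$ by $\top$, $*\leq z$ by $[z=*]$) and $R_{k'}(*)$ by $b_{k'}$. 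The bits $b_k$ are updated by evaluating $\prog_L$'s formula with $x=*$ under the same translation. Initialization is done in $\FO$ by applying $\prog_L$'s initialization on the empty word and then one virtual update inserting $\sigma$ at $*$.

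The main, and essentially only, technical point is that these translations stay inside the target class. For $\Prop$ this is immediate, since they are syntactic substitutions that preserve quantifier-freeness. For $\Sigma_1^+$, all replacements occur inside positive contexts; the newly introduced existentials (from the virtual-slot case splits in item (1), the virtual-position case splits in item (2), and the composition of $K=O(1)$ update formulas) can always be pulled to the front and merged into a single existential block, yielding again a $\Sigma_1^+$-formula. Order atoms such as $i_1<i_2$ use only negations directly in front of $\leq$, which is permitted in the positive fragment.
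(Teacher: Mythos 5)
Your overall strategy coincides with the paper's: both items are proved by simulating the given program on a virtual word ($h(u)$, respectively $w\sigma$ or $\sigma w$), splitting each real position into a constant number of slots, keeping one copy of every unary auxiliary relation per slot, composing a constant number of update formulas, and verifying that the syntactic translation (case-splitting quantifiers over slots, rewriting order atoms slot-wise, handling the boundary position via constants and $0$-ary bits) stays inside $\Prop$ resp.\ $\Sigma_1^+$. Your treatment of item (2) is sound: only a single change of the virtual word has to be simulated per real change, so no composition and no intermediate states arise.

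There is, however, one step in item (1) that fails as written. You translate an input atom $W_\sigma((i',j'))$ into a direct read $\bigvee_{\gamma\colon h_{j'}(\gamma)=\sigma} W^{\Gamma}_\gamma(i')$ of the \emph{current} $\Gamma$-word, and you use this same translation in all $K$ composed stages. But at stage $j<K$ the simulated virtual word has only the first $j$ slots of the changed position $i$ updated; slots $j+1,\dots,K$ of position $i$ must still carry the letters of $h(\gamma_{\mathrm{old}})$, whereas your direct read returns the letters of $h(\gamma_{\mathrm{new}})$ for every slot. The intermediate stages therefore evaluate the original update formulas on a virtual word that does not result from any legal change sequence, so the correctness guarantee of the simulated program does not apply and the composed auxiliary relations may be wrong; the error then propagates through the remaining stages. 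The fix is exactly what the paper does: store the virtual word itself in auxiliary relations $Q_{\sigma,\ell}$ that are updated stage by stage (equivalently, keep shadow copies $W^o_\gamma$ of the input lagging one step behind and, at stage $j$, read slots $j'\le j$ from the new input and slots $j'>j$ from the old one). With that amendment your argument goes through and is essentially the paper's proof.
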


\subsection{Monoids}
For \(\DynFO\) and many of its natural restrictions that possess favourable closure properties, we follow Skovberg Frandsen, Miltersen, and Skyum~\cite{Frandsen_dynamic_1997} in studying dynamic problems for monoids rather than languages. A \emph{monoid} $(M, \cdot)$ is a set $M$ equipped with a binary operation~$\cdot$ that is associative and has an identity element, i.e., the operation satisfies \((x\cdot y)\cdot z=x\cdot (y\cdot z)\) for all elements \(x,y,z\in M \) and there is an element \(1 \in M\) such that \(1\cdot x = x\cdot 1=x\) for all \(x\in M\).
A monoid is a \emph{group} if every element \(x\) has an inverse, i.e.~an element \(y\) such that \(x\cdot y = y\cdot x = 1\).
An element \(e\) is said to be \emph{idempotent} if \(e\cdot e = e\).
For an integer \(n>0\) and \(x\in M\), the \(n\)-fold product of \(x\) is denoted by \(x^{n}\).
It is well known that every element \(x\in M\) has a unique power that is idempotent~\cite[Prop. II.6.31]{pin_book_2014} called \emph{idempotent power} of \(x\) and denoted by \(x^{\omega}\).

A \emph{morphism} \(\varphi: M \rightarrow N\) between two monoids is a mapping such that \(\varphi(1) = 1\) and \(\varphi(x\cdot y) = \varphi(x) \cdot \varphi(y)\) for every \(x,y\in M\).
A \emph{submonoid} \(N\) of a monoid \(M\) is a subset of \(M\) that is a monoid, i.e., it contains $1$ and is closed under $\cdot$, meaning \(x\cdot y\in N\) for all \(x,y\in N\).
A monoid \(N\) is a \emph{quotient} of \(M\) if there exists a surjective morphism from \(M\) to \(N\).
We say that \(N\) \emph{divides} \(M\) if \(N\) is the quotient of a submonoid of \(M\).
A language \(L\) is \emph{recognized} by a monoid \(M\) if there is a morphism \(\varphi:\Sigma^{*} \rightarrow M\) and \(P\subseteq M\) such that \(L= \varphi^{-1}(P)\).
The regular languages are precisely those recognized by a finite monoid (see for instance~\cite[Theorem IV.3.21]{pin_book_2014}).

For a monoid \(M\), we consider dynamic programs over the alphabet \(M\) and define the problem of evaluating a sequence of elements in \(M\).
We say that a dynamic program \(\Pi\) maintains the dynamic membership problem \(\Member{M}\) if
there are \(|M|\) dedicated auxiliary bits that respectively store for every \(x\in M\) whether the maintained word evaluates to \(x\).
Similarly the dynamic problem \(\StrictPrefixPb\) (resp. \(\StrictSuffixPb\)) is maintained via one dedicated unary relation for each \(x\in M\) that stores the value \(1\leq i\leq n\) if the prefix
\(w_{1} \cdot \ldots \cdot w_{i-1}\)  (resp. suffix $w_{i+1} \cdot \ldots \cdot w_n$) evaluates to \(x\).
In this setting, the symbol \(\epsilon\) and the neutral element \(1\) of \(M\) can be used interchangeably.
For instance, the initial word \(w_{0}\) can either be the empty word or \(1^{n}\).

An important monoid is the monoid of functions \(f: Q \rightarrow Q\) for a set \(Q\).
The operation is the composition and the neutral element is the identity function.
Given a deterministic finite state automaton with transition function \(\delta:\Sigma^{*} \rightarrow (Q\rightarrow Q)\), its \emph{transition monoid} is the image \(\delta(\Sigma^{*})\).
The \emph{syntactic monoid} of a regular language is the transition monoid of its minimal automaton.
It is the smallest monoid that recognizes the language and reflects many properties of its regular language.
Indeed, the syntactic monoid possesses more information than its regular language, as exposed by the following claim.
A class of formulas \(\calC\) is closed under \(\lor\) (resp.\ \(\land\), resp.\ \(\neg\)) if for every \(\varphi,\psi\in \calC\), we also have \(\varphi \lor \psi\) (resp.\ \(\varphi\land \psi\), resp.\ \(\neg\varphi\)) in \(\calC\).
All classes under inspection in this paper possesses this property.

\begin{restatable}{fact}{monoidtolanguage}
  \label{claim:monoid_to_language}
  Let \(\calC\) be a class of formulas closed under \(\lor\).
  Let \(L\) be a regular language with syntactic monoid \(M\) and $k$ a number, then
  \[\Member{M} \in k\textit{-ary } \DynC \Rightarrow \Member{L} \in k\textit{-ary } \DynC.\]
\end{restatable}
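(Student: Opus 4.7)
The plan is to simulate a dynamic program for \(\Member{M}\) on the \(M\)-word obtained by applying, position by position, the morphism witnessing that \(M\) recognises \(L\). Since \(M\) is the syntactic monoid of \(L\), there exist a monoid morphism \(\varphi: \Sigma^* \to M\) and a subset \(P \subseteq M\) such that \(L = \varphi^{-1}(P)\). Let \(\prog_M\) be a \(k\)-ary \(\DynC\) program maintaining \(\Member{M}\), equipped with distinguished bits \((q_x)_{x \in M}\) that track the monoid value of the current word.

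I would build a dynamic program \(\prog_L\) reusing the auxiliary schema of \(\prog_M\) plus a new distinguished bit \(q\). For each \(\sigma \in \Sigma\) and each auxiliary relation symbol \(R\) of \(\prog_M\), the update formula \(\psi_\sigma^R\) of \(\prog_L\) for the change \(\set_\sigma(i)\) is obtained from the update formula \(\varphi_{\varphi(\sigma)}^R\) of \(\prog_M\) by substituting every input atom \(W_x(y)\) by the disjunction \(\bigvee_{\sigma' \in \Sigma,\ \varphi(\sigma') = x} W_{\sigma'}(y)\). By the convention from the preliminaries that \(\epsilon\) and the identity \(1\) are interchangeable, positions outside every \(W_{\sigma'}\) in the \(\Member{L}\)-input correspond to \(1\)-positions in the simulated \(M\)-word, so the simulation faithfully reproduces the \(M\)-word \(\varphi(w_1) \cdots \varphi(w_n)\). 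Since \(\calC\) is closed under \(\lor\) and the substitution touches only atomic subformulas inside the quantifier-free matrix of a prenex formula, \(\psi_\sigma^R\) remains in \(\calC\); moreover the auxiliary arity stays bounded by \(k\).

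The new bit \(q\) of \(\prog_L\) is updated, on a change \(\set_\sigma(i)\), by \(\bigvee_{x \in P} \psi_\sigma^{q_x}\), which is again a \(\calC\)-formula by closure under \(\lor\). Its correctness is immediate: after any sequence of changes leading to a word \(w\), the bits \((q_x)\) hold the \(M\)-value of the simulated word, which is exactly \(\varphi(w)\), so \(q\) is true iff \(\varphi(w) \in P\), i.e.\ iff \(w \in L\). The \(\FO\) initialisation formulas of \(\prog_L\) are obtained from those of \(\prog_M\) by the same atomic substitution, which remains in \(\FO\).

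The only delicate step is the substitution of input atoms, which is precisely where the hypothesis that \(\calC\) is closed under \(\lor\) is used: without it, inserting the disjunction \(\bigvee_{\sigma' : \varphi(\sigma') = x} W_{\sigma'}(y)\) at atom positions could leave \(\calC\). Given the hypothesis, the transformation is syntactically immediate, and no further obstacle arises.
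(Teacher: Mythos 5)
Your proposal is correct and follows essentially the same route as the paper's proof: simulate the program for \(\Member{M}\) on the image of the word under the recognizing morphism and update the answer bit with the disjunction, over \(x \in P\), of the update formulas for the bits \(q_{x}\), invoking closure under \(\lor\). You merely spell out the alphabet translation (replacing each atom \(W_{x}\) by the disjunction of the \(W_{\sigma'}\) with \(\varphi(\sigma') = x\), and handling \(\epsilon\) via the identity convention) more explicitly than the paper does.
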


We will see in~\cref{sec:udynprop} and~\cref{sec:udynsop} that when the class of formulas under consideration is well-behaved, the converse of this claim is also true.

We conclude this section by stating that maintainability of \(\Member{M}\) for a monoid implies maintainability of \(\Member{N}\) for every \(N\) that divides \(M\).

\begin{restatable}{fact}{closuredivision}
  \label{fact:closure_division}
  Let \(\calC\) be a class of formulas closed under \(\lor\) and $k$ any number. Let \(M\) be a monoid such that \(\Member{M}\in k\textit{-ary }\DynC\).
  Then for every \(N\) that divides \(M\) we have \(\Member{N}\in k\textit{-ary }\DynC\).
\end{restatable}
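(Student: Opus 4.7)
The plan is to reduce the dynamic membership problem for \(N\) to the one for \(M\) by simulating each \(N\)-symbol using a fixed preimage in \(M\). Since \(N\) divides \(M\), by definition there exist a submonoid \(M' \le M\) and a surjective morphism \(\pi : M' \to N\). For every \(n \in N\) pick a representative \(m_{n} \in M'\) with \(\pi(m_{n}) = n\); we additionally require \(m_{1_{N}} = 1_{M}\), which is possible because \(\pi\) is a monoid morphism and hence \(1_{M} \in M'\) is sent to \(1_{N}\). The change operation \(\set_{\epsilon}\) of the \(N\)-problem corresponds to setting a position to the identity, so it will be translated into the operation \(\set_{1_M}\) of the \(M\)-problem.

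Let \(\Pi_{M}\) be the dynamic program witnessing \(\Member{M} \in k\textit{-ary } \DynC\), with \(k\)-ary auxiliary relations over schema \(\auxSchema\) and, in particular, distinguished bits \((q_{m})_{m \in M}\). I define a dynamic program \(\Pi_{N}\) for \(\Member{N}\) that uses the same auxiliary schema \(\auxSchema\). For every \(n \in N\) and every \(k\)-ary relation \(R \in \auxSchema\), the update formula of \(\Pi_{N}\) upon a change \(\set_{n}(y)\) is simply the update formula \(\varphi_{m_{n}}^{R}(\tpl x; y)\) of \(\Pi_{M}\). The initialization formula of \(\Pi_{N}\) is the initialization formula of \(\Pi_{M}\): since initially all positions encode the identity, both encodings agree. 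The distinguished bit \(q_{n}^{N}\) indicating that the current \(N\)-word evaluates to \(n\) is maintained as the auxiliary bit
\[
  q_{n}^{N} \;\df\; \bigvee_{\substack{m \in M' \\ \pi(m) = n}} q_{m},
\]
which lies in \(\calC\) because \(\calC\) is closed under \(\lor\) and the disjunction is finite (\(M\) is finite since \(\Member{M}\) would otherwise not be well-defined over a finite alphabet).

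Correctness is immediate by induction on the change sequence. If the current \(N\)-word is \(v_{1} \cdots v_{n}\), then \(\Pi_{N}\) simulates the execution of \(\Pi_{M}\) on the \(M\)-word \(m_{v_{1}} \cdots m_{v_{n}}\), which lies in \(M'\) because \(M'\) is a submonoid. Since \(\pi\) is a morphism,
\[
  \pi(m_{v_{1}} \cdots m_{v_{n}}) \;=\; \pi(m_{v_{1}}) \cdots \pi(m_{v_{n}}) \;=\; v_{1} \cdots v_{n},
\]
so the evaluation of the \(N\)-word is exactly \(n\) iff the underlying \(M\)-word evaluates to some \(m \in M'\) with \(\pi(m) = n\), matching the definition of \(q_{n}^{N}\). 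Update formulas of \(\Pi_{N}\) are taken verbatim from \(\Pi_{M}\) and hence remain in \(\calC\), the arity stays \(k\), and the initialization formula remains in \(\FO\). Hence \(\Member{N} \in k\textit{-ary } \DynC\).

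\textbf{Main obstacle.} There is no serious technical obstacle; the only point requiring a little care is to ensure the translation is consistent on the identity element (so that \(\set_{\epsilon}\) of the \(N\)-program is simulated by \(\set_{\epsilon}\), equivalently \(\set_{1_{M}}\), of the \(M\)-program, keeping the simulated word inside \(M'\) at all times) and to notice that \emph{closure under \(\lor\) is exactly what is needed} to collapse the \(M'\)-bits into \(N\)-bits via the fibres of \(\pi\).
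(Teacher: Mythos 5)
Your proof is correct and takes essentially the same approach as the paper: choose a representative in the recognizing submonoid for each element of \(N\), run the program for \(\Member{M}\) verbatim on the word of representatives, and collapse the answer bits along the fibres of the morphism using closure under \(\lor\). The only cosmetic difference is that the paper splits division into a submonoid step and a quotient step, whereas you handle the composite in one pass; also note that, to stay within the formal framework, the bit \(q_{n}^{N}\) should be updated by the disjunction of the \emph{update formulas} for the relevant \(q_{m}\) rather than being defined as a disjunction of the bits themselves, exactly as the paper phrases it.
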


\section{\texorpdfstring{The regular languages of \(\UDynSt\)}{The regular languages of UDynSigma-2}}
\label{sec:udynst}
In this section we prove that every regular language can be maintained by $\Sigma_2$-formulas and only unary relations. That is, we prove the following.

\begin{theorem}[Restatement of \cref{thm:thm1}]
  \label{thm:reg_in_udynst}$\;$ \\
  For any regular language \(L\): $\;\;$ \(\Member{L}\) is in \(\UDynSt\).
\end{theorem}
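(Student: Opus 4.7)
The plan is to give an algebraic reformulation of Hesse's construction using Green's relations on the syntactic monoid, which simultaneously sharpens Hesse's bound from $\UDynFO$ to $\UDynSt$. By \cref{claim:monoid_to_language}, which applies since the class of $\Sigma_{2}$-formulas is closed under disjunction, it suffices to maintain $\Member{M}$ for the finite syntactic monoid $M$ of $L$ with morphism $\varphi : \Sigma^* \to M$. The natural unary data to maintain is, for each $x \in M$, a relation $P_x$ with $P_x(i)$ iff $\varphi(w_1 \cdots w_{i-1}) = x$, and symmetrically a relation $S_x$ for the suffix $w_{i+1} \cdots w_n$. The query bit $q$ is then computed by a quantifier-free Boolean combination using the constant-size multiplication table of $M$, and initialization is immediate since the initial word is empty.

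The core task is designing $\Sigma_{2}$ update formulas for $P_x$ and $S_x$ on a change $\set_\sigma(i)$; only $P_x(j)$ for $j > i$ and $S_x(j)$ for $j < i$ need updating, and I focus on the prefix case. Writing $a = \varphi(w_1 \cdots w_{i-1})$, readable from the old $P_\cdot$ data at position $i$, one has $P'_x(j)$ iff there exists $b \in M$ with $a \cdot \sigma \cdot b = x$ and $b = \varphi(w_{i+1} \cdots w_{j-1})$. The existential over the finite set $M$ is a finite disjunction; the genuine difficulty is certifying the identity $b = \varphi(w_{i+1} \cdots w_{j-1})$ from unary auxiliary data, because in a non-group monoid the missing middle factor is not determined by the values at the endpoints alone.

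My plan is to augment the unary data with markers for the $\mathcal{J}$-, $\mathcal{R}$-, and $\mathcal{L}$-classes of the prefix and suffix value at each position. Since the $\mathcal{J}$-class of $\varphi(w_1 \cdots w_k)$ is monotonically non-increasing in the $\mathcal{J}$-order along $k$, the entire word exhibits at most $|M|$ many $\mathcal{J}$-class transitions. The existential block then guesses a bounded number (depending only on $|M|$) of checkpoint positions inside the segment between $i$ and $j$, marking the $\mathcal{J}$-class transitions together with enough $\mathcal{H}$-class data to pin down boundary values. The universal block verifies, position by position, that the old unary data at every intermediate $k$ is consistent with this factorization; within a single $\mathcal{J}$-class, $\mathcal{H}$-classes containing an idempotent are groups, so the local arithmetic required to check consistency reduces to group multiplications determined by the guessed checkpoint data.

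The main obstacle is to assemble these pieces into a single uniform $\Sigma_{2}$-formula whose existential block uses only a constant (depending on $|M|$, not on $n$) number of variables and whose universal body is quantifier-free over the unary relations. This is where Green's relations yield their payoff: they reduce non-local recovery of a missing monoid factor to a finite combination of group multiplications inside bounded-size structure, each of which is expressible as a Boolean check over the unary markers. The symmetric treatment of $S_x$ and the trivial cases $j \leq i$ for prefixes and $j \geq i$ for suffixes then complete the construction.
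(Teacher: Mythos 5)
Your high-level strategy matches the paper's: reduce to \(\Member{M}\) for the syntactic monoid via \cref{claim:monoid_to_language}, exploit that the \(\greenJ\)-class can only drop a bounded number of times along a left-to-right evaluation, guess boundedly many checkpoint positions existentially and verify them universally, and use Green's lemma inside a single \(\greenJ\)-class to pin down exact values. However, there is a genuine gap in your choice of auxiliary data, and it sits exactly at the point you yourself identify as the crux. You propose to store the \emph{global} prefix and suffix evaluations \(P_x, S_x\) together with markers for the \(\greenJ\)-, \(\greenR\)- and \(\greenL\)-classes of these global values. This information does not determine, or even usefully constrain, the evaluation of an infix \(w_{i+1}\cdots w_{j-1}\). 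Concretely, if \(M\) has a zero \(0\) and the word begins with a letter mapping to \(0\), then \(P_x(k)\) records \(0\) for every \(k\ge 2\) and all your class markers are constant from position \(2\) onward, while the infix between positions \(2\) and \(n\) can evaluate to anything; after a change at position \(1\) the new prefix values depend on exactly these infix evaluations, which are not first-order (let alone \(\Sigma_2\)) computable from the letters alone for non-aperiodic \(M\). Your universal verification step (``the old unary data at every intermediate \(k\) is consistent with this factorization'') therefore has nothing to verify against: the \(\greenJ\)-class transitions that matter are those of the sequence \(\varphi(w_{i+1}\cdots w_k)\), which starts at the (variable) change position, not those of the global prefix sequence.

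The paper's fix is to make the unary data \emph{relative to the position it is attached to}: for each position \(j\), each \(\greenJ\)-class \(J\) and each \(y\in M\), the relation \(R_{\geq J,y,x}\) records the evaluation \(x\) of the longest segment starting at \(j+1\) whose \(y\)-shifted evaluation stays \(\greenJ\)-above \(J\) (and symmetrically \(L_{\geq J,y,x}\) for segments ending at \(j-1\), plus closed-endpoint variants). This is still unary because the second endpoint is determined by the maximality condition, and it is precisely what lets the \(\Sigma_2\) formula reconstruct the \(\greenJ\)-chain decomposition of an arbitrary infix and then recover the exact element via the \(\greenH\)-class bijections \(y\mapsto yz\) of Green's lemma (which, incidentally, is the tool needed here rather than the group structure of idempotent \(\greenH\)-classes). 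Without relations of this position-relative kind, the plan cannot be completed.
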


The proof is algebraic by nature, hence we show that \(\Member{M}\) for a syntactic monoid \(M\) can be maintained and then apply~\cref{claim:monoid_to_language}.
The approach towards proving Theorem~\ref{thm:reg_in_udynst} is as in Hesse~\cite{hesse_phd_2003}: We want to exhibit maintainable unary relations that allow to compute the evaluation of every infix of a word from \(M^{*}\). This is formalized in the following lemma.

\begin{restatable}{lemma}{strelationsexists}
  \label{claim:st_relations_exists}
  Let $M$ be a finite monoid. There is a dynamic program $\prog$ that maintains unary auxiliary relations over a schema $\schema_M$ such that
  \begin{enumerate}[i)]
  	\item $\prog$ only uses $\Sigma_2$ update formulas, and
    \item for every \(x\in M\) there is a $\Sigma_2$ formula \(\psi_{x}(j,k)\) over the schema of $\prog$ that is satisfied if and only if the infix \(w_{j+1}\cdots w_{k-1}\) of the input word $w$ evaluates to \(x\).
  \end{enumerate}
\end{restatable}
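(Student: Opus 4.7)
The plan is to maintain, for each element $x \in M$, unary auxiliary relations encoding the evaluation of prefixes and suffixes of the input word, and to derive the infix formulas $\psi_x(j,k)$ from these together with additional unary relations that record Green's class information. Specifically, I would introduce relations $P_x$ and $S_x$ intended to satisfy $P_x(i) \iff w_1 \cdots w_i = x$ and $S_x(i) \iff w_i \cdots w_n = x$, as well as unary relations recording the $\mathcal{R}$-class of the prefix ending at position $i$ and the $\mathcal{L}$-class of the suffix starting from position $i$.

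The central algebraic observation is that as $i$ grows, the $\mathcal{R}$-class of $p_i = w_1 \cdots w_i$ is monotonically non-increasing in the $\mathcal{R}$-order and therefore strictly decreases at most $|M|$ times along the word. Between two consecutive $\mathcal{R}$-drops, successive right multiplications preserve the current $\mathcal{R}$-class, and by the standard theory of Green's relations on finite monoids they act as permutations of the $\mathcal{H}$-classes inside this $\mathcal{R}$-class, i.e.\ as elements of a finite group (the Schützenberger group of the class). A symmetric statement holds for suffixes and $\mathcal{L}$-classes. This structure reduces chained multiplications to a bounded number of ``jumps'' interleaved with group actions.

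To update $P_x(j)$ after a change $\set_\sigma(i)$, the case $j < i$ is trivial. For $j \geq i$ the new value $p_j^{\mathrm{new}}$ equals $p_{i-1}^{\mathrm{old}} \cdot \sigma \cdot (w_{i+1} \cdots w_j)$, and by the observation above, computing it reduces to identifying (i) the at most $|M|$ new $\mathcal{R}$-drop positions falling in $[i, j]$ and (ii) the group element traversed between consecutive drops. Both pieces of information can be captured in $\Sigma_2$: existentially guess the drop positions inside $[i,j]$ together with the corresponding $\mathcal{R}$-classes and traversed group elements, then universally verify, for every intermediate position, that the claimed drops really drop, the claimed non-drops really do not, and that the group action is locally consistent with the input letter at that position and the maintained relations. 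A symmetric construction updates $S_x$ and the Green's-class relations. The infix query $\psi_x(j,k)$ is built with exactly the same $\exists^*\forall^*$ pattern, existentially guessing the $\mathcal{R}$-drop positions of the infix together with the traversed group elements and universally verifying consistency against $P_\ast$, $S_\ast$, the Green's-class relations, and the letter relations $W_\sigma$.

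The main obstacle will be arranging the verification so that it is truly a single $\forall^*$-block rather than alternating, i.e.\ so that once the existential witnesses have fixed a skeleton of drop positions and traversed group elements, each remaining condition collapses to a quantifier-free check at a single universally quantified position. This locality is precisely what Green's relations deliver: inside one $\mathcal{R}$-class the dynamics are governed by a finite group, so the correctness of a guessed transition becomes a pointwise, quantifier-free condition on the unary auxiliary data at the universally quantified position and at the $\bigO(1)$ existentially guessed neighbouring drop positions, which comfortably fits inside the $\forall^*$ block. The bookkeeping at the boundary positions themselves (where drops occur) is finite per word and so can be absorbed into the existential guess together with a quantifier-free consistency test.
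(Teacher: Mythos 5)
Your high-level strategy is the same as the paper's: exploit that Green's classes of partial evaluations can only drop a bounded number of times, existentially guess the drop positions, universally verify the non-drops, and use Green's lemma (the group structure inside a class) to pin down the exact element. However, there is a genuine gap in the choice of auxiliary relations. You propose to maintain prefix and suffix evaluations $P_x, S_x$ anchored at the \emph{ends} of the word, together with the Green's classes of these prefixes/suffixes. This is not enough to make the universal verification quantifier-free. The drop positions of the infix evaluation $w_{j+1}, w_{j+1}w_{j+2}, \ldots$ are in general a strict superset of the drop positions of the global prefix evaluation restricted to $[j+1,k-1]$ (the prefix may already have fallen to a lower $\greenJ$-class before position $j$, so it stops dropping where the infix still does). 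Consequently, checking at a universally quantified intermediate position $i$ that ``the infix does not drop at $i$'' requires information about the evaluation of $w_{j+1}\cdots w_{i-1}$ --- exactly the quantity being computed --- and this is not recoverable from $P_x(i)$, $S_x(i)$ and the Green's class of $w_1\cdots w_i$, since monoids are not cancellative. Your appeal to the Schützenberger group only applies \emph{within} a single $\greenR$-class of the infix evaluation, and the segments between infix drops depend on the starting position $j+1$, of which there are $n$ many; a fixed family of unary relations anchored at position $1$ cannot parameterize over them.

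The paper's key device, which your proposal is missing, is to parameterize the auxiliary relations not by the starting position but by a $\greenJ$-class threshold: for each $\greenJ$-class $J$ and each seed $y\in M$ it maintains $R_{\geq J,y}(j)$, the evaluation of the maximal segment $w[j+1,k]$ with $y\cdot w[j+1,k]\geq_{\greenJ}J$ (and the symmetric $L_{\geq J,y}$). Since there are only finitely many pairs $(J,y)$, these are finitely many unary relations, yet they record segment evaluations anchored at \emph{every} position. The non-drop check at a universally quantified position $i$ then becomes the pointwise condition $L_{\geq \greenJ(x_{s+1})}(i)\cdot w_i\ \greenJ\ x_{s+1}$, and the nontrivial content of the proof is showing (via Green's lemma, \cref{claim:green_lemma_I}) that this local condition correctly certifies that the actual infix does not drop at $i$. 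Likewise, the exact element inside the identified $\greenH$-class is recovered through the bijection $m_z$ applied to $R_{\geq J}(j)$ and $\bar R_{\geq J,t}(k)$, not through a running product of ``traversed group elements'', which would itself be a chained evaluation and not a pointwise check. Without introducing relations of this kind, the step you describe as ``each remaining condition collapses to a quantifier-free check at a single universally quantified position'' does not go through.
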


The definition of the auxiliary relations and the proof of the lemma  relies on the local theory of monoids, as developed by Green~\cite{green_relations_51}. We first show that
Theorem~\ref{thm:reg_in_udynst} follows.

\begin{proof}[Proof of~\cref{thm:reg_in_udynst}]
    We prove that \(\Member{M}\) is in \(\UDynSt\) for the syntactic monoid of a regular language. As $\UDynSt$ is closed under disjunction, this implies the result due to~\cref{claim:monoid_to_language}.  Towards this end, let $\prog$ be the dynamic program and $\psi_x$ the formulas expressing the evaluation of infixes of the input word, as given by~\cref{claim:st_relations_exists}.
	Note that the formulas $\psi_x$ give the evaluation of the input word ahead of a change. To maintain $\Member{M}$, we need the evaluation after a change.  Suppose a change $\set_{\sigma}(i)$ occurs. The formula 
	  \begin{align*}
		    \psi_{x}^{\sigma}(j,k;i) & = \Big[(j<i<k) \land  \Big( \bigvee_{\stackrel{y,z\in M}{y \sigma z = x}} \psi_{y}(j,i) \land \psi_{z}(i,k) \Big)\Big]  \lor \Big[(i\leq j \lor k\leq i) \land \psi_{x}(j,k)\Big]
		  \end{align*}
	then expresses that the infix from position $j+1$ to position $k-1$ of the changed input word evaluates to $x$. As the formulas $\psi_x, \psi_y, \psi_z$ are not in the scope of a negation and no further quantifiers are used, the formula is also in $\Sigma_2$.
	
	Now, the formula
	\[ \varphi(i) = \exists j,k \Big[ \min(j) \land \max(k) \land  \Big( \bigvee_{\stackrel{v,y,z\in M}{yvz = x}} W_y(j) \land W_z(k) \land \psi_{v}^\sigma(j,k;i) \Big)\Big] ,\]
	with $\min(j) = \forall i ( i\leq j \rightarrow i=j)$ and  $\max(k) = \forall i ( i\geq k \rightarrow i=k)$ %
	expresses that \(w\) evaluates to \(x\) after the change on position \(i\), for any fixed \(x\in M\).
    This formula still has only one alternation of quantifiers and belongs to \(\Sigma_{2}\).
    Together with the update formulas of $\prog$, we have a dynamic $\DynSt$ program for $\Member{M}$.
  \end{proof}

In the remainder of this section, we recall Green's relations and use them to define the unary relations contained in the schema $\schema_M$ of the program $\Pi$ from  \cref{claim:st_relations_exists}. The proof that these relations satisfy conditions (i) and (ii) from \cref{claim:st_relations_exists} is delegated to the full version of this paper.%

\subparagraph*{Green's relations} The local theory of finite monoids studies the structure of monoids by considering properties of sets of elements that behave the same way.
It is based on the so-called Green's relations, which are a generalization of the notion of divisibility for any monoid.
Over the integers, the divisibility relation is a partial order, whereas Green's relations for arbitrary monoids are preorders, that is, reflexive and transitive relations.
Fix a monoid \(M\).
Its \emph{Green's preorders} are defined by, for \(x,y\in M\):
\begin{itemize}
  \item \(x\leq_{\greenR}y\) whenever there exists \(\alpha\in M\) such that \(x=y\alpha\),
  \item \(x\leq_{\greenL}y\) whenever there exists \(\alpha\in M\) such that \(x=\alpha y\),
  \item \(x\leq_{\greenJ}y\) whenever there exists \(\alpha,\beta\in M\) such that \(x=\alpha y\beta\),
  \item \(x\leq_{\greenH}y\) whenever \(x\leq_{\greenR}y\) and \(x\leq_{\greenL}y\).
\end{itemize}

The \emph{Green's classes} \(\greenR\), \(\greenL\), \(\greenJ\) and \(\greenH\) of \(M\) are the equivalence classes of the preorders.
For instance, two elements \(x,y\in M\) are equivalent for \(\greenR\), denoted by \(x\greenR y\), if and only if \(x\leq_{\greenR}y\) and \(y\leq_{\greenR}x\).
It follows from the definitions that \(\greenH \subseteq \greenR \subseteq \greenJ\) and \(\greenH \subseteq \greenL \subseteq \greenJ\).
Note that for any green relation \(\greenK\) among the four, \(\leq_{\greenK}\) induces a partial order on the set of \(\greenK\)-classes.
The order \(\leq_{\greenJ}\) is of particular interest: assume there is an element \(x\) in some \(\greenJ\)-class \(J\).
Then, for any \(y,z\in M\), \(yxz\) belongs to a \(\greenJ\)-class \(I\) with \(I\leq_{\greenJ}J\).
Therefore, if a factor of a word \(w_{1}\cdots w_{n}\in M^{*}\) evaluates to some element of a  \(\greenJ\)-class $J$, the whole word evaluates to an element from a class $J'$ with $J' \leq_{\greenJ} J$.
There is an analogous statement for \(\greenR\)-classes and \(\greenL\)-classes that we will use repeatedly throughout the paper:
if an element \(x\) is a prefix of another element \(y\), that is there exists \(\alpha\) such that \(x\alpha=y\), then we have \(y\leq_{\greenR} x\).
Likewise, if \(x\) is a suffix of \(y\), so there exists \(\alpha\) such that \(\alpha x=y\), then \(y\leq_{\greenL}x\).

\begin{example}
  The syntactic monoid of \(\regex{(aa)^{\ast}}\) is \(\mathbb{Z}_{2}\), the cyclic group with two elements \(1\) and \(a\) such that \(a\cdot a = 1\).
  Both elements are \(\greenH\)-equivalent (and therefore also \(\greenR\)-, \(\greenL\)-, and \(\greenJ\)-equivalent).
  In fact, in groups, all elements are always \(\greenH\)-equivalent.
\end{example}

\begin{example}
  The syntactic monoid of \(\regex{(a+b)^{*}a(a+b)^{*}}\) is usually denoted by \(U_{1}\).
  It has two elements \(1\) and \(a\) such that \(a\cdot x = x\cdot a = a\) for every element \(x\).
  The two elements are in distinct classes for all of Green's relations, and \(1 \geq_{\greenJ} a\).
\end{example}

\begin{example}
  The syntactic monoid of \(\regex{(a+b)^{*}a}\) is usually denoted by \(U_{2}\).
  It has three elements \(1\), \(a\) and \(b\) such that \(a\) and \(b\) are idempotent and \(a\cdot b = b\) and \(b\cdot a = a\).
  The elements \(a\) and \(b\) are \(\greenR\)-equivalent (and therefore also \(\greenJ\)-equivalent), but not \(\greenL\)-equivalent.
  The neutral element \(1\) is alone in its \(\greenJ\)-class which is \(\greenJ\)-smaller than the one of \(a\) and \(b\).
\end{example}

\subparagraph*{Unary auxiliary relations for expressing evaluations of infixes} 
The proof of~\cref{claim:st_relations_exists} relies on the fact that finite monoids have only finitely many \(\greenJ\)-classes.
Therefore, when we evaluate a word from left to right, there is only a finite number of changes of \(\greenJ\)-classes.
We make that intuition explicit. In the following, for a word \(w = w_1\cdots w_n \in M^{*}\), we write \(w[j,k]\) to denote the subword $w_j \cdots w_k$ of \(w\) between indices \(j\) and \(k\).
For an element  \(x\), we denote by \(\greenJ(x)\) its \(\greenJ\)-class.
When it is clear from the context, we identify a word in \(M^{*}\) by its evaluation.
The following lemma is about the evaluation of a word \(w\in M^{*}\) from left to right, that is evaluating \(w_{1}\), \(w_{1}w_{2}\), \(w_{1}w_{2}w_{3}\), \(\ldots\), \(w_{1}\cdots w_{n}\) in that order.
It states that, except for a constant number of times (depending on \(M\) and not on \(w\)), the \(\greenJ\)-class of the partial evaluations does not change.

\begin{restatable}{lemma}{Jfalling}
  \label{fact:J_falling}
  Let \(w\) be a word from \(M^{*}\) of size \(n\) and \(j \leq n\) a position.
  Then there exist integers \(j=\ell_{0}<\ell_{1}<\cdots<\ell_{m}=n+1\) with \(m\leq |M|\) such that:
  \begin{enumerate}[i)]
    \item for \(0\leq s< m\) and \(\ell_{s}\leq i < \ell_{s+1}\), \(w[j,\ell_{s}]\ \greenJ\ w[j,i]\),
    \item for \(0\leq s < m-1\), \(w[j,\ell_{s+1}-1]\ \cancel{\greenJ}\ w[j,\ell_{s+1}]\).
  \end{enumerate}
  There is a symmetric statement from right to left. That is, there exists integers \(0=\ell_{m}< \cdots <\ell_{1}<\ell_{0}=j\) with \(m\leq |M|\) such that:
  \begin{enumerate}[i)]
    \item for \(0\leq s< m\) and \(\ell_{s+1}< i \leq \ell_{s}\), \(w[\ell_{s},j]\ \greenJ\ w[i,j]\),
    \item for \(0\leq s < m-1\), \(w[\ell_{s+1}-1,j]\ \cancel{\greenJ}\ w[\ell_{s+1},j]\).
  \end{enumerate}
\end{restatable}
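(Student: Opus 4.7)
The plan is to construct the sequence $\ell_0 < \ell_1 < \cdots < \ell_m$ greedily. First I would set $\ell_0 = j$; then, inductively, once $\ell_s \leq n$ has been defined, I would let $\ell_{s+1}$ be the smallest index $i$ with $\ell_s < i \leq n$ such that $w[j, \ell_s]$ and $w[j, i]$ lie in distinct $\greenJ$-classes, or set $\ell_{s+1} = n+1$ and halt with $m = s+1$ if no such $i$ exists. Property (i) then holds by construction: the minimality in the choice of $\ell_{s+1}$ ensures that $w[j, \ell_s]\ \greenJ\ w[j, i]$ for every $\ell_s \leq i < \ell_{s+1}$. Property (ii) follows directly from the defining condition on $\ell_{s+1}$ whenever $s < m-1$.

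The core of the proof is to bound $m$. The key observation I would use is that as we extend the prefix one letter at a time, its $\greenJ$-class can only descend in $\leq_{\greenJ}$: for any indices $s < s'$ in the construction, the subword $w[j, \ell_s]$ is a prefix of $w[j, \ell_{s'}]$ in $M^*$, so there exists $\alpha \in M$ with $w[j, \ell_{s'}] = w[j, \ell_s] \cdot \alpha$ (identifying subwords with their evaluations in $M$), hence $w[j, \ell_{s'}] \leq_{\greenJ} w[j, \ell_s]$. Combining this with properties (i) and (ii), I obtain $w[j, \ell_s]\ \greenJ\ w[j, \ell_{s+1}-1]$ together with $w[j, \ell_{s+1}-1]\ \cancel{\greenJ}\ w[j, \ell_{s+1}]$, so $w[j, \ell_s]$ and $w[j, \ell_{s+1}]$ lie in different $\greenJ$-classes. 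This yields a strictly decreasing chain $\greenJ(w[j, \ell_0]) >_{\greenJ} \greenJ(w[j, \ell_1]) >_{\greenJ} \cdots >_{\greenJ} \greenJ(w[j, \ell_{m-1}])$ in the induced partial order on $\greenJ$-classes, so $m \leq |M|$ since $M$ has at most $|M|$ such classes.

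The symmetric statement reading from right to left will follow by exactly the same argument, now extending a suffix on the left by one letter at a time, which corresponds to left-multiplication $w[i', j] = \beta \cdot w[i, j]$ with $i' < i$ and still preserves $\leq_{\greenJ}$. I do not anticipate any real conceptual obstacle; the main subtlety is in the index bookkeeping, in particular the convention $\ell_m = n+1$, which is only a marker signalling that the $\greenJ$-class has stabilized all the way to the endpoint of the word.
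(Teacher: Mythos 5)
Your proposal is correct and matches the paper's proof: both construct the $\ell_s$ greedily and bound $m$ by exhibiting a strictly $\leq_{\greenJ}$-decreasing chain of prefix evaluations, using that multiplication on the right can only descend in $\leq_{\greenJ}$. The only cosmetic difference is that you pick $\ell_{s+1}$ as the first index whose prefix leaves the $\greenJ$-class of $w[j,\ell_s]$, while the paper picks the first index where two \emph{consecutive} prefixes differ in $\greenJ$-class; these coincide since the $\greenJ$-class of prefixes is monotone.
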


The relations promised in~\cref{claim:st_relations_exists} contain the information, for every position \(i\), of the evaluation of the prefixes obtained in the decomposition of~\cref{fact:J_falling} applied to \(i\).
In the following, we slightly abuse notation and write \(x\geq_{\greenJ} J\) to mean that the \(\greenJ\)-class of $x$ is at least \(J\) with respect to the order of \(\greenJ\)-classes.

Let \(w\) be a word from \(M^{*}\), \(i\) a position in the word, \(y\in M\) and \(J\) a \(\greenJ\)-class of \(M\).
Let \(k\) be the greatest position such that \(y\cdot w[j+1,k] \geq_{\greenJ} J\).
Then we define \(R_{\geq J,y}(j)\) as the evaluation of \(w[j+1,k]\) (notice that this is \emph{not} \(y\cdot w[j+1,k]\)).
Similarly, let \(k\) be the least position such that \(w[k,j-1]\cdot y \geq_{\greenJ} J\).
Then we write \(L_{\geq J,y}(j)\) for the evaluation of \(w[k,j-1] \).
When \(y=1\), we simply write \(R_{\geq J}\) and \(L_{\geq J}\).
We also define \(\bar{R}_{\geq J,y}\) and \(\bar{L}_{\geq J,y}\) for the variants that include \(j\) in the intervals.
Note that it is not clear how to compute one from the other without access to successor predicates.

\begin{definition}
  Let \(M\) be a monoid.
  The schema $\schema_M$ is defined as the following set of relations, where \(x,y\in M\) and \(J\) is a \(\greenJ\)-class of \(M\):
  \begin{itemize}
    \item \(R_{\geq J,y,x}\) that contains \(j\) if and only if \(R_{\geq J,y}(j) = x\),
    \item \(L_{\geq J,y,x}\) that contains \(j\) if and only if \(L_{\geq J,y}(j) = x\),
    \item their variants \(\bar{R}_{\geq J,y,x}\) and \(\bar{L}_{\geq J,y,x}\).
  \end{itemize}
\end{definition}
These unary relations satisfy conditions (i) and (ii) of \cref{claim:st_relations_exists}, see the full version. %
The idea to compute the evaluation of an infix \(w]j,k[\) with these relations is to use them to compute the decomposition of \(w]j,n]\) given by~\cref{fact:J_falling}.
This can be done by existentially quantifying the positions \(l_{0},\ldots,l_{m}\) and checking the properties of the decomposition using uniquely universal quantifiers.
The first \(l_{s}\) after \(k\) allows to deduce the \(\greenR\)-class of \(w]j,k[\): indeed, \(w]j,l_{s}[\) is \(\greenJ\)-equivalent to \(w]j,k[\) by the definition of the decomposition, which is strengthened into \(\greenR\)-equivalence because one is a prefix of the other.
Using the second decomposition of ~\cref{fact:J_falling}, from rigth to left, we also deduce the \(\greenL\)-class of \(w]j,k[\).
This identify the \(\greenH\)-class of the infix, then more technical work is needed to compute precisely the evaluation.

\section{\texorpdfstring{The regular languages of \(\UDynProp\)}{The regular languages of UDynProp}}
\label{sec:udynprop}
If binary auxiliary relations are allowed, the quantifier-free fragment \DynProp of \DynFO is powerful enough to maintain all regular languages \cite{GeladeMS12}. 
In this section, we provide a characterization of the regular languages that can be maintained using quantifier-free update formulas if only unary auxiliary relations are permitted. 
These are precisely the regular languages that are recognized by a group.
Intuitively, a computation in a group can be reversed at any moment using the presence of inverses.
That is, if a computation on \(xy\) ends in some memory state, it is possible to retrieve the memory state after the computation of \(x\).
We denote by \(\bG\) this class of languages.

\begin{theorem}[Restatement of \cref{thm:thm2}]\label{thm:udynprop:groups}
	For any regular language $L$:
	\begin{itemize}
	 \item[] $\Member{L}  \in \UDynProp$ if and only if the syntactic monoid of $L$ is a group.
	\end{itemize}

\end{theorem}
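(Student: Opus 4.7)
The plan is to prove the two implications separately, reducing to the monoid problem on the upper-bound side via \cref{claim:monoid_to_language} and to a single witness language on the lower-bound side via the closure properties of \cref{lem:inv_morphism_quotients}.

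For the \emph{if} direction, assume the syntactic monoid $M$ of $L$ is a group. By \cref{claim:monoid_to_language} it suffices to maintain $\Member{M}$ in $\UDynProp$. The program stores, for each $x \in M$, unary relations $P_{x}$ and $S_{x}$ where $P_x(i)$ and $S_x(i)$ record that the prefix $w_1\cdots w_{i-1}$ and the suffix $w_{i+1}\cdots w_n$ evaluate to $x$, together with the content relations $W_m$. On a change $\set_{m_{\text{new}}}(i)$, prefixes at positions $j\leq i$ are unchanged. For $j>i$, writing $p_i$ for the old prefix at $i$ and $m_{\text{old}}$ for the old content at $i$, a direct calculation gives $p_{j}^{\text{new}} = c\cdot p_{j}^{\text{old}}$ with $c = p_i \cdot m_{\text{new}} \cdot m_{\text{old}}^{-1} \cdot p_i^{-1}$, where the inverses exist because $M$ is a group. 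The element $c$ is a function of $P_{\bullet}(i)$ and $W_{\bullet}(i)$, which are accessible quantifier-free, so case analysis gives a quantifier-free update for each $P_{x}$; suffix relations are symmetric. The answer bit becomes a Boolean combination of atoms of the form $P_p(i)\wedge W_m(i)\wedge S_s(i)$ over triples $(p,m,s)$ with $p\cdot m\cdot s$ in the accepting set.

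For the \emph{only if} direction I would proceed in two steps.

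\textbf{Step 1 (concrete lower bound).} I would establish $\Member{\{a,b\}^{\ast} a \{a,b\}^{\ast}}\notin \UDynProp$. I expect this to be the main obstacle of the proof; the intended tool is a variant of the Substructure lemma from~\cite{GeladeMS12} (see also~\cite{Zeume15thesis}). Since a $\UDynProp$ program assigns each position one of constantly many unary types and the answer bit is a $0$-ary relation updated by a quantifier-free formula whose only free position variable is the change position $i$, the new answer bit depends solely on the old answer bit and the unary types at $i$. Over sufficiently long inputs many positions must share a unary type, and a carefully chosen sequence of changes flipping the content of such a position between $a$ and $\epsilon$ should force the program into inconsistent answers to ``does the word contain an~$a$?''.

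\textbf{Step 2 (algebraic lifting).} If $M$ is not a group, it contains an idempotent $e \neq 1$. By the definition of the syntactic monoid, $1$ and $e$ are not syntactically congruent, so there exist $x,y \in \Sigma^{\ast}$ and $u \in \varphi^{-1}(e)$ such that exactly one of $xy$ and $xuy$ belongs to $L$. Using that $\UDynProp$ is closed under complement, we may assume $xy\notin L$ and $xuy\in L$; then $xu^k y \in L$ for every $k\geq 1$ because $e^k = e$. Assume for contradiction $\Member{L}\in \UDynProp$. Iterating \cref{lem:inv_morphism_quotients}(2) yields $\Member{x^{-1}Ly^{-1}}\in \UDynProp$; applying \cref{lem:inv_morphism_quotients}(1) to the map $h:\{a,b\}\to\Sigma^{\ast}$ with $h(a)=u$ and $h(b)=\epsilon$ gives $\Member{h^{-1}(x^{-1}Ly^{-1})}\in \UDynProp$. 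Since $h(w)=u^{|w|_a}$, we have $w\in h^{-1}(x^{-1}Ly^{-1})$ iff $xu^{|w|_a}y\in L$ iff $|w|_a\geq 1$, hence $h^{-1}(x^{-1}Ly^{-1})=\{a,b\}^{\ast}a\{a,b\}^{\ast}$, contradicting Step~1.
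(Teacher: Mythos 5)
Your \emph{if} direction is correct and is essentially the paper's argument: group inverses let you recover the effect of a change on every prefix from unary data alone (the paper phrases this as eliminating the binary infix relation $I_g(i,j)$ via $g = g_i^{-1}g_p^{-1}g_wg_s^{-1}g_j^{-1}$; your correction factor $c$ is the same computation). Your Step~2 is also correct, and it takes a genuinely different route from the paper: the paper first transfers the problem from languages to monoids via Eilenberg's theorem (\cref{thm:eilenberg}, \cref{claim:language_to_monoid}) and then shows that $U_1$ divides every non-group monoid (\cref{lem:udynprop:nonongroups}); you instead stay at the language level and reduce $\regex{(a+b)^{*}a(a+b)^{*}}$ directly to $L$ via a separating context $(x,y)$ for the idempotent $e\neq 1$, iterated letter quotients, and the inverse of $h(a)=u$, $h(b)=\epsilon$. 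This avoids the variety machinery entirely and needs only \cref{lem:inv_morphism_quotients} plus closure under complement; it is arguably more elementary, at the price of not isolating the reusable monoid-level statement.

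The genuine gap is Step~1. The mechanism you describe --- pigeonhole on unary types within a single word and flipping one such position between $a$ and $\epsilon$ --- cannot yield a contradiction. If two positions $i,j$ of $\regex{a}^n$ carry the same unary type, the Substructure Lemma only tells you that corresponding change sequences at $i$ and at $j$ produce the same answer, and for this language they genuinely should: both residual words still contain an $\regex{a}$, or both do not. To separate membership you must compare configurations of \emph{different} lengths. The paper applies Higman's Lemma to the sequence $(\regex{a}^i,\calA_i)_{i\geq 1}$ to find $n<m$ and an order-preserving isomorphism $\pi$ from \emph{all} of $(\regex{a}^n,\calA_n)$ onto a restriction of $(\regex{a}^m,\calA_m)$, and then sets all $n$ positions of the short word (and their $\pi$-images in the long one) to $\regex{b}$; only then does one side become $\regex{b}^n\notin L$ while the other retains $m-n\geq 1$ letters $\regex{a}$, contradicting the Substructure Lemma. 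Relatedly, your observation that the new answer bit depends only on the old answer bit and the unary types at $i$ holds for a single change but does not bound the program over a change sequence, since the unary relations themselves evolve; the Substructure Lemma is precisely the tool that controls this evolution, but it needs the Higman-style embedding as input, and that embedding is the missing idea in your sketch.
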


We first show that every regular group language can be maintained in \UDynProp, using a simple adaptation of the dynamic program that maintains all regular languages using binary auxiliary relations.
Then we prove that other regular languages cannot be maintained. For this, we adapt a lower bound result from \cite{zeume_substructure_2015} to show that \UDynProp cannot maintain the regular language ``there is at least one \regex{a}'' and then generalize this result to all regular languages whose syntactic monoid is not a group.

\subsection{Maintaining regular group languages with quantifier-free formulas}

We show that unary auxiliary relations and quantifier-free update formulas are sufficient to maintain the evaluation problems of groups. The upper bound from Theorem~\ref{thm:udynprop:groups} follows.

\begin{lemma}\label{lem:groups_in_udynprop}
	If \(G\) is a group, then the problems \(\Member{G}\), \(\StrictPrefix{G}\) and \(\StrictSuffix{G}\) are in \(\UDynProp\).
\end{lemma}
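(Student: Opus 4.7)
The plan is to maintain, for each $g \in G$, unary relations $P_g$ (with $P_g(j) \Leftrightarrow w_1 \cdots w_{j-1} = g$) and $S_g$ (with $S_g(j) \Leftrightarrow w_{j+1} \cdots w_n = g$), together with a $0$-ary bit $q_g$ holding iff the current word evaluates to $g$. These are exactly the relations demanded by $\StrictPrefix{G}$, $\StrictSuffix{G}$, and $\Member{G}$ respectively. Because an update formula sees the \emph{new} input but only the \emph{old} auxiliary data, I would additionally maintain a current-letter unary relation $V_\tau$ for each $\tau \in \Sigma \cup \{\epsilon\}$, so that the pre-change letter at the modified position is available as an old-aux atom; $V$ is trivially maintained by the quantifier-free formula $\varphi^{V_\tau}_\sigma(j;i) = (j = i \wedge [\tau = \sigma]) \vee (j \neq i \wedge V_\tau(j))$, where $[\tau = \sigma]$ is evaluated at compile time.

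The heart of the argument is a single identity, which is the only point where the group hypothesis is used. Suppose the change $\set_\sigma(i)$ replaces the letter $\tau$ at position $i$ by $\sigma$, and set $a := w_1 \cdots w_{i-1} \in G$; this value is unaffected by the change. For $j > i$, writing $b := w_{i+1} \cdots w_{j-1}$, one has $\pi_{\mathrm{old}}(j) = a \tau b$ and $\pi_{\mathrm{new}}(j) = a \sigma b$, and group inverses yield
\[
  \pi_{\mathrm{new}}(j) \;=\; (a \sigma \tau^{-1} a^{-1}) \cdot \pi_{\mathrm{old}}(j).
\]
Each of the four ingredients $a$, $\tau$, $\sigma$, $\pi_{\mathrm{old}}(j)$ is readable from a unary relation at one of the two available positions: $a$ from $P_a(i)$, $\tau$ from $V_\tau(i)$, $\sigma$ is a fixed constant of the formula, and $\pi_{\mathrm{old}}(j)$ from $P_{g'}(j)$. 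Since $G$ and $\Sigma \cup \{\epsilon\}$ are finite, all cases can be packaged into the quantifier-free disjunction
\[
  \varphi^{P_g}_\sigma(j;i) \;=\; (j \leq i \wedge P_g(j)) \;\vee\; \Bigl(i < j \;\wedge\; \bigvee_{a \sigma \tau^{-1} a^{-1} g' = g} P_a(i) \wedge V_\tau(i) \wedge P_{g'}(j)\Bigr).
\]
A mirror-image identity gives a QF update of $S_g$. For $\Member{G}$ the analysis is even easier: neither the strict prefix before $i$ nor the strict suffix after $i$ is affected by a change at $i$, so the new word evaluates to $g$ iff $\bigvee_{a \sigma b = g} P_a(i) \wedge S_b(i)$, which is again quantifier-free.

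Initialization on the empty input is immediate: $P_1$, $S_1$, and $V_\epsilon$ hold at every position, and $q_1$ is true. The only real obstacle I anticipate is recovering the pre-change letter at the modified position, and the relation $V_\tau$ resolves exactly this. Everything else is absorbed into the single conjugation identity that the group structure affords, so the full dynamic program uses no quantifiers and only constantly many atomic lookups per updated tuple.
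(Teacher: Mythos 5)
Your proof is correct, and it rests on exactly the same algebraic idea as the paper's: group inverses let you express, with only unary data, the evaluation of the infix between the change position and the position being updated, which is the one place where the general regular-language program of Gelade et al.\ genuinely needs a binary relation. The packaging differs, though. The paper takes the known binary \DynProp program (bits $Q_g$, unary $P_g$, $S_g$, binary $I_g$) and shows that each atom $I_g(i,j)$ can be replaced by a quantifier-free formula via $g = g_i^{-1}\cdot g_p^{-1}\cdot g_w\cdot g_s^{-1}\cdot g_j^{-1}$, i.e.\ it cancels the unknown infix against the \emph{whole-word} evaluation together with the outer prefix, suffix and endpoint letters. You instead derive the update for $P_g$ from scratch via the conjugation identity $\pi_{\mathrm{new}}(j) = (a\sigma\tau^{-1}a^{-1})\,\pi_{\mathrm{old}}(j)$, cancelling against the \emph{old prefix value at $j$}; this never mentions infixes at all and needs neither $Q_g$ nor $S_g$ to update $P_g$. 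Your route is more self-contained and elementary; the paper's makes explicit why the binary infix relation is the only obstruction and hence why groups are precisely what lets it collapse. Your auxiliary relation $V_\tau$ caching the pre-change letter is necessary (update formulas see the new input but the old auxiliary data) and matches the $W^o_\sigma$ device the paper uses elsewhere, e.g.\ in Example~\ref{example:aa}. I see no gap.
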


\begin{proofsketch}
Gelade et al. \cite{GeladeMS12} explain how to maintain membership, strict prefixes and suffixes for all regular languages with binary auxiliary relations. Their approach can directly be used to maintain these problems for monoids. For every monoid element $g$, the corresponding update formulas use a bit $Q_g$ to indicate whether the whole input evaluates to $g$ and three further auxiliary relations:
 \begin{itemize}
 	\item a unary relation $P_g$ that contains all positions $i$ such that the prefix ending at position $i-1$ evaluates to $g$,
    \item a unary relation $S_g$ that contains all positions $i$ such that the suffix starting at position $i+1$ evaluates to $g$, and
    \item a binary relation $I_g$ that contains all pairs $(i,j)$ of positions such that the infix starting at position $i+1$ and ending at position $j-1$ evaluates to $g$.
 \end{itemize}
 
 Assume that the element at position $p$ is set to the monoid element $g_1$. Let $i, j$ be positions with $i < p-1$ and $j > p+1$. 
 After the change, the infix from position $i+1$ to position $j-1$ evaluates to the element $g_0 \cdot g_1 \cdot g_2$, where $g_0$ is the result of evaluating the infix from position $i+1$ to position $p-1$ and $g_2$ is the result of evaluating the infix from position $p+1$ to position $j-1$. Both elements $g_0$ and $g_2$ can be read from the auxiliary relations of the form $I_g$ without using quantifiers, given the position $p$ of the change and the positions $i, j$ for which the auxiliary relations are updated.
 
 If the monoid is a group, we can replace atoms $I_g(i,j)$ in the update formulas as follows.
 Let $g_w$ be the result of evaluating the whole input and let $g_p, g_s$ be the evaluations of the prefix up to position $i-1$ and the suffix starting from position $j+1$, respectively. Let $g_i$ and $g_j$ be the elements at position $i$ and $j$. All of these elements can be obtained from the input and auxiliary relations without using quantifiers, given $i$ and $j$. 
 From $g_w = g_p \cdot g_i \cdot g \cdot g_j \cdot g_s$, for the evaluation $g$ of the infix from position $i+1$ to position $j-1$, we obtain $g$ as $g = g_i^{-1} \cdot g_p^{-1} \cdot g_w \cdot g_s^{-1} \cdot g_j^{-1}$, as every group element has an inverse.
 
 So, we replace any atom $I_g(i,j)$ by the formula $\bigvee_{g = g_i^{-1} \cdot g_p^{-1} \cdot g_w \cdot g_s^{-1} \cdot g_j^{-1}} Q_w \wedge W_{g_i}(i) \wedge W_{g_j}(j) \wedge P_{g_p}(i) \wedge S_{g_s}(j)$ and obtain update formulas that only use at most unary auxiliary relations. The result follows.
\end{proofsketch}

\subsection{Lower bound for regular non-group languages}

We want to show that Lemma~\ref{lem:groups_in_udynprop} is optimal and any language whose syntactic monoid is not a group cannot be maintained in \UDynProp.
The proof is in three steps. 
First, we show that a specific regular language cannot be maintained in \UDynProp. 
We generalize this and show that \UDynProp cannot maintain the membership problem for any monoid that is not a group. At last, we infer that one cannot maintain any regular language whose syntactic monoid is not a group by exploiting known closure properties of this class of languages.

\subsubsection{A regular language that is not in \UDynProp}

Schwentick and Zeume~\cite{zeume_substructure_2015} have shown that the graph reachability problem for so-called 1-layer graphs --- edges may only exist from the start node $s$ to some set $A$ of nodes and from this set $A$ to the target node $t$ --- is not in \UDynProp.
Very similarly, one can show that one cannot maintain in \UDynProp whether a set is empty, where changes add elements to the set or remove elements from it. 
We adapt the proof to show that the regular language $\regex{(a+b)^{*}a(a+b)^{*}}$, that is, the language of all words over the alphabet $\{\regex{a},\regex{b}\}$ that contain an $\regex{a}$, is also not in \UDynProp.
The subtle difference between these problems is that the positions of a word are linearly ordered, but the elements of a set are not. However, despite this additional structure, basically the same proof goes through.
We provide some more detail, also because we want to adapt the proof techniques in Section~\ref{sec:udynsop}.

As proof technique we use the Substructure Lemma \cite{GeladeMS12}, which basically says that if two substructures including their auxiliary relations are isomorphic and subject to changes that respect that isomorphism, then also the auxiliary relations that result from applying quantifier-free update formulas are isomorphic.

We make this more formal. For any natural number $n$, we write \([n]\) for the set \(\{1,\ldots,n\}\). 
If $w$ is a word of some length $n$, $I$ is a subset of $[n]$ and $\calA$ is a set of auxiliary relations, we write $(w, \calA)_{|I}$ to denote the restriction of $(w, \calA)$ to the positions $I$. 
Let $w, w'$ be words of length $n \leq m$ respectively and let $\pi$ be a mapping from $[n]$ to $[m]$. 
We call $\pi$ \emph{order-preserving} if $i < i'$ implies $\pi(i) < \pi(i')$.
Two sequences $\alpha = \delta_{1} \cdots \delta_{\ell}$ and $\alpha'= \delta'_{1} \cdots \delta'_{\ell}$ of changes of $w$ and $w'$, respectively, are \emph{$\pi$-respecting} if for every pair $\delta_i, \delta'_i$ of changes it holds that if $\delta_i = \set_\sigma(j)$ for some symbol $\sigma$ and some position $j$, then $\delta'_i = \set_\sigma(\pi(j))$.

If $\alpha$ is a sequence of changes and $\Pi$ is a dynamic program, we denote by $\Pi_{\alpha}(w,\calA)$ the pair of word and auxiliary relations that is obtained by applying the changes $\alpha$ and the corresponding update formulas from $\Pi$ to the word $w$ and the auxiliary relations $\calA$.

\begin{lemma}[Substructure Lemma~\protect{\cite[Lemma 1]{GeladeMS12}}]
	\label{lem:substructure}
	Let $\prog$ be a dynamic \UDynProp program and let $w\in \Sigma^{n}$ and $w'\in \Sigma^{m}$ be two words over the alphabet $\Sigma$. Further, let $\calA$ and $\calA'$ be sets of at most unary auxiliary relations over the domain of $w$ and $w'$ and the schema of $\prog$.
	Assume there are \(I\subseteq [n]\) and \(I'\subseteq [m]\) such that $(w,\calA)_{|I}$ and $(w',\calA')_{|I'}$ are isomorphic via some order-preserving isomorphism \(\pi\).
	Then $\pi$ is also an order-preserving isomorphism from \(\Pi_{\alpha}(w,\calA)_{|I}\) to \(\Pi_{\alpha'}(w',\calA')_{|I'}\), for any \(\pi\)-respecting sequences \(\alpha\) and \(\alpha'\) of changes.
\end{lemma}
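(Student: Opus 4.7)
The plan is to prove the lemma by induction on the length $\ell$ of the change sequences $\alpha = \delta_1 \cdots \delta_\ell$ and $\alpha' = \delta'_1 \cdots \delta'_\ell$. The base case $\ell = 0$ is exactly the hypothesis of the lemma. For the inductive step, I would invoke the inductive hypothesis on the length-$(\ell-1)$ prefixes to conclude that $\pi$ is still an order-preserving isomorphism between the restrictions of $\Pi_{\delta_1 \cdots \delta_{\ell-1}}(w,\calA)$ to $I$ and $\Pi_{\delta'_1 \cdots \delta'_{\ell-1}}(w',\calA')$ to $I'$, and then show that applying the single final pair of changes $\delta_\ell = \set_\sigma(j)$ and $\delta'_\ell = \set_\sigma(\pi(j))$ preserves the isomorphism. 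Writing $(v, \calB)$ and $(v', \calB')$ for the respective $(\ell-1)$-step structures, it suffices to verify the inductive step for this one pair of changes.

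First I would handle the input relations: since $\delta_\ell$ and $\delta'_\ell$ both set their respective positions to the same symbol $\sigma$, and $\pi$ maps $j$ to $\pi(j)$, the input relations $W_\tau$ restricted to $I$ and $I'$ remain $\pi$-isomorphic after the change, as no position in $I \setminus \{j\}$ is touched (and analogously for $I' \setminus \{\pi(j)\}$). Next, for each auxiliary relation symbol $R$ of arity at most $1$ and each position $i \in I$, I would argue that $i$ lies in the new $R$ iff $\pi(i)$ lies in the new $R'$. By the semantics of dynamic programs, the former is equivalent to $\varphi^R_\sigma(i;j)$ holding in the structure with the input change already applied, and the latter to $\varphi^R_\sigma(\pi(i);\pi(j))$ holding in the corresponding primed structure.

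The crux is then to observe that, since $\varphi^R_\sigma$ is quantifier-free and only unary auxiliary relations are available, every atomic subformula belongs to one of the forms $W_\tau(x)$, $W_\tau(y)$, $U(x)$, $U(y)$ (for unary $U$), $x \leq y$, $y \leq x$, $x = y$, or a $0$-ary bit. Instantiating $x \mapsto i, y \mapsto j$ on one side and $x \mapsto \pi(i), y \mapsto \pi(j)$ on the other, the truth value of every such atom is preserved: the unary atoms because $i, j, \pi(i), \pi(j)$ all lie in $I$ respectively $I'$ and the inductive hypothesis (together with the step already made for input relations) gives $\pi$-preservation of all unary predicates on $I$; the order atoms because $\pi$ is order-preserving; and the $0$-ary bits because they are part of the auxiliary structure and are preserved trivially by any isomorphism. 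Hence $\varphi^R_\sigma(i;j) \iff \varphi^R_\sigma(\pi(i);\pi(j))$, which is exactly what is needed.

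The main obstacle I anticipate is purely a bookkeeping one, namely being explicit about the case in which the change position $j$ itself need not lie in $I$; the definition of $\pi$-respecting sequences only requires $\pi(j)$ to be defined, so one has to either (i) insist that changes occur at positions in $I$ (which is the natural reading and matches how the lemma is used later), or (ii) extend $\pi$ by a single fresh pair $(j, \pi(j))$ for each change and argue that the same enumeration of atomic formulas goes through. Either reading works uniformly with the argument above, and no step beyond the standard quantifier-free locality analysis is required.
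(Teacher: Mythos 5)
Your proof is correct and follows the standard argument: induction on the length of the change sequence plus an atom-by-atom locality analysis of the quantifier-free update formulas, which is exactly the approach the paper takes (it cites this lemma from Gelade et al.\ without reproving it, but its proof of the analogous \UDynSop variant in Lemma~\ref{lem:extended_substructure} is the same induction-plus-single-change analysis, generalized to positive existential formulas). Your reading (i) of the bookkeeping issue is the intended one: since $\pi$ is only defined on $I$, a $\pi$-respecting change must occur at a position in $I$, so no extension of $\pi$ is needed.
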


To obtain a lower bound via the Substructure Lemma, we need to find isomorphic substructures such that two sequences of isomorphism-respecting changes lead to words that differ with respect to membership in a regular language.
To find such structures, we use Higman's Lemma.

\begin{lemma}[{see \cite[Chapter 6]{Lothaire_1997}}]
	\label{lem:higman}
	Let \((w_{i},\cA_{i})_{i\geq 1}\) be a sequence of words \(w_{i}\in\Sigma^{i}\) and sets \(\calA_{i}\) of relations over a unary schema and over the domain of \(w_{i}\).
	There exists two integers \(n<m\), a subset \(I\subseteq [m]\) and a mapping \(\pi: [n]\rightarrow I\) such $\pi$ is a order-preserving isomorphism from \(w_{n}\) to \(w_{m|I}\).
\end{lemma}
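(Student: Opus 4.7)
The plan is to reduce the statement to the classical form of Higman's lemma, which asserts that for any finite alphabet the subword embedding order on finite words is a well-quasi-order; this is precisely the content of the cited chapter of Lothaire.

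First, I would absorb the auxiliary data into the alphabet. Let $\schema$ be the (finite) unary schema of the relations $\calA_i$, say $\schema = \{R_1, \ldots, R_k\}$. Each position $j$ of $w_i$ carries two pieces of information: its letter in $\Sigma$ and the subset of $\schema$ of relations containing $j$ in $\calA_i$. Bundling these yields the finite alphabet $\Sigma' \df \Sigma \times 2^{\schema}$, and each pair $(w_i, \calA_i)$ becomes a word $u_i \in (\Sigma')^i$ of length $i$ that faithfully records both $w_i$ and the membership information of every $R_\ell$. By construction, an order-preserving injection $\pi : [n] \to [m]$ is an isomorphism from $(w_n, \calA_n)$ onto $(w_m, \calA_m)_{|\pi([n])}$ if and only if $u_n(j) = u_m(\pi(j))$ for every $j \in [n]$, i.e., if and only if $\pi$ witnesses a subword embedding of $u_n$ into $u_m$.

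Second, I would apply the classical Higman's lemma to the infinite sequence $(u_i)_{i \geq 1}$ in $(\Sigma')^*$: since the subword embedding is a well-quasi-order on $(\Sigma')^*$, there necessarily exist indices $n < m$ and an order-preserving injection $\pi : [n] \to [m]$ such that $u_n(j) = u_m(\pi(j))$ for every $j \in [n]$. Setting $I \df \pi([n])$, the observation from the previous step turns $\pi$ into the isomorphism demanded by the statement.

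There is essentially no obstacle beyond invoking the citation: once one notices that the finiteness of both the input alphabet $\Sigma$ and the unary schema $\schema$ allows all the relational data of each $\calA_i$ to be encoded into a single letter of an enriched alphabet, the statement is an immediate corollary of the classical theorem, and nothing dynamic or auxiliary-relation-specific remains to be analyzed. The only point requiring some care is to make the encoding respect the order structure, but this is automatic since we encode positions individually and leave their order untouched.
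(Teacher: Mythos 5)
Your proposal is correct and matches the paper's (implicit) argument: the paper simply cites the classical Higman's lemma, and the intended reduction is exactly the one you spell out, namely packing each position's letter together with its membership pattern over the finite unary schema into a single symbol of the enriched alphabet $\Sigma \times 2^{\schema}$ and then extracting an embedded pair $n<m$ from the well-quasi-order on words over that finite alphabet. The only detail worth stating explicitly is that the schema $\schema$ is finite (it is the auxiliary schema of a fixed dynamic program), which is what makes the enriched alphabet finite and the classical lemma applicable.
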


We now show the desired result. 

\begin{lemma}[{c.f.~\cite[Proposition~4.8]{zeume_substructure_2015}}]
	\label{lem:udynprop_in_group}
	The language $\regex{(a+b)^{*}a(a+b)^*}$ is not in \(\UDynProp\).
\end{lemma}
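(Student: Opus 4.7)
I would prove the lemma by contradiction, combining the Substructure Lemma with Higman's Lemma, in the spirit of the argument for 1-layer reachability \cite[Proposition~4.8]{zeume_substructure_2015}. Assume that some $\UDynProp$ program $\Pi$ maintains $\Member{L}$ for $L = \regex{(a+b)^*a(a+b)^*}$. For each $n \geq 1$, let $C_n = (w_n, \calA_n)$ denote the state reached from the empty word of length $n$ by sequentially applying the changes $\set_a(1), \ldots, \set_a(n)$, so that $w_n = a^n \in L$. Since $\Pi$ has a fixed, finite, unary auxiliary schema, \cref{lem:higman} applied to the sequence $(w_n, \calA_n)_{n \geq 1}$ produces indices $n_1 < n_2$, a subset $I \subseteq [n_2]$ of size $n_1$, and an order-preserving bijection $\pi : [n_1] \to I$ which is an isomorphism from $(w_{n_1}, \calA_{n_1})$ onto $(w_{n_2}, \calA_{n_2})_{|I}$.

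Next, I would apply the two $\pi$-respecting sequences $\alpha = \set_\epsilon(1), \ldots, \set_\epsilon(n_1)$ to $C_{n_1}$ and $\alpha' = \set_\epsilon(\pi(1)), \ldots, \set_\epsilon(\pi(n_1))$ to $C_{n_2}$. The Substructure Lemma (\cref{lem:substructure}) then guarantees that $\pi$ is still an isomorphism from the full updated small state onto the restriction to $I$ of the updated large state; in particular, the $0$-ary auxiliary bit $q$, which is preserved by every restriction, must take the same value in both. But the updated small word is $\epsilon^{n_1}$, which contains no $a$ and is therefore not in $L$, so $\Pi$ is forced to set $q$ to $0$. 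On the other hand, $\alpha'$ only touches positions in $I$, so the updated large word retains its $a$-label at each of the $n_2 - n_1 \geq 1$ positions of $[n_2] \setminus I$; this word lies in $L$ and hence $\Pi$ must set $q$ to $1$. The two values of $q$ disagree, contradicting the isomorphism and proving the lemma.

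The only real design issue, and the step to get right before any of the above works, is picking an initial family $(C_n)_n$ whose membership status is uniform across $n$ (so that Higman's Lemma can match the full substructures, including the $0$-ary bit $q$) while still allowing $\pi$-respecting changes to destroy all $a$'s visible to the small state without destroying them in the large state. The choice $w_n = a^n$ achieves both: the uniform ``in $L$'' status makes \cref{lem:higman} applicable to the complete auxiliary structures, and the positions of $[n_2]\setminus I$ remain untouchable by any $\pi$-respecting sequence, so they carry their $a$-label into the final configuration and drive the contradiction. No further ingenuity beyond faithfully invoking \cref{lem:substructure,lem:higman} is required.
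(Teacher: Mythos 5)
Your proposal is correct and follows essentially the same route as the paper's proof: the same initial family $w_n = \regex{a}^n$, Higman's Lemma to find an order-preserving isomorphic embedding, and the Substructure Lemma applied to $\pi$-respecting change sequences that erase all $\regex{a}$'s visible to the smaller structure while leaving at least one $\regex{a}$ in the larger one. The only (immaterial) difference is that you overwrite with $\set_\epsilon$ where the paper uses $\set_b$; both yield a word outside the language on the small side and inside it on the large side.
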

\begin{proof}
	Assume towards a contradiction that there is a dynamic \UDynProp program $\prog$ that maintains  \(\regex{(a+b)^{*}a(a+b)^*}\).
	We consider the sequence $(w_i, \calA_i)_{i \geq 1}$, where $w_i$ is the word $\regex{a}^i$ and $\calA_i$ are auxiliary relations that $\prog$ obtains when starting from an empty input word of size $i$ and setting all positions to $\regex{a}$.
	By~\cref{lem:higman}, there are \(n<m\), \(I\subseteq [m]\) and \(\pi\) such that $\pi$ is an order-preserving isomorphism from $(w_n, \calA_n)$ to $(w_m, \calA_m)_{|I}$.
	
	The sequences $\alpha = \set_{b}(1) \cdots \set_{b}(n)$ and $\alpha' = \set_{b}(\pi(1)) \cdots \set_{b}(\pi(n))$ are $\pi$ respecting, so according to Lemma~\ref{lem:substructure}, the structures $\Pi_{\alpha}(w_n, \calA_n)$ and $\Pi_{\alpha'}(w_m, \calA_m)_{|I}$ are also isomorphic, which means that $\prog$ gives the same answer regarding whether the words are in the language $\regex{(a+b)^{*}a(a+b)^*}$.
	But the first word only consists of $\regex{b}$ after the changes are applied, so does not belong to the language, while the second word still contains at least one $\regex{a}$, contradicting the assumption that $\prog$ maintains the language.
\end{proof}

\subsubsection{\UDynProp cannot maintain any regular non-group language}

We now lift the unmaintainability result of Lemma~\ref{lem:udynprop_in_group} to all regular languages whose syntactic monoid is not a group.
First, we observe that \UDynProp cannot maintain the membership problem of any monoid that is not a group.

\begin{lemma}\label{lem:udynprop:nonongroups}
	 If $M$ is a monoid that is not a group, then $\Member{M}$ is not in \UDynProp.
\end{lemma}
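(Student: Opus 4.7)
The plan is to reduce to the previous lemma by showing that every finite monoid that is not a group has $U_1$ as a divisor. Combined with \cref{claim:monoid_to_language} and \cref{lem:udynprop_in_group}, which together imply $\Member{U_1} \notin \UDynProp$, the contrapositive of \cref{fact:closure_division} then yields $\Member{M} \notin \UDynProp$.

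The only real work lies in the algebraic step. I will use the standard characterization that a finite monoid is a group if and only if its only idempotent is $1$. In one direction, in a group $x^\omega = 1$ forces any idempotent to be $1$. Conversely, if $1$ is the only idempotent of $M$, then $x^\omega = 1$ for every $x \in M$; writing $x^\omega = x^n$ for some $n \geq 1$, the element $x^{n-1}$ is a two-sided inverse of $x$, so $M$ is a group. Hence a non-group $M$ must admit an idempotent $e \neq 1$.

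For such an $e$, the subset $N = \{1, e\}$ is closed under the multiplication of $M$ (the only nontrivial product is $e \cdot e = e$) and therefore forms a submonoid of $M$. The map $1 \mapsto 1,\ e \mapsto a$ is then an isomorphism from $N$ to $U_1$, witnessing that $U_1$ divides $M$. Together with the reduction outlined above, this finishes the proof.

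I do not anticipate any serious obstacle. The only point that merits slight care is the direction of division: one must have $U_1$ dividing $M$ (and not the other way around), which is exactly what the submonoid embedding $\{1,e\} \hookrightarrow M$ provides. A minor stylistic choice is whether to cite the ``unique idempotent'' characterization of finite groups as folklore or to include the two-line justification given above; since the computation is short, I would include it in the proof itself.
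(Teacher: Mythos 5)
Your proposal is correct and follows essentially the same route as the paper: both derive $\Member{U_1}\notin\UDynProp$ from \cref{claim:monoid_to_language} and \cref{lem:udynprop_in_group}, both extract an idempotent $e\neq 1$ from a finite non-group monoid via the idempotent-power argument, and both observe that $\{1,e\}$ is a submonoid isomorphic to $U_1$, so \cref{fact:closure_division} finishes the proof. The only cosmetic difference is that you package the idempotent argument as the ``unique idempotent'' characterization of finite groups, whereas the paper states it as a direct contradiction; the underlying computation is identical.
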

\begin{proof}
	The syntactic monoid $U_1$ of the language $\regex{(a+b)^{*}a(a+b)^*}$ has only the elements $1$ and $a$, where $1$ is the neutral element and $a$ acts as a zero: $1 \cdot x = x \cdot 1 = x$ and $a \cdot x = x \cdot a = a$ for every $x \in \{1,a\}$. As $a$ has no inverse, $U_1$ is not a group. 
	It follows from Fact~\ref{claim:monoid_to_language} and Lemma~\ref{lem:udynprop_in_group} that $\Member{U_1}$ is not in \UDynProp.
	
	Now assume that for some arbitrary non-group monoid $M$ the problem $\Member{M}$ is in \UDynProp. 
	We first claim that $U_1$ is a submonoid of $M$.
	To see that, observe that $M$ must have an element $x$ whose idempotent power is not the identity. 
	Otherwise, if $x^\omega = 1$ for every element $x \in M$, then let $n > 0$ be chosen such that $x^n = 1$ is the idempotent. Then $x \cdot x^{n-1} = x^{n-1} \cdot x = 1$ holds, so $x$ has the inverse $x^{n-1}$, so $M$ is a group, contradicting the assumption.
	Therefore, $M$ has an idempotent $e$ that is different from $1$ and $\{1,e\}$ is a submonoid of $M$ that is isomorphic to $U_1$.
	
	As $U_1$ is a submonoid of $M$ and in particular divides $M$, assuming $\Member{M} \in \UDynProp$ leads to a contradiction, as from Fact~\ref{fact:closure_division} the false statement $\Member{U_1} \in \UDynProp$ would follow.
\end{proof}

Now we have to translate Lemma~\ref{lem:udynprop:nonongroups} from monoids to languages, that is, we need a converse to Fact~\ref{claim:monoid_to_language}. So, we have to prove that if \UDynProp can maintain some regular language, it can also maintain the membership problem of its syntactic monoid.

We employ the theory of (pseudo)varieties, which goes back to Eilenberg~\cite{eilenberg_bookB_1976}.

Let \(\calL\) be a class of regular languages over the alphabet $\Sigma$.
It is \emph{closed under Boolean operations} if \(L_{1}\cup L_{2}\), \(L_{1}\cap L_{2}\) and \(L_{1}^{c}\) are in \(\calL\) for any \(L_{1}, L_{2}\in \calL\).
It is \emph{closed under quotients} if \(\sigma^{-1} L =\{w\ |\ \sigma w\in L \}\) and \(L \sigma^{-1}  =\{w\ |\ w\sigma \in L \}\) are in \(\calL\) for any \(L\in\calL\) and \(\sigma\in\Sigma\).
It is \emph{closed under inverse morphisms} if \(\mu^{-1}(L)\) is in \(\calL\) for any \(L\in \calL\), where \(\Gamma\) is any alphabet and \(\mu:\Gamma \rightarrow \Sigma^{*}\) is a function that is extended to \(\Gamma^{*}\) letter-wise.

A class \(\calV\) of regular languages is a \emph{(pseudo)variety} if it is closed under Boolean operations, quotients and inverse morphisms.
We often omit the ``pseudo'' and simply call such classes ``varieties''. For this work, the following result is used.

\begin{lemma}[\protect{\cite{eilenberg_bookB_1976}}, see~\protect{\cite[Theorem XIII.4.10]{pin_book_2014}}]
	\label{thm:eilenberg}
	Let \(\calV\) be a variety of regular languages and let $M$ be the syntactic monoid of some language \(L\in \calV\).
	Then any language recognized by \(M\) is also in \(\calV\).
\end{lemma}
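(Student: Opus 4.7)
The plan is to split the argument into two steps. First, I would show that every language of the form $\eta^{-1}(P) \subseteq \Sigma^*$ with $P \subseteq M$ and $\eta : \Sigma^* \to M$ the syntactic morphism of $L$ already lies in $\calV$. Second, I would transfer this to an arbitrary morphism $\mu : \Gamma^* \to M$ recognizing the target language, using closure under inverse morphisms.

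For the first step, I would use the classical presentation of $M$ as the quotient $\Sigma^*/{\sim_L}$ of the syntactic congruence $u \sim_L v$ iff $xuy \in L \Leftrightarrow xvy \in L$ for all $x,y \in \Sigma^*$. Since $\eta$ is surjective and membership of $xuy$ in $L$ depends only on $\eta(x)$ and $\eta(y)$, the universal quantification over $(x,y) \in \Sigma^* \times \Sigma^*$ collapses to a quantification over a fixed system of representatives $\{x_m : m \in M\}$ with $\eta(x_m) = m$. Hence for each $u \in \Sigma^*$ one can write
\[
  \eta^{-1}\bigl(\{\eta(u)\}\bigr) \;=\; \bigcap_{(m_1,m_2) \in M \times M} Z^{u}_{m_1,m_2},
\]
where $Z^{u}_{m_1,m_2}$ is either the quotient $x_{m_1}^{-1} L\, x_{m_2}^{-1}$ or its complement, depending on whether $x_{m_1} u x_{m_2}$ belongs to $L$. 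Closure under quotients puts each $x_{m_1}^{-1} L\, x_{m_2}^{-1}$ into $\calV$, closure under Boolean operations handles the complements and the (finite) intersection, and a final finite union over $m \in P$ delivers $\eta^{-1}(P) \in \calV$.

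For the second step, let $L' \subseteq \Gamma^*$ be recognized by $M$, say $L' = \mu^{-1}(Q)$ for some morphism $\mu : \Gamma^* \to M$ and some $Q \subseteq M$. Since $\eta$ is surjective, for each letter $a \in \Gamma$ I can pick a preimage $\nu(a) \in \Sigma^*$ with $\eta(\nu(a)) = \mu(a)$ and extend $\nu$ letter-wise to a monoid morphism $\nu : \Gamma^* \to \Sigma^*$. Then $\mu = \eta \circ \nu$ by construction, so $L' = \nu^{-1}(\eta^{-1}(Q))$. The first step gives $\eta^{-1}(Q) \in \calV$, and closure under inverse morphisms yields $L' \in \calV$.

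The main obstacle is the first step: one must justify that the \emph{infinite} quantification in the definition of $\sim_L$ can be replaced by the \emph{finite} intersection indexed by $M \times M$, and that the resulting description is independent of the choice of representatives $x_m$. Both points rest on the fact that $\eta$ factors through $\sim_L$, so that membership of $xuy$ in $L = \eta^{-1}(P)$ is invariant under replacing $x$, $y$ by any other preimages of $\eta(x)$ and $\eta(y)$ under $\eta$.
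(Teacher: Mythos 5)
Your proof is correct, and it is essentially the classical argument for this result: the paper itself gives no proof (it cites Eilenberg and Pin~\cite[Theorem XIII.4.10]{pin_book_2014}), and your two-step decomposition --- first expressing each $\sim_L$-class as a finite Boolean combination of two-sided quotients $x_{m_1}^{-1} L\, x_{m_2}^{-1}$ indexed by $M \times M$, then factoring an arbitrary recognizing morphism through the syntactic morphism via a letter-wise chosen section and invoking closure under inverse morphisms --- is exactly the standard textbook proof. The only point worth making explicit is that the paper defines closure under quotients only for single letters $\sigma \in \Sigma$, so the word quotients $x_{m_1}^{-1} L\, x_{m_2}^{-1}$ you use must be obtained by iterating letter quotients (using $(uv)^{-1}L = v^{-1}(u^{-1}L)$ and its right-sided analogue), which is routine.
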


This implies that the converse of~\cref{claim:monoid_to_language} holds for \UDynProp.

\begin{fact}
	\label{claim:language_to_monoid}
	Let \(\calC\) be a class of formulas such that the regular languages $L$ with \(\Member{L} \in \UDynC\) form a variety.
	Let \(L\) be a regular language and \(M\) its syntactic monoid:
	\[\Member{L} \in \UDynC \Rightarrow \Member{M} \in\UDynC. \]
\end{fact}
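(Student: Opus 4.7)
My plan is to lift the hypothesis on $L$ to every language recognized by $M$ via Eilenberg's theorem (\cref{thm:eilenberg}), and then combine the resulting dynamic programs into a single one for $\Member{M}$. Let $\calV$ denote the variety formed by the regular languages $L'$ with $\Member{L'} \in \UDynC$. By assumption, $L \in \calV$ and $M$ is its syntactic monoid.

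First, I would consider the canonical evaluation morphism $\pi \colon M^{*} \to M$ mapping a word $m_{1} \cdots m_{k}$ over the alphabet $M$ to the product $m_{1} \cdot m_{2} \cdots m_{k} \in M$. For every element $x \in M$ the language $L_{x} = \pi^{-1}(\{x\}) \subseteq M^{*}$ is recognized by $M$, via $\pi$ and the accepting subset $\{x\}$. Applying \cref{thm:eilenberg} to the variety $\calV$ and its member $L$ of syntactic monoid $M$, every language recognized by $M$ lies in $\calV$. In particular $L_{x} \in \calV$ for each $x \in M$, and hence there is a dynamic $\UDynC$ program $\Pi_{x}$ that maintains $\Member{L_{x}}$ using only unary auxiliary relations and update formulas in $\calC$.

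Then I would assemble a single $\UDynC$ program $\Pi$ for $\Member{M}$ as the parallel composition of the $\Pi_{x}$ for $x \in M$: the auxiliary schema is the disjoint union of the schemas of the $\Pi_{x}$ (renamed to avoid clashes), the input word over $M$ is shared across all components, and each auxiliary relation is updated by the formula inherited from its originating $\Pi_{x}$. The distinguished answer bit of $\Pi_{x}$ then indicates exactly whether the current input evaluates to $x$, so taken together these bits provide the $|M|$ dedicated bits required by the definition of $\Member{M}$. There is no real obstacle beyond the correct invocation of Eilenberg's theorem, which crucially uses the variety hypothesis on $\UDynC$; the parallel composition step preserves both the unary arity bound on the auxiliary relations and the quantifier structure of the update formulas, so $\Pi$ is genuinely in $\UDynC$.
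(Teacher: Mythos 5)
Your proposal is correct and follows essentially the same route as the paper: recognize each $L_x = \pi^{-1}(\{x\})$ by $M$ via the evaluation morphism, invoke Eilenberg's theorem (\cref{thm:eilenberg}) through the variety hypothesis to place each $L_x$ in $\UDynC$, and run the resulting programs in parallel to produce the $|M|$ answer bits. The paper's proof is just a terser version of the same argument, leaving the parallel-composition step implicit.
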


\begin{proof}
	Assume that \(\Member{L}\) is in \(\UDynC\) and that $M$ is its syntactic monoid. 
	For each $x \in M$, the monoid $M$ recognizes the language of all sequences of monoid elements that evaluate to $x$, so by Lemma~\ref{thm:eilenberg}, this language is also in  \(\UDynC\).
	A dynamic $\UDynC$ program for \(\Member{M}\) maintains these languages for each $x \in M$.
\end{proof}

The ``only if'' direction of Theorem~\ref{thm:udynprop:groups} follows: observe that the regular languages of \UDynProp form a variety, as \UDynProp is trivially closed under Boolean operations, but also closed under quotients and inverse morphisms due to~\cref{lem:inv_morphism_quotients}.
Therefore, by Fact~\ref{claim:language_to_monoid}, if \UDynProp can maintain a regular language, it can also maintain the membership problem of its corresponding syntactic monoid. But, by Lemma~\ref{lem:udynprop:nonongroups}, \UDynProp cannot maintain this problem for any monoid that is not a group.

\section{\texorpdfstring{The regular languages of \(\UDynSop\)}{The regular languages of UDynSigma-1+}}
\label{sec:udynsop}

In this section, we provide a characterization of the regular languages that can be maintained using positive $\Sigma_1$-formulas if only unary auxiliary relations are permitted. Here, positive $\Sigma_1$-formulas are $\exists^*$-formulas that may not use negations in the quantifier-free part, with the only exception that negations can be used directly in front of the built-in linear order $\leq$. %

An inspection of the proof of Theorem \ref{thm:udynprop:groups} shows that positive quantifier-free formulas and therefore in particular $\Sigma_1^+$-formulas can maintain all regular languages recognized by groups. It is also known that $\Sigma_1^+$-formulas can express all regular languages recognized by the class $\bJ^{+}$ of ordered monoids (see Section \ref{section:ordered-monoids} for a definition). It turns out that languages maintainable in $\UDynSop$ are exactly those recognizable by ordered monoids from the wreath product $\bJ^{+} \ast \bG$ of $\bJ^{+}$ and $\bG$. Intuitively, this class contains monoids that first do a ``group computation'' followed by a ``$\bJ^{+}$-computation'' (see Section \ref{section:ordered-monoids} for a formalization).   

\begin{theorem}[Restatement of \cref{thm:thm3}]\label{theorem:udynsigmaone:j-wreath-groups}
	For any regular language $L$:
	\[\text{$\Member{L}  \in \UDynSop$ if and only if the ordered syntactic monoid of $L$ is in $\bJ^{+} \ast \bG$.}\]

\end{theorem}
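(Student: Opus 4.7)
The plan mirrors the proof of Theorem~\ref{thm:udynprop:groups}. I would first prove the statement at the level of ordered monoids---showing $\Member{M}\in\UDynSop$ iff $M\in\bJ^{+}\ast\bG$---and then lift to regular languages via the ordered-monoid analogs of \cref{claim:monoid_to_language} and \cref{claim:language_to_monoid}. For the latter lift, one uses that the $\UDynSop$-maintainable regular languages form a \emph{positive variety} (closed under finite intersections, finite unions, quotients by \cref{lem:inv_morphism_quotients}, and inverse morphisms, but not under complementation) together with the ordered version of Eilenberg's theorem.

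For the ``if'' direction, any $M\in\bJ^{+}\ast\bG$ decomposes as a cascade of a group $G$ with some ordered monoid $T\in\bJ^{+}$. The dynamic program I propose maintains two layers in parallel. The group layer is handled by the quantifier-free program of Lemma~\ref{lem:groups_in_udynprop}, which is already positive upon inspection: the seemingly non-positive ingredient, the factorization $g=g_i^{-1}\cdot g_p^{-1}\cdot g_w\cdot g_s^{-1}\cdot g_j^{-1}$, is encoded as a positive disjunction of atoms about group elements stored in $Q, W, P, S$. On top of this, the $\bJ^{+}$-layer only needs to verify the presence of scattered patterns in the derived sequence of group-labels---which is definable by positive $\exists^{*}$-formulas over the unary relations maintained for the group layer, in the spirit of Simon's characterization of $\bJ$-languages as piecewise testable.

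For the ``only if'' direction, the key new ingredient is a monotone variant of the Substructure Lemma for $\UDynSop$. Since positive $\Sigma_{1}$-formulas are preserved under extensions of the non-order relations, one should obtain: if $(w,\calA)_{|I}$ embeds into $(w',\calA')_{|I'}$ via an order-preserving $\pi$ that preserves atoms, and $\alpha,\alpha'$ are $\pi$-respecting change sequences, then after applying the updates of any $\UDynSop$-program, ``$q$ is true in $\Pi_{\alpha}(w,\calA)$'' implies ``$q$ is true in $\Pi_{\alpha'}(w',\calA')$''. Combined with a Higman-style argument in the vein of the proof of \cref{lem:udynprop_in_group}, this monotonicity would refute maintainability for a well-chosen witness language whose ordered syntactic monoid lies outside $\bJ^{+}\ast\bG$---a language that one can ``make less true'' by passing to a larger, embedding-compatible instance.

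The main obstacle I anticipate is the passage from a single unmaintainability witness to all ordered monoids outside $\bJ^{+}\ast\bG$. In \cref{thm:udynprop:groups} this was clean because $U_1$ divides every non-group monoid; here one needs to identify the minimal forbidden ordered divisors of $\bJ^{+}\ast\bG$ and verify that the Higman-plus-monotone-Substructure argument refutes maintainability for each. A secondary subtlety is keeping the transition between monoids and \emph{ordered} monoids consistent throughout: the superscript in $\bJ^{+}$ records that the recognizing structure carries an order compatible with the multiplication, which is precisely what allows Eilenberg's correspondence for positive varieties and is essential for the wreath-product decomposition used in the ``if'' direction.
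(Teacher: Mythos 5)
Your architecture is the same as the paper's: prove the statement at the level of ordered monoids, lift via the positive-variety/ordered-Eilenberg machinery, handle the upper bound by layering a positive $\Sigma_1^+$ evaluation of the $\bJ^{+}$ component on top of the (already positive) group program of \cref{lem:groups_in_udynprop}, and handle the lower bound by a type-monotone variant of the Substructure Lemma combined with Higman's lemma. All of that matches the paper, including your observation that the group program's inverse-based factorization is a positive disjunction rather than a genuine negation.

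However, the step you flag as ``the main obstacle I anticipate'' is a genuine gap, and it is precisely the step that carries the weight of the only-if direction. You never identify the witness language, and more importantly you never explain why a \emph{single} witness suffices for \emph{every} ordered monoid outside $\bJ^{+}\ast\bG$. The missing idea is the Pin--Weil equational characterization $\bJ^{+}\ast\bG = \bE\bJ^{+}$ (ordered monoids satisfying $1\leq e$ for every idempotent $e$). With it, any $(M,\leq)\notin\bJ^{+}\ast\bG$ has an idempotent $e$ with $1\not\leq e$, and the two-element submonoid $\{1,e\}$ is either exactly $U_{1}^{-}$ or carries the trivial order, of which $U_{1}^{-}$ is a quotient; hence $U_{1}^{-}$ divides every ordered monoid outside the class, and there is exactly one minimal forbidden divisor to refute. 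Since $U_{1}^{-}$ is the ordered syntactic monoid of $\regex{b^{*}}$, the single concrete lower bound needed is that $\regex{b^{*}}\notin\UDynSop$: initialize $\regex{a}^{n}$ and $\regex{a}^{m}$, extract an order-preserving embedding by Higman, set all $n$ embedded positions to $\regex{b}$; the small word lands in $\regex{b^{*}}$, so by monotonicity of the answer bit the program must also accept the large word, which still contains an $\regex{a}$. Without the $\bE\bJ^{+}$ characterization, ``identify the minimal forbidden ordered divisors and verify the argument for each'' is not a proof plan but a restatement of the problem, so the only-if direction remains open in your write-up.
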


We will show, in Section \ref{section:udynsigmaone:upper}, that every regular language recognized with $\bJ^{+} \ast \bG$ is in $\UDynSop$. Then, in Section \ref{section:udynsigmaone:lower}, we prove that other regular languages cannot be maintained. To this end, we first  show  that the language $\regex{b^*}$ cannot be maintained, using an adaptation of the substructure lemma to $\UDynSop$. Finally, relying on a characterization of $\bJ^{+} \ast \bG$, we show that if another language could be maintained, then so could the language $\regex{b^*}$. We therefore look at monoids supplemented with additional ordering information.

\subsection{Ordered monoids and wreath products}
\label{section:ordered-monoids}

For the characterization obtained in Section \ref{sec:udynprop}, we exploited that the class of languages maintainable in $\UDynProp$ is a variety and can therefore be studied by looking at the syntactic monoids of its languages. Unfortunately, the class \(\UDynSop\) of languages is not a variety, as it is not closed under complementation: we will later see that the language \(\regex{(a+b)^{*}a(a+b)^{*}}\) can be maintained,  but its complement \(\regex{b^{*}}\) cannot. As both languages are recognized by the same monoid $U_1 = \{1, a\}$, the information provided by syntactic monoids alone is not sufficient for understanding $\UDynSop$.

\subparagraph*{Ordered monoids} An algebraic theory for classes of regular languages satisfying all the requirements of a variety but complementation has been developed by Pin~\cite{pin_variety_1995}. In this theory, syntactic monoids are supplemented by an additional order. An \emph{ordered monoid} is a pair \((M,\leq)\) where \(M\) is a monoid and \(\leq\) is a partial order on \(M\) compatible with the operation, that is, \(xx'\leq yy'\) whenever \(x\leq y\) and \(x'\leq y'\). We write $\up{x}$ for the set of all elements $y$ with $y \geq x$.  We call \(P \subseteq M\) an \emph{upset} if it is closed under $\up$, that is, if $\up{x} \subseteq P$ for all $x \in P$. A \emph{morphism} from \((M,\leq)\) to \((N,\leq)\) is a monoid morphism from \(M\) to \(N\) such that for all \(x,y\in M\), \(x\leq y\) implies \(\varphi(x)\leq \varphi(y)\). The notions \emph{quotient}, \emph{submonoid} and \emph{division} for monoids can be transferred to ordered monoids using this new notion of morphism. 

A language \(L\) is \emph{recognized} by \((M,\leq)\) if there is a (monoid) morphism \(\varphi:\Sigma^{*}\rightarrow M\) and an upset \(P\) such that \(L=\varphi^{-1}(P)\).
The \emph{syntactic ordered monoid} \((M,\leq)\) for a regular language \(L\) consists of the transition monoid $M$ of the minimal automaton and an order $\leq$ such that $\delta_1 \leq \delta_2$ for $\delta_1, \delta_2 \in M$ if for some $\delta_3 \in M$ and any state $q$ of the minimal automaton it holds that if $\delta_3(\delta_1(q))$ is an accepting state, so is $\delta_3(\delta_2(q))$.

\begin{example}
  Recall that \(U_{1} = \{1,a\}\) is the syntactic monoid of \(\regex{(a+b)^{*}a(a+b)^{*}}\), and therefore also of its complement \(\regex{b^{*}}\).
  However, these languages can be distinguished by their syntactic ordered monoids.
  Let \(U_{1}^{+}\) (resp. \(U_{1}^{-}\)) be the ordered monoid \(U_{1}\) equipped with the order \(1\leq a\) (resp. \(1\geq a\)).
  Then \(U_{1}^{+}\) is the syntactic ordered monoid of \(\regex{(a+b)^{*}a(a+b)^{*}}\), whereas \(U_{1}^{-}\) is the syntactic ordered monoid of \(\regex{b^{*}}\).
  Note that \(\regex{b^*}\) cannot be recognized by \(U_{1}^{+}\) because it is the inverse image of \(1\), which is not an upset.
\end{example}

A prominent example of a class of languages defined via ordered monoids is \(\bJ^{+}\).
\begin{definition}
We denote by \(\bJ^{+}\) the class of all languages whose syntactic ordered monoid has the property that \(1\leq x\) for every \(x\in M\).
\end{definition}
It is known\footnote{In that reference, all orders are reversed. In particular, recognition is defined in terms of \emph{downsets}.}~\cite[Theorem~3.4]{diekert_logic_2008} that this class coincides with the class of regular languages expressible by \(\Sigma_{1}\)-formulas. A simple inspection of the proof gives that all formulas are furthermore \(\Sigma_{1}^{+}\)-formulas.

\subparagraph*{Ordered wreath products} A successful application of wreath products has been to decompose languages into simpler ones, e.g. in the celebrated decomposition theorem of Krohn and Rhodes~\cite{krohn_rhodes_theorem_1965}. One intuition for these products comes from the cascading of finite state automata. Suppose that \(\calA_{1}\) is a deterministic finite state automaton over \(\Sigma\) with state set \(Q\), and \(\calA_{2}\) is a deterministic finite state automaton over \(Q\times \Sigma\). On input \(w\), their cascade product \(\calA_{2} \circ \calA_{1}\) first annotates \(w\) with the states reached in its run in \(\calA_{1}\). Then the resulting enhanced word is fed into \(\calA_{2}\) to check for acceptance.
 
 The algebraic counterpart of this cascade product is the semidirect product of monoids. Here, we use the version for ordered monoids introduced by Pin and Weil~\cite{pin_semidirect_02}. Let \((M,\leq)\) and \((N,\leq)\) be two ordered monoids. To ease notation, we make the common notational convention to denote the operation of \(M\) additively.\footnote{Note that it does not mean that \(M\) is commutative.} The semidirect product of  \((M,\leq)\) and \((N,\leq)\) is defined with respect to a given left action. A \emph{left action} $\cdot$ of \((N,\leq)\) on \((M,\leq)\) is a map \((y,x)\mapsto y\cdot x\) from \(N\times M\) to \(M\) such that for every \(x,x_{1}x_{2}\in M\) and \(y,y_{1},y_{2}\in N\) satisfies the axioms (1) \((y_{1}y_{2})\cdot x = y_{1} \cdot (y_{2}\cdot x)\), (2) \(y\cdot (x_{1}+x_{2}) = y\cdot x_{1} + y\cdot x_{2}\), (3) \(1\cdot x = x\), (4) \(y\cdot 1=1\), (5) if \(x_{1}\leq x_{2}\) then \(y\cdot x_{1}\leq y\cdot x_{2}\); and (6) if \(y_{1}\leq y_{2}\) then \(y_{1}\cdot x\leq y_{2}\cdot x\).

 The \emph{semidirect product} \((M,\leq) \ast (N,\leq)\) of \((M,\leq)\) and \((N,\leq)\) with respect to the left action~$\cdot$ is the ordered monoid on \(M\times N\) defined
 by the operation \((x_{1},y_{1})(x_{2},y_{2}) = (x_{1} + y_{1}\cdot x_{2}, y_{1}y_{2})\) for any \(x_{1},x_{2}\in M\) and \(y_{1},y_{2}\in N\).
 The order on the product is defined componentwise, that is \((x_{1},y_{1})\leq (x_{2},y_{2})\) if and only if \(x_{1}\leq x_{2}\) and \(y_{1}\leq y_{2}\).

Unfolding the definition for a longer product sheds light on the connection between the semidirect product on monoids and the cascaded product on automata. For \(x_{1},\ldots,x_{n}\in M\) and \(y_{1},\ldots,y_{n}\in N\), the product in \(M\ast N\) is (by using the axioms defining a left action):
\begin{equation}
  \label{eq:unfold_wreath}
 (x_{1},y_{1})\cdots (x_{n},y_{n}) = (x_{1} + y_{1}\cdot x_{2} + (y_{1}y_{2})\cdot x_{3} + \cdots + (y_{1}\cdots y_{n-1})\cdot x_{n}  , y_{1}\cdots y_{n} )
\end{equation}

Indeed, in a cascade product, the transition function when reading a letter in the first automaton changes according to the transition function realized so far by the second automaton.
When reading the \(i^{\text{th}}\) letter, the annotation is determined by \(y_{1}\cdots y_{i-1}\) and the action \((y_{1}\cdots y_{i-1})\cdot x_{i}\) gives the transition function that is used.

The \emph{wreath product} \(\bV \ast \bW\) of two classes \(\bV\) and \(\bW\) of ordered monoids contains all ordered monoids that divide a semidirect product of the form \((M,\leq)\ast (N,\leq)\) for \((M,\leq)\in \bV\) and \((N,\leq)\in \bW\).

\begin{remark}
  We chose this definition of wreath products for simplicity.
  While it already exists in the literature, a most standard choice for \(\bV \ast \bW\) would have been to take all ordered monoids dividing \(M^{N}\times N\) endowed with some specific order and operation, when \((M,\leq)\in \bV\) and \((N,\leq)\in \bW\).
  In fact, both definitions are equivalent when both \(\bV\) and \(\bW\) are closed under direct products.
  This is the case for the classes under study here, that is for \(\bJ^{+}\) and \(\bG\).
  Furthermore, the only result on wreath products that is used out-of-the-shelf is Theorem 4.11 in~\cite{pin_semidirect_02} (stated as~\cref{thm:EJp} here).
  Therein, the equivalence between all definitions is proved (see their Proposition 3.5, our definition is (1) and the alternative one is (2)).
\end{remark}

\subsection{\texorpdfstring{Maintaining regular languages in $\bJ^{+} \ast \bG$ with positive $\Sigma_1$-formulas}{Maintaining regular languages in J+ * G with positive Sigma-1-formulas}}
\label{section:udynsigmaone:upper}

We will now show that unary auxiliary relations and positive $\Sigma_1$-formulas are sufficient to maintain the evaluation problems of ordered monoids in $\bJ^{+} \ast \bG$. The upper bound from Theorem \ref{theorem:udynsigmaone:j-wreath-groups} then follows from the following \cref{claim:ordered_monoid_to_language} which is the analogon of \cref{claim:monoid_to_language} for ordered monoids.
For an ordered monoid \((M,\leq)\),  define \(\Member{M,\leq}\) to be the problem that, for a word $w$ subject to changes, maintain a dedicated bit \(q_{x}\) that holds \(1\) whenever $w \in \up{x}$, for every \(x\in M\).

\begin{restatable}{fact}{orderedMonoidLanguages}
  \label{claim:ordered_monoid_to_language}
  Let \(\calC\) be a class of formulas closed under \(\lor\). Further let \(L\) be a regular language and \((M,\leq)\) its syntactic ordered monoid. Then: 
  \[\Member{M,\leq } \in \UDynC \Rightarrow \Member{L} \in\UDynC. \]
\end{restatable}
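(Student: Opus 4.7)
The plan is to adapt the argument behind \cref{claim:monoid_to_language} to the ordered setting. Since $(M,\leq)$ is the syntactic ordered monoid of $L$, there exists a morphism $\varphi:\Sigma^{*}\to M$ and an upset $P\subseteq M$ with $L=\varphi^{-1}(P)$. The strategy is to read each $\Sigma$-letter $\sigma$ as the monoid element $\varphi(\sigma)\in M$, run the $\UDynC$ program maintaining $\Member{M,\leq}$ on this reinterpreted word, and read off the membership bit for $L$ from the bits $q_{x}$ with $x\in P$.

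Concretely, let $\prog_M$ be a $\UDynC$ program for $\Member{M,\leq}$ with unary auxiliary schema and update formulas $\varphi^{R}_{m}$ for $m\in M\cup\{\epsilon\}$. I build a program $\prog_L$ for $\Sigma$-words over the same auxiliary schema. For a change $\set_{\sigma}(i)$ with $\sigma\in\Sigma$, the program $\prog_L$ applies the formula $\varphi^{R}_{\varphi(\sigma)}$ of $\prog_M$ after replacing every input atom of the form $W_{m}(j)$ (for $m\in M$, $m\neq 1$) by the disjunction $\bigvee_{\sigma'\in\Sigma,\ \varphi(\sigma')=m}W_{\sigma'}(j)$. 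Changes of the form $\set_{\epsilon}(i)$ are translated via the $\set_{\epsilon}$-formulas of $\prog_M$, consistently with the paper's identification of $\epsilon$ with the neutral element $1$. The $\FO$ initialization formula is translated by the same substitution. Finally, the distinguished membership bit for $L$ is defined as $q \liff \bigvee_{x\in P} q_{x}$, a disjunction of bits already maintained by $\prog_M$.

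Correctness then follows from a straightforward induction on the length of the change sequence: the auxiliary state maintained by $\prog_L$ on any $\Sigma$-word $w=w_{1}\cdots w_{n}$ coincides with the state $\prog_M$ would maintain on the $M$-word $\varphi(w_{1})\cdots\varphi(w_{n})$, since the substitution ensures that both programs evaluate the relevant atoms to the same truth values. Hence $q_{x}$ is true under $\prog_L$ iff $\varphi(w)\in\up{x}$, and because $P$ is an upset, $w\in L$ iff $\varphi(w)\in P$ iff some $x\in P$ satisfies $\varphi(w)\geq x$ iff $\bigvee_{x\in P}q_{x}$ is true, which is exactly $q$.

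There is no deep obstacle here; the only point of care is that the two places where new connectives are introduced, namely the translation of input atoms and the aggregation $\bigvee_{x\in P}q_{x}$, both introduce only disjunctions. Hence the updated formulas remain in $\calC$ precisely because $\calC$ is assumed closed under $\lor$, and no new quantifiers or negations are added, so no ordering of the prenex block is disturbed. The auxiliary relations stay unary, and $\prog_L$ is a $\UDynC$ program witnessing $\Member{L}\in\UDynC$.
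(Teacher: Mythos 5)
Your proposal is correct and follows essentially the same route as the paper: run the program for $\Member{M,\leq}$ on the word reinterpreted through the syntactic morphism and define the answer bit as the disjunction of the update formulas for the $q_{x}$ with $x\in P$, using that the upset $P$ equals $\bigcup_{x\in P}\up{x}$. The only difference is that you spell out the letter-to-monoid-element translation via atom substitution, which the paper's (terser) proof leaves implicit.
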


\begin{lemma}\label{lemma:udynsigmaone:j-wreath-groups:upper}
  If \((M,\le) \in \bJ^{+}\ast \bG\), then \(\Member{M,\leq}\) is in \(\UDynSop\).
\end{lemma}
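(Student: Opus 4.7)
The plan is to combine the group-layer maintenance from Lemma~\ref{lem:groups_in_udynprop} with the $\Sigma_{1}^{+}$-definability of $\bJ^{+}$-languages. First, I would reduce to the case where $(M,\leq)$ is a single semidirect product $(M_{1},\leq_{1}) \ast (G,=)$ with $(M_{1},\leq_{1})\in \bJ^{+}$ and $G$ a group: by definition of $\bJ^{+}\ast\bG$ the monoid $(M,\leq)$ divides such a product, and an ordered-monoid analogue of \cref{fact:closure_division} transfers maintainability of $\Member{(M_{1},\leq_{1}) \ast (G,=)}$ to $\Member{M,\leq}$. Writing each input letter as $w_{i}=(x_{i},y_{i})$, equation~\eqref{eq:unfold_wreath} gives that the value of $w$ is $(z,y^{\text{tot}})$ with $z=\sum_{i}(y_{1}\cdots y_{i-1})\cdot x_{i}$ and $y^{\text{tot}}=y_{1}\cdots y_{n}$.

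For the group layer, I would project the input onto its $G$-component and apply Lemma~\ref{lem:groups_in_udynprop} to maintain unary relations $P_{g}(i)$ (recording $y_{1}\cdots y_{i-1}=g$), symmetric $S_{g}(i)$, and the bit for $y^{\text{tot}}$; these updates are quantifier-free, a fortiori $\Sigma_{1}^{+}$. For the $M_{1}$-layer, the key observation is that $z$ is the value in $M_{1}$ of the ``virtual'' word $w' \in M_{1}^{*}$ with $w'_{i}=(y_{1}\cdots y_{i-1})\cdot x_{i}$. Since $(M_{1},\leq_{1})\in \bJ^{+}$ and $\bJ^{+}$ coincides with the $\Sigma_{1}^{+}$-definable languages, there is for every $a\in M_{1}$ a $\Sigma_{1}^{+}$-sentence $\Phi_{a}$ over $M_{1}^{*}$ expressing ``the value is $\geq a$''. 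To port $\Phi_{a}$ to our setting, each atom ``label at position $i$ equals $m$'' is rewritten as $\bigvee_{g\cdot x = m,\, y\in G} P_{g}(i)\wedge W_{(x,y)}(i)$, a positive disjunction over unary atoms; substituting preserves $\Sigma_{1}^{+}$.

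The bits $q_{(a,b)}$ are thus updated by the conjunction of a quantifier-free condition encoding $y^{\text{new,tot}}=b$ (computable from $P_{g}$, $S_{g}$, the change and group constants) and the substituted $\Phi_{a}$ evaluated on the post-change virtual word. The only delicate point is expressing ``the post-change $M_{1}$-label at position $i$ equals $m$'' using pre-change unary data plus the change $\set_{(x,y)}(p)$: for $i>p$ the required group prefix is $g_{1}\cdot y\cdot(g_{1}\cdot y^{\text{old}}_{p})^{-1}\cdot g_{i}$, with $g_{1}$ read from some $P_{\cdot}(p)$, $y^{\text{old}}_{p}$ from some $W_{\cdot}(p)$, and $g_{i}$ from some $P_{\cdot}(i)$. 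Since inverses and products in $G$ are concrete constants, for each target value $m$ there are only finitely many consistent atomic tuples, which can be listed as a positive disjunction over atoms $P_{\cdot}(p)\wedge W_{\cdot}(p)\wedge P_{\cdot}(i)\wedge W_{\cdot}(i)$. Case-splitting on $i<p$, $i=p$, $i>p$ involves only negations in front of $\leq$, which are permitted in $\Sigma_{1}^{+}$. A final application of \cref{claim:ordered_monoid_to_language} transfers the result from $\Member{M,\leq}$ to every regular language recognized by $(M,\leq)$.

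I expect the main obstacle to be the bookkeeping in the update step: one must verify that every piece of information entering $\Phi_{a}$ after a change is expressible positively in $P_{g}$ and $W_{(x,y)}$, with negations restricted to $\leq$. The group structure on the second coordinate is essential precisely here --- the availability of inverses in $G$ is what allows the post-change prefix products to be reconstructed from pre-change unary auxiliaries \emph{positively and without quantifier alternation}; without this, one would be forced into $\Pi_{1}$-style conditions or negated atoms on $P$, breaking either the quantifier prefix shape or the positivity required by $\UDynSop$.
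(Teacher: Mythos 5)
Your proposal is correct and follows essentially the same route as the paper: reduce via (ordered) division to a single semidirect product, maintain the group coordinate with the positive quantifier-free program of Lemma~\ref{lem:groups_in_udynprop}, view the $\bJ^{+}$-coordinate as a virtual word $w'_i=(y_1\cdots y_{i-1})\cdot x_i$ via Equation~\eqref{eq:unfold_wreath}, and evaluate it with the $\Sigma_1^{+}$-sentences provided by the $\bJ^{+}=\Sigma_1^{+}$ correspondence. Your explicit bookkeeping for the post-change group prefixes is a more detailed version of what the paper's sketch leaves implicit, and it is sound.
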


\begin{proofsketch}
  We show how to maintain \(\Member{M,\leq}\) when \((M,\leq)= (J,\leq) \ast (G,\leq)\) for a fixed left action, where \((J,\leq)\in \bJ^{+}\) and \(G\) is a group.
  The result then follows from~\cref{fact:closure_division}, that can be easily extended to ordered monoids.
  
  Let \(w=(x_{1},g_{1})\cdots(x_{n},g_{n})\) be the input word, where \(x_{i}\in J\) and \(g_{i}\in G\) for all $i \leq n$.
  Let $P = \up{(p,q)} = (\up{p}, \up{q})$ be an upset of \((M,\leq)\), where \(p\in J\) and \(q\in G\).
  We describe how to maintain whether \(w\) evaluates in \(P\).
  
  By~\cref{lem:groups_in_udynprop}, we can maintain all evaluations \(g_{1}\cdots g_{i-1}\) of strict prefixes of the projection of $w$ to the component from $G$ in \UDynProp, as well as the information whether $g_1 \cdots g_n$ evaluates in $\up{q}$. 
  The corresponding update formulas are actually positive: the original update formulas in \cite{GeladeMS12} are positive and the adaptations of the proof of~\cref{lem:groups_in_udynprop} do not introduce negations. 
  This allows to maintain the values \((g_{1}\cdots g_{i-1})\cdot x_{i} \in J \) using positive quantifier-free formulas.
  
  Using the Equation~\ref{eq:unfold_wreath}, we still need to determine whether \(x_{1} + g_{1}\cdot x_{2} + (g_{1}g_{2})\cdot x_{3} + \cdots + (g_{1}\cdots g_{n-1})\cdot x_{n}\) evaluates in \(\uparrow p\).
  As the membership problem of $(J, \leq)$ can be expressed by a \(\Sigma_{1}^{+}\) formula ~\cite[Theorem~3]{diekert_logic_2008}, \(\Sigma_{1}^{+}\) update formulas can perform that evaluation.
\end{proofsketch}

\subsection{\texorpdfstring{Lower bound for regular languages not in $\bJ^{+} \ast \bG$}{Lower bound for regular languages not in J+ * G}} 
\label{section:udynsigmaone:lower}

We want to show that Lemma \ref{lemma:udynsigmaone:j-wreath-groups:upper} is optimal and any language whose ordered syntactic monoid is not in $\bJ^{+} \ast \bG$ cannot be maintained in $\UDynSop$. The proof is in two steps. First, we show that the regular language $\regex{b^*}$ cannot be maintained in $\UDynSop$. We then introduce a characterization of $\bJ^{+} \ast \bG$ that implies that if a language not recognized by $\bJ^{+} \ast \bG$ could be maintained, then so could the language $\regex{b^*}$. 

\subsubsection{\texorpdfstring{A regular language that is not in $\UDynSop$}{A regular language that is not in UDynSigma-1+}}

We show that the language $b^*$ is not in $\UDynSop$ by generalizing the substructure lemma from $\DynProp$ to $\UDynSop$ and then applying it to $b^*$.

The substructure lemma for $\UDynProp$ relies on the inability of quantifier-free formulas to access elements outside the isomorphic substructures. We adopt the lemma to allow existential quantification in update formulas by weakening the requirement on the substructures to be isomorphic. To this end,  let $(w, \mathcal A), (w', \mathcal A')$ be words of length $n \leq m$ respectively annotated with sets of unary auxiliary relations $ \mathcal A,  \mathcal A'$.
For \(i\in[n]\), the \emph{type} of $(w, \mathcal A)_{|i}$ consists of \(w_{i}\) and of the set of all relations of \(\mathcal A\) containing \(i\).
Let $\pi$ be a mapping from $[n]$ to $[m]$. We say that $\pi$ is \emph{type-monotonic} if the type of $(w, \mathcal A)_{|i}$ is a subset of the type of $(w', \mathcal A')_{|\pi(i)}$ for all $i \in [n]$, i.e. if whenever $R{^\mathcal A}(i)$ holds then so does $R{^{\mathcal{A'}}}(\pi(i))$.
In this case, we say that $(w,\calA)$ and $(w',\calA')_{|\pi([n])}$ are type-monotonic via \(\pi\).

\begin{lemma}[Substructure lemma for \(\UDynSop\)]
  \label{lem:extended_substructure}

  	Let $\prog$ be a dynamic $\UDynSop$-program and let $w\in \Sigma^{n}$ and $w'\in \Sigma^{m}$ be two words over the alphabet $\Sigma$. Further, let $\calA$ and $\calA'$ be sets of at most unary auxiliary relations over the domain of $w$ and $w'$ and the schema of $\prog$.
  	
	Assume there is \(I\subseteq [m]\) such that $(w,\calA)$ and $(w',\calA')_{|I}$ are type-monotonic via some order-preserving mapping \(\pi\).
	Then $\pi$ is also a type-monotonic mapping from \(\Pi_{\alpha}(w,\calA)\) to \(\Pi_{\alpha'}(w',\calA')_{|I}\), for any \(\pi\)-respecting sequences \(\alpha\) and \(\alpha'\) of changes.

\end{lemma}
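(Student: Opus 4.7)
I will proceed by induction on the common length of the change sequences, reducing to the inductive step for a single $\pi$-respecting pair of changes $\set_\sigma(j)$ and $\set_\sigma(\pi(j))$. Since the auxiliary update formulas are evaluated on the (new-input, old-auxiliary) structure, I first note that the underlying words remain type-monotonic via $\pi$ after the change: positions $i \neq j$ keep their letters on both sides, and at $i = j$ both letters become $\sigma$, so the letter-part of the type is preserved on both sides.

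For the unary auxiliary relations, fix some $R$ in the schema of $\prog$ and some $i \in [n]$. I want to show that if $R$ holds at $i$ in the post-change $(w, \calA)$, then $R$ holds at $\pi(i)$ in the post-change $(w', \calA')$. By the program semantics this amounts to showing: if $\varphi_\sigma^R(i; j)$ holds in the new-input, old-auxiliary structure built from $w, \calA$, then $\varphi_\sigma^R(\pi(i); \pi(j))$ holds in the corresponding structure built from $w', \calA'$. Writing $\varphi_\sigma^R$ as $\exists \bar z\, \psi(\bar z, x, y)$ with $\psi$ quantifier-free and positive (negations allowed only directly in front of $\leq$), I take witnesses $\bar a$ in $[n]$ and verify that $\psi(\pi(\bar a), \pi(i), \pi(j))$ holds on the $w'$-side. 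This is done atom by atom: positive atoms $R(z)$ and $W_\tau(z)$ transfer via type-monotonicity of the updated words; order atoms $z \leq z'$ and their negations $\neg(z \leq z')$ transfer because $\pi$ is strictly order-preserving and hence injective; equalities and disequalities transfer by the same injectivity.

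The main obstacle, and the place where positivity is essential, is the treatment of the auxiliary and letter atoms: without positivity one would also have to transfer the \emph{absence} of some $R$ or $W_\tau$ at $\pi(a)$, which is precisely what fails when isomorphism is weakened to type-monotonicity. A minor point of care is the treatment of $\epsilon$-labeled positions on the $w$-side, whose letter-type is empty: they are trivially subset-below any letter-type on the $w'$-side, and they cannot appear under a positive $W_\tau$-atom in any satisfying assignment, so they cause no trouble in the atom-by-atom transfer. With the inductive step established, iterating over the change sequences $\alpha$ and $\alpha'$ yields the lemma.
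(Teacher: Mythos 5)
Your proof is correct and follows essentially the same route as the paper's: reduce to a single $\pi$-respecting change by induction, then transfer a satisfying assignment of the $\Sigma_1^+$ update formula by mapping the existential witnesses through $\pi$, using positivity for the relational atoms and strict order-preservation for the (possibly negated) order atoms. The paper's version is terser — it leaves the atom-by-atom verification and the $\epsilon$-position edge case implicit — but the underlying argument is identical.
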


\begin{proof}
  We show the statement for a single change \(\alpha=(\set_{\sigma}(i))\) and \(\alpha'=(\set_{\sigma}(\pi(i))\) with \(i\in [n]\); the general statement follows by induction on the length of change sequences.

	For showing that \(\Pi_{\alpha}(w,\calA)\) and \(\Pi_{\alpha'}(w',\calA')_{|I}\) are type-monotonic via $\pi$, consider a unary relation symbol $R$ updated by formula \(\varphi(x;y)\df \exists z_1 \dots \exists z_k \psi \in\Sigma_{1}^{+} \) when a change to \(\sigma\) is made. Suppose $\varphi(j;i)$ evaluates to true in $(w,\calA)$ by choosing $z_1, \dots, z_k$ as $j_1,  \ldots, j_k$. Then $\varphi(\pi(j);\pi(i))$ evaluates to true in $(w', \calA')$ by choosing  $z_1, \dots, z_k$ as $\pi(j_1),  \ldots, \pi(j_k)$, because $(w,\calA)$ and $(w',\calA')_{|I}$ are type-monotonic via \(\pi\) and  $\psi$ only contains positive atoms.
\end{proof}

To conclude, we apply Higman's lemma in a similar fashion as in~\cref{sec:udynprop}.

\begin{lemma}\label{lemma:b_star_lower_bound}
  The language \(\regex{b^*}\) is not in \(\UDynSop\).
\end{lemma}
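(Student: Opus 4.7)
The plan is to mirror the lower bound argument of Section~\ref{sec:udynprop} but use the extended substructure lemma for $\UDynSop$ in place of the quantifier-free version. Assume towards a contradiction that a $\UDynSop$-program $\prog$ maintains $\Member{\regex{b^*}}$. I would build the sequence $(w_i, \calA_i)_{i \geq 1}$ where $w_i = \regex{a}^i$ and $\calA_i$ is obtained by running $\prog$ on the empty input word of length $i$ followed by the changes $\set_{a}(1), \ldots, \set_{a}(i)$. Since $w_i \notin \regex{b^*}$ for $i \geq 1$, the answer bit $q$ is false in each $(w_i, \calA_i)$.

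Next, I would invoke Higman's lemma (Lemma~\ref{lem:higman}) applied to this sequence to obtain integers $n < m$, a subset $I \subseteq [m]$, and an order-preserving mapping $\pi : [n] \to I$ such that $\pi$ is an order-preserving isomorphism from $(w_n, \calA_n)$ onto $(w_m, \calA_m)_{|I}$. In particular, these two structures are type-monotonic via $\pi$, so Lemma~\ref{lem:extended_substructure} applies.

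Then I would apply the $\pi$-respecting change sequences $\alpha = \set_b(1) \cdots \set_b(n)$ and $\alpha' = \set_b(\pi(1)) \cdots \set_b(\pi(n))$. After $\alpha$, the smaller word is $\regex{b}^n \in \regex{b^*}$, so the answer bit $q$ must be true in $\prog_\alpha(w_n, \calA_n)$. After $\alpha'$, the larger word has a $\regex{b}$ at each position of $\pi([n])$ and is still $\regex{a}$ at every position of $[m] \setminus \pi([n])$; because $n < m$ there is at least one such position, so the resulting word contains an $\regex{a}$ and $q$ must be false in $\prog_{\alpha'}(w_m, \calA_m)$. This contradicts the extended substructure lemma, which guarantees that the resulting structures remain type-monotonic via $\pi$, so any positive existential update formula (including the one defining $q$) that holds in the smaller structure also holds in the larger.

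The subtlety I would take care of is the direction of the propagation: type-monotonicity gives a one-way implication (smaller $\Rightarrow$ larger), which forces the language witnessing the lower bound to be one where membership is \emph{destroyed} by introducing more letters, exactly as happens for $\regex{b^*}$ and explains why its complement $\regex{(a+b)^* a (a+b)^*}$ is maintainable (cf.\ the upper bound in Section~\ref{section:udynsigmaone:upper}). The only mildly technical point is that $q$ is a $0$-ary relation rather than a unary one, but the witness-mapping argument in the proof of Lemma~\ref{lem:extended_substructure} applies verbatim to formulas with no free variable other than the update parameter $y$, so $q$ becoming true in the smaller structure forces $q$ to become true in the larger.
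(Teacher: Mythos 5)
Your proposal is correct and follows essentially the same route as the paper: the same sequence $(w_i,\calA_i)$ with $w_i=\regex{a}^i$, the same application of Higman's lemma to extract a type-monotonic order-preserving embedding, the same $\pi$-respecting change sequences $\set_b(1)\cdots\set_b(n)$ and $\set_b(\pi(1))\cdots\set_b(\pi(n))$, and the same one-directional contradiction via the extended substructure lemma. Your added remarks on the direction of the monotonicity (which is why $\regex{b^*}$ rather than its complement is the right witness) and on the $0$-ary answer bit are accurate elaborations of points the paper leaves implicit.
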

\begin{proof}
	Assume towards a contradiction that there is a dynamic \(\UDynSop\)-program $\prog$ that maintains  \(\regex{b^*}\).
	We consider the sequence $(w_i, \calA_i)_{i \geq 1}$, where $w_i$ is the word $\regex{a}^i$ and $\calA_i$ are the auxiliary relations that $\prog$ obtains when starting from an empty input word of size $i$ and setting all positions to $\regex{a}$.
	By~\cref{lem:higman}, there are \(n<m\), \(I\subseteq [m]\) and \(\pi\) such that $\pi$ is an order-preserving isomorphism from $(w_n, \calA_n)$ to $(w_m, \calA_m)_{|I}$. This implies, in particular, that $\pi$ is type-monotonic on $(w_n, \calA_n)$ and $(w_m, \calA_m)_{|I}$.

	The sequences $\alpha = \set_{b}(1) \cdots \set_{b}(n)$ and $\alpha' = \set_{b}(\pi(1)) \cdots \set_{b}(\pi(n))$ are $\pi$ respecting, so according to Lemma~\ref{lem:extended_substructure}, the structures $\Pi_{\alpha}(w_n, \calA_n)$ and $\Pi_{\alpha'}(w_m, \calA_m)_{|I}$ are also type-monotonic. But then, because the first word belongs to the language after applying the changes, the program must also say so for the second word after applying the changes (as it must be monotonic on the answer bit). This is a contradiction, because the second word does not belong to the language after the changes.
\end{proof}

\subsubsection{\texorpdfstring{$\UDynSop$ cannot maintain any regular non-$\bJ^{+} \ast \bG$  language}{UDynSigma-1+ cannot maintain any regular non-J+ * G  language}}

To lift the unmaintainability result of Lemma~\ref{lemma:b_star_lower_bound} to all regular languages whose syntactic monoid is not in $\bJ^{+} \ast \bG$, we first show that $\UDynSop$ cannot maintain the membership problem of such monoids and then lift this to all languages recognized by them.

We have seen that the definition of \(\bJ^{+} \ast \bG\) with the wreath product is useful to design algorithms using only formulas of low complexity and to obtain upper bounds. However, for understanding which monoids do not belong to the class, an equivalent characterization due to Pin and Weil~\cite{pin_semidirect_02} is helpful. Denote by \(\bE\bJ^{+}\) the class\footnote{Note Pin and Weil denote \(\bE\bJ^{+}\) by \(\bB\bG^{+}\) and reverse the order.} of languages whose syntactic ordered monoids satisfy \(1\leq e\) for every idempotent \(e\).
For instance, \(\regex{(a+b)^{*}aa(a+b)^{*}}\) is not in \(\bJ^{+}\) but in \(\bE\bJ^{+}\), while \(\regex{(ab)^{*}}\) is in neither.
Furthermore, every group is in \(\bE\bJ^{+}\) as they have a single idempotent which is the identity.

\begin{theorem}[Pin and Weil \protect{\cite[Thm 4.11]{pin_semidirect_02}}]
  \label{thm:EJp}
  \(\bJ^{+} \ast \bG = \bE\bJ^{+}\)
\end{theorem}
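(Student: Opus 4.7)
The plan is to establish the two inclusions separately.

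For the easy direction $\bJ^{+} \ast \bG \subseteq \bE\bJ^{+}$, I would begin by analyzing the idempotents of a semidirect product $(J, \leq) \ast (G, \leq)$ where $(J, \leq) \in \bJ^{+}$ and $G \in \bG$. By the defining product rule, a pair $(x, g)$ is idempotent iff $g^{2} = g$ — which forces $g = 1$ since $G$ is a group — and $x + g \cdot x = x + x = x$, so $x$ is idempotent in $J$. The $\bJ^{+}$-hypothesis then yields $1 \leq x$, hence $(1, 1) \leq (x, 1)$ componentwise, and the entire semidirect product lies in $\bE\bJ^{+}$. To close the inclusion I would check that $\bE\bJ^{+}$ is closed under division: submonoids inherit the inequality directly, while for a surjective morphism $\varphi : M \to N$ any idempotent $e \in N$ admits an idempotent preimage $x^{\omega}$ (for any lift $x$ of $e$), and $1 \leq x^{\omega}$ in $M$ gives $1 \leq e$ in $N$.

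For the hard direction $\bE\bJ^{+} \subseteq \bJ^{+} \ast \bG$, given $(M, \leq) \in \bE\bJ^{+}$, I would aim to construct a semidirect product through which $M$ divides. The canonical candidate group is the \emph{maximal group image} $G(M)$, defined as the quotient of $M$ by the smallest congruence collapsing all idempotents of $M$ to $1$; the $\bE\bJ^{+}$-assumption ensures that this collapse is compatible with the order. I would then build an ordered monoid $(J, \leq_{J})$ that encodes the fiberwise behaviour of $M$ over $G(M)$, equipped with a natural left action of $G(M)$ by translation, and exhibit a division $M \mid (J, \leq_{J}) \ast G(M)$. A clean organizing framework is Tilson's Derived Category Theorem, adapted to the ordered setting as in Pin and Weil, which reduces the division problem to an intrinsic property of the derived category of the projection $M \to G(M)$.

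The main obstacle will be ensuring that the derived ordered monoid $(J, \leq_{J})$ actually lies in $\bJ^{+}$, that is, $1 \leq_{J} x$ for every $x \in J$. This must be extracted from the hypothesis $1 \leq e$ for every idempotent $e \in M$: concretely, each element of the derived construction should be representable, up to multiplication by a suitable group element of $G(M)$, by something dominating an idempotent of $M$, at which stage $\bE\bJ^{+}$ can be invoked. Carrying out this matching between order, left action, and division is essentially the content of Theorem 4.11 in Pin and Weil, and is where I expect the technical weight of the argument to lie.
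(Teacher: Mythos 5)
This statement is not proved in the paper at all: it is imported verbatim from Pin and Weil (Theorem 4.11 of \cite{pin_semidirect_02}), and the authors explicitly flag it as the one result on wreath products they use ``out-of-the-shelf''. So there is no internal proof to compare your attempt against; what can be judged is whether your argument stands on its own.

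Your easy inclusion $\bJ^{+} \ast \bG \subseteq \bE\bJ^{+}$ is correct and complete: the computation $(x,g)^2=(x+g\cdot x,\,g^2)$ forces $g=1$ and $x$ idempotent in $J$, the $\bJ^{+}$ hypothesis gives $(1,1)\leq(x,1)$, and closure of $\bE\bJ^{+}$ under ordered division via idempotent lifting ($x^\omega$ maps onto $e$) is the right way to pass from the semidirect product itself to everything dividing it. The hard inclusion, however, is a genuine gap rather than a proof. Your plan --- project onto the maximal group image $G(M)$, form the derived ordered monoid, and invoke an ordered Derived Category Theorem --- correctly names the framework, but the entire difficulty is the step you defer: showing that the derived ordered monoid satisfies $1\leq x$ globally, given only that $1\leq e$ for idempotents of $M$. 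This is not a routine matching of order and action; the unordered analogue $\bB\bG=\bJ\ast\bG$ is a deep theorem whose proof rests on Ash's inevitability theorem (equivalently, the Ribes--Zalesskii product theorem for the profinite topology on free groups), and the ordered version inherits that dependence. An argument that ends with ``this is essentially the content of Theorem 4.11 in Pin and Weil'' is circular as a proof of that theorem; as used in this paper, the honest move is exactly what the authors do, namely cite it.
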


This characterization immediately yields \(U_{1}^{-}\) as witness for non-membership in $\bJ^{+} \ast \bG$.
\begin{lemma}\label{fact:uonep_smallest}
  If \((M,\leq ) \notin \bJ^{+} \ast \bG\), then \(U_{1}^{-}\) divides \((M,\leq )\).
\end{lemma}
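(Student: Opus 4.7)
The plan is to use the characterization $\bJ^{+} \ast \bG = \bE\bJ^{+}$ from \cref{thm:EJp} to extract, from the assumption $(M,\leq) \notin \bJ^{+} \ast \bG$, a concrete witness of non-membership in $\bE\bJ^{+}$, and then to exhibit a division of $U_{1}^{-}$ from this witness.

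By \cref{thm:EJp}, $(M,\leq) \notin \bJ^{+} \ast \bG$ is equivalent to $(M,\leq) \notin \bE\bJ^{+}$. Unfolding the definition of $\bE\bJ^{+}$, this means that there is an idempotent $e \in M$ with $1 \not\leq e$. Note that necessarily $e \neq 1$, since otherwise $1 \leq 1 = e$ would hold trivially.

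The plan is then to consider the submonoid $N \df \{1, e\}$ of $(M,\leq)$, endowed with the order inherited from $M$. It is a submonoid since $e \cdot e = e$ by idempotency of $e$. I would then define a map $\varphi : N \to U_{1}^{-}$ by $\varphi(1) = 1$ and $\varphi(e) = a$, and verify it is a surjective morphism of ordered monoids. The monoid-morphism axioms are immediate from $e^{2} = e$ and $a^{2} = a$ in $U_{1}^{-}$, and surjectivity follows from $e \neq 1$. For the order part, the key observation is that by our choice of $e$, the relation $1 \leq e$ does \emph{not} hold in $N$, so the only potentially non-trivial comparison to check is $e \leq 1$, which maps to $a \leq 1$ in $U_{1}^{-}$ and is satisfied by the very definition of $U_{1}^{-}$. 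All other comparisons are reflexive. Hence $\varphi$ is an order-preserving surjective morphism from a submonoid of $(M,\leq)$ onto $U_{1}^{-}$, which by definition means that $U_{1}^{-}$ divides $(M,\leq)$.

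There is no real obstacle here beyond carefully handling the one-sided nature of the order-preservation requirement for ordered-monoid morphisms: since $U_{1}^{-}$ satisfies $a \leq 1$ but $1 \not\leq a$, the construction would fail precisely if $1 \leq e$ in $M$, which is exactly what the failure of $(M,\leq) \in \bE\bJ^{+}$ rules out. Thus the algebraic characterization of \cref{thm:EJp} is not just used for convenience but is the whole content of the proof.
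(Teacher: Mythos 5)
Your proof is correct and follows essentially the same route as the paper: both extract an idempotent $e$ with $1 \not\leq e$ via the characterization $\bJ^{+} \ast \bG = \bE\bJ^{+}$ and then realize $U_{1}^{-}$ as a divisor through the submonoid $\{1,e\}$. The only (cosmetic) difference is that the paper splits into the cases $e \leq 1$ versus incomparable, whereas you observe once that the relabeling $e \mapsto a$ is order-preserving in either case because the only comparison it could violate, namely $1 \leq e$, is exactly the one excluded by hypothesis.
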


\begin{proof}
  By definition of \(\bE\bJ^{+}\), we can find an idempotent \(e\in M\) such that \(1\not\leq e\).
  It implies in particular that \(e\) is different from the identity.
  We consider the submonoid \((U_{1},\leq)\) of \((M,\leq)\) with $U_1 =\{1,e\}$ and do a case distinction on $\leq$:
  \begin{itemize}
    \item If \(1\geq e\), then \((U_{1},\leq)\) is exactly \(U_{1}^{-}\).
    \item If \(1\) and \(e\) are incomparable, then the order \(\leq\) restricted to \(U_{1}\) is the equality \(=\).
          We conclude by remarking that \(U_{1}^{-}\) is a quotient of \((U_{1},=)\), using the identity function.
  \end{itemize}
\end{proof}

\begin{lemma}\label{lem:udynsop:notinJG}
	 If $(M, \leq)$ is an ordered monoid that is not in $\bJ^{+} \ast \bG$, then $\Member{M, \leq}$ is not in $\UDynSop$.
\end{lemma}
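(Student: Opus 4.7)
The plan is to derive a contradiction from the assumption that $\Member{M,\leq}\in\UDynSop$ by descending through the algebraic structure to the already-established lower bound for $\regex{b^*}$ from \cref{lemma:b_star_lower_bound}. The steps are direct applications of the ordered-monoid machinery developed in this section.

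First I would invoke \cref{fact:uonep_smallest} to obtain that $U_1^-$ divides $(M,\leq)$, i.e., $U_1^-$ is a quotient of an ordered submonoid of $(M,\leq)$. Next I would transfer maintainability down this division. The class of formulas $\Sigma_1^+$ is closed under disjunction, and the proof of \cref{fact:closure_division} lifts verbatim to ordered monoids: maintaining $\Member{M,\leq}$ yields dedicated bits for each upset $\up{x}$, from which the bits for upsets of a submonoid can be read directly and the bits for upsets of a quotient can be obtained as disjunctions of bits indexed by the preimages. Hence $\Member{U_1^-}\in\UDynSop$.

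Then I would apply \cref{claim:ordered_monoid_to_language} with the language $L = \regex{b^*}$, whose syntactic ordered monoid is exactly $U_1^-$ (the upset $\{1\}$ is closed upwards in $U_1^-$ because $a \leq 1$). This gives $\Member{\regex{b^*}}\in\UDynSop$, directly contradicting \cref{lemma:b_star_lower_bound}.

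The only step that is not a one-line invocation is the ordered-monoid version of \cref{fact:closure_division}, which is not spelled out explicitly in the excerpt; I expect this to be the main (still mild) obstacle. The argument is: for a submonoid $(N,\leq)$ of $(M,\leq)$, one restricts inputs to $N$ and reads off the answer bit for an upset $\up{x}$ of $N$ as the disjunction of the bits for $\up{y}$ over those $y\in M$ with $\up{y}\cap N = \up{x}$. For a surjective ordered-monoid morphism $\varphi\colon (M,\leq)\to(N,\leq)$, one simulates a word $w=n_1\cdots n_k$ over $N$ by any word $\tilde w = m_1\cdots m_k$ over $M$ with $\varphi(m_i)=n_i$; the bit for an upset $\up{n}$ of $N$ is then the disjunction of the bits for $\up{m}$ over $m\in\varphi^{-1}(\up{n})$. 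Both constructions add only a disjunction on top of the given $\Sigma_1^+$ update formulas, which preserves $\UDynSop$. Composing the two yields the result for division, completing the contradiction.
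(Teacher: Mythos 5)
Your proposal is correct and follows essentially the same route as the paper: apply \cref{fact:uonep_smallest} to get that $U_1^-$ divides $(M,\leq)$, transfer maintainability via the ordered-monoid version of \cref{fact:closure_division}, and then use \cref{claim:ordered_monoid_to_language} together with the fact that $U_1^-$ is the syntactic ordered monoid of $\regex{b^*}$ to contradict \cref{lemma:b_star_lower_bound}. The only difference is presentational: you spell out the ordered-division closure argument in detail, which the paper dispatches with a one-line remark (noting that inverse images of upsets under ordered morphisms are upsets).
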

\begin{proof}
	We first observe that $\UDynSop$ cannot maintain $\Member{U_1^-}$, due to Fact~\ref{claim:ordered_monoid_to_language} and Lemma~\ref{lem:udynprop_in_group} and because $U_1^-$ is the ordered syntactic monoid of the language $\regex{b^*}$.
	
	Now assume, towards a contradiction, that $\Member{M, <}$ is in $\UDynSop$ for some monoid $(M, <)$ which is not in $\bJ^{+} \ast \bG$. By \cref{fact:uonep_smallest}, \(U_{1}^{-}\) is a submonoid of \((M,\leq)\) and hence \(\Member{U_{1}^{-}}\) is in \(\UDynSop\) by~\cref{fact:closure_division} straightforwardly extended to ordered monoids (it suffices to remark that the inverse image of an upset by a  morphism is an upset), contradicting our observation from above.
\end{proof}

Now we have to translate the previous lemma from ordered monoids to languages, that is, we need a converse to Fact~\ref{claim:ordered_monoid_to_language}. To this end, we prove that if $\UDynSop$ can maintain a regular language, it can also maintain the membership problem of its ordered syntactic monoid.

To this end, we use a variant of varieties that does not require closure under complementation. A class \(\calV\) of regular languages is a \emph{positive (pseudo)variety} if it is closed under positive Boolean operations, quotients and inverse morphisms.

Examples of positive varieties are \(\bV\ast \bW\) whenever \(\bV\) and \(\bW\) are positive varieties~\cite[Proposition 3.5]{pin_semidirect_02}. Also \(\UDynSop\) is a positive variety, because it is closed under positive Boolean operations (because \(\Sigma_{1}^{+}\) is closed under \(\lor\) and \(\land\)) and under quotients and inverse morphisms (due to~\cref{lem:inv_morphism_quotients}). The theorem of Eilenberg translates to positive varieties:

\begin{theorem}[Pin \protect{\cite[Theorem 4.12]{pin_book_2014}}]
  \label{thm:positive_eilenberg}
  Let \(\calV\) be a positive variety of regular languages and let \((M,\leq)\) be the syntactic ordered monoid of a language $L \in \calV$. Then any language recognized by \((M,\leq)\) is also in \(\calV\).
\end{theorem}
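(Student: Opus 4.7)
The plan is to reduce the claim to showing, for every \(x \in M\), that the language \(L_x \df \eta^{-1}(\up{x})\) belongs to \(\calV\), where \(\eta \colon \Sigma^{*} \to M\) is the syntactic morphism of \(L\). Granted this, consider any language \(L'\) over an alphabet \(\Gamma\) recognized by \((M,\leq)\) via a morphism \(\varphi \colon \Gamma^{*} \to M\) and an upset \(P\). Since \(\eta\) is surjective, there exists a morphism \(h \colon \Gamma^{*} \to \Sigma^{*}\) satisfying \(\eta \circ h = \varphi\), obtained by selecting for each letter \(\gamma \in \Gamma\) any preimage of \(\varphi(\gamma)\) under \(\eta\). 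Writing \(P\) as the finite union \(\bigcup_{x \in P} \up{x}\), one obtains
\[L' = \varphi^{-1}(P) = \bigcup_{x \in P} \varphi^{-1}(\up{x}) = \bigcup_{x \in P} h^{-1}(L_x).\]
Since \(\calV\) is closed under finite unions and under inverse morphisms, \(L' \in \calV\) will follow from \(L_x \in \calV\) for each \(x \in P\).

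For the main step, I will describe each \(L_x\) as a finite intersection of iterated quotients of \(L\). Fix a representative \(u_x \in \Sigma^{*}\) with \(\eta(u_x) = x\). By the definition of the syntactic order, \(\eta(w) \geq x\) holds if and only if for every pair \((p,q) \in \Sigma^{*} \times \Sigma^{*}\) such that \(p u_x q \in L\), also \(pwq \in L\); equivalently, \(w \in p^{-1} L q^{-1}\). Thus
\[L_x = \bigcap_{(p,q)\,:\, p u_x q \in L} p^{-1} L q^{-1}.\]
This intersection is a priori infinite, but the value of \(p^{-1} L q^{-1}\) depends only on the pair \((\eta(p), \eta(q))\), and the selection criterion \(p u_x q \in L\) likewise depends only on \((\eta(p), \eta(q))\) since \(L = \eta^{-1}(P_L)\). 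As \(M\) is finite, the intersection therefore reduces to a finite one by keeping one representative per class. Closure of \(\calV\) under iterated quotients (obtained from single-letter quotients by induction on word length) and under finite intersections then yields \(L_x \in \calV\).

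The main subtlety to watch is the restriction to \emph{positive} Boolean operations, since a positive variety need not be closed under complementation. This is precisely where the order structure does its work: a principal upset \(\up{x}\) is captured by the positive implication ``if \(p u_x q \in L\) then \(pwq \in L\)'', which translates into a conjunction of membership conditions in the quotient languages \(p^{-1} L q^{-1}\), and an arbitrary upset is a finite union of such principal upsets. Together with the required closure of \(\calV\) under quotients and inverse morphisms, the entire argument stays within the positive fragment, mirroring the classical Eilenberg correspondence but avoiding any appeal to complementation.
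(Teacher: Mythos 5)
The paper does not prove this statement itself; it imports it verbatim from Pin's book, so there is no internal proof to compare against. Your argument is correct and is essentially the standard proof of the ordered Eilenberg theorem: reduce to the principal-upset languages \(\eta^{-1}(\up{x})\), write each as a finite intersection of two-sided quotients \(p^{-1}Lq^{-1}\) (finiteness holding because both the quotient and the selection condition \(pu_xq\in L\) depend only on \((\eta(p),\eta(q))\)), and transport along a morphism \(h\) with \(\eta\circ h=\varphi\). The only pedantic caveat is that the empty union (when \(P=\emptyset\)) and the empty intersection (when \(x\) is the minimum of \(M\)) require the standard convention that \(\emptyset\) and \(\Sigma^{*}\) belong to every positive variety.
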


This implies that the converse of~\cref{claim:ordered_monoid_to_language} holds for positive varieties.

 \begin{fact}
   \label{claim:language_to_ordered_monoid}
   Let \(\calC\) be a class of formulas such that the regular languages of \(\UDynC\) form a positive variety. Let \(L\) be a regular language and \((M,\leq)\) its syntactic ordered monoid:
   \[\Member{L} \in \UDynC \Rightarrow \Member{M,\leq} \in\UDynC. \]
 \end{fact}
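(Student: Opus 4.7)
The plan is to adapt the proof of Fact~\ref{claim:language_to_monoid} to the ordered setting by invoking the positive variant of Eilenberg's theorem (Theorem~\ref{thm:positive_eilenberg}) in place of its unordered counterpart. Assume \(\Member{L} \in \UDynC\), where \(L\) has syntactic ordered monoid \((M,\leq)\). To maintain \(\Member{M,\leq}\), recall that for every \(x\in M\) we must maintain a bit \(q_{x}\) signalling whether the current input word \(w \in M^{*}\) evaluates into the upset \(\up{x}\).

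For each \(x\in M\), consider the language \(L_{x}=\mu^{-1}(\up{x}) \subseteq M^{*}\), where \(\mu:M^{*}\rightarrow M\) denotes the canonical evaluation morphism. Since \(\up{x}\) is by definition an upset of \((M,\leq)\), the language \(L_{x}\) is recognized by the ordered monoid \((M,\leq)\). By assumption, \(\UDynC\) is a positive variety and contains \(L\), whose syntactic ordered monoid is \((M,\leq)\). Applying Theorem~\ref{thm:positive_eilenberg}, every language recognized by \((M,\leq)\) is in \(\UDynC\); hence \(\Member{L_{x}}\in \UDynC\) for each \(x\in M\).

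A dynamic \(\UDynC\)-program for \(\Member{M,\leq}\) is then obtained by running, in parallel, programs for each \(\Member{L_{x}}\) with \(x\in M\) over the common input alphabet \(M\), and designating the answer bit of the \(x\)-th program as \(q_{x}\). Renaming auxiliary symbols to avoid collisions and taking the union of their schemas yields a single \(\UDynC\)-program whose update formulas are exactly those of the individual programs (so they remain in \(\calC\)), and whose initialization is the conjunction of the individual \(\FO\)-initializations.

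The step one needs to pay attention to is the correct identification of the relevant recognized languages: one must use precisely the upsets \(\up{x}\), not the singletons \(\{x\}\), since ordered-monoid recognition is defined via upsets (and in general \(\{x\}\) need not be one). Everything else — the parallel composition of dynamic programs and the unchanged syntactic shape of the update formulas — is routine.
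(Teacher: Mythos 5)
Your proof is correct and follows essentially the same route as the paper's: both identify the languages \(\mu^{-1}(\up{x})\) for the evaluation morphism \(\mu:M^{*}\to M\), observe that each \(\up{x}\) is an upset so these languages are recognized by \((M,\leq)\), invoke Theorem~\ref{thm:positive_eilenberg} to place them in \(\UDynC\), and combine the resulting programs in parallel to produce the bits \(q_x\). Your explicit remark that one must use the upsets \(\up{x}\) rather than singletons is exactly the point the paper's proof also highlights.
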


\begin{proof}
  Assume that \(\Member{L}\) is in \(\UDynC\).
  Let \(\varphi: M^{*} \rightarrow M\) be the morphism that evaluates a word in \(M^{*}\), that is that maps \(x_{1}\cdots x_{n}\) to the product of the \(x_{i}\) for \(1\leq i \leq n\).
  For any fixed \(x\in M\), notice that \(\uparrow x\) is an upset of \(M\).
  Thus \(\Member{\varphi^{-1}(\uparrow x)}\) is in \(\UDynC\) for every \(x\in M\), by~\cref{thm:positive_eilenberg}.
  A dynamic program for \(\Member{M}\) uses each of these programs to generate one bit of output.
\end{proof}

The ``only if'' direction of Theorem~\ref{theorem:udynsigmaone:j-wreath-groups} now follows. By Fact~\ref{claim:language_to_ordered_monoid} and because the regular languages of $\UDynSop$ form a positive variety: if $\UDynSop$ can maintain a regular language, it can also maintain the membership problem of its ordered syntactic monoid. But, by Lemma~\ref{lem:udynsop:notinJG}, $\UDynSop$ cannot maintain this problem for any ordered monoid not in \(\bJ^{+}\ast \bG\).

\section{Conclusion}
\label{sec:conclusion}

We characterized the regular languages of $\UDynSop$ and $\UDynProp$ by properties of their (ordered) syntactic monoids. The most important fragment of $\DynFO$ for which a characterization remains open is $\UDynSo$ (see also the discussion in the full version of this paper).%

\begin{openquestion*}
	Which regular languages are in $\UDynSo$, i.e. can be  maintained by $\exists^*$ update formulas and unary auxiliary relations?
\end{openquestion*}

Two obstacles to an algebraic characterization of $\UDynSo$ are (1) that $\Sigma_1$ is not closed under composition and thus $\UDynSo$ is a priori not a (positive) variety; and (2) an absence of lower bounds techniques against \(\UDynSo\). One path towards resolving (1) is via considering  positive lp-varieties, which only require closure under positive Boolean operations and under quotients of inverses of length-preserving morphisms~\cite{Pin_cvar_2010}. For (2), new lower bound techniques beyond the substructure lemma (and its variant used for $\UDynSop$) seem to be necessary. 

We shortly discuss the expressive power of $\UDynSo$. Consider the class $\bJ$ of languages that can be described by a Boolean combination of \(\Sigma_{1}\)-formulas. This class is equal to the class of piecewise-testable languages~\cite{simon_j_1975}.
We can show that \(\UDynSo\) is strictly more powerful than \(\UDynSop\), by looking at the unordered counterpart of \(\bJ^{+} \ast \bG\).

\begin{restatable}{lemma}{JinUDynSO}
  If \(L\) is a regular language in \(\bJ\ast\bG\), then \(\Member{L}\) is in \(\UDynSo\).
\end{restatable}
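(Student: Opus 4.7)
The plan is to mirror the proof of~\cref{lemma:udynsigmaone:j-wreath-groups:upper}. By~\cref{claim:monoid_to_language} and~\cref{fact:closure_division}, it suffices to handle \(\Member{J \ast G}\) for any \(\mathcal{J}\)-trivial monoid \(J \in \bJ\) and any group \(G\), with respect to a fixed left action. Following the positive case, I would use~\cref{lem:groups_in_udynprop} to maintain in \(\UDynProp\) the strict group prefixes \(g_1 \cdots g_{i-1}\), the full group product, and the annotations \(y_i = (g_1 \cdots g_{i-1}) \cdot x_i \in J\) via unary relations \(Y_y\). By Equation~\eqref{eq:unfold_wreath}, it only remains to decide for each \(p \in J\) whether the word \(y_1 \cdots y_n\) over the alphabet \(J\) evaluates to \(p\) in \(J\).

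Since \(J\) is \(\mathcal{J}\)-trivial, each \(L_p\subseteq J^*\) is piecewise-testable by Simon's theorem, hence a Boolean combination of ``contains subword \(v\)'' conditions for \(v \in J^*\) of length at most some constant \(k = k(J)\). The hard part is that a naive expression of the new value of \(Q_p\) as a Boolean combination of ``new \(Y\)-sequence contains \(v\)'' formulas yields something only in \(\mathbf{B}\Sigma_1\) rather than \(\Sigma_1\): each atom is \(\Sigma_1\) in the old state, and a negated occurrence becomes \(\Pi_1\). Unlike in pure \(\Member{J}\) with single-position changes --- where a ``word minus \(i_0\)'' auxiliary lets the ``contains'' atoms be written quantifier-free --- a single input change at \(i_0\) here alters the \(Y\)-values at \(i_0\) and at every \(j > i_0\) simultaneously via the group action, so the old value of such an auxiliary is no longer equal to its new value.

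To push the update of \(Q_p\) down to quantifier-free in the old state, I would introduce \emph{look-ahead} unary auxiliary relations: for each subword \(v\) of length \(\leq k\), each position \(i\), and each input symbol \(\sigma \in J \times G\), let \(V_{v,\sigma}(i)\) and \(A_{v,>,\sigma}(i)\) record whether, if \(\set_\sigma(i)\) were next applied to the current state, the resulting \(Y\)-sequence at positions \(\neq i\) (resp.\ \(> i\)) would contain \(v\); and let \(A_{v,<}(i)\) record whether the current \(Y\)-sequence at positions \(< i\) contains \(v\) (no \(\sigma\) is needed, as positions left of \(i\) are unaffected by any change at \(i\)). Since the post-change \(Y\)-values are quantifier-free in the current state via the group inverses of~\cref{lem:groups_in_udynprop}, and chaining the actual change with any hypothetical one preserves this property, each of these auxiliaries is maintainable by a \(\Sigma_1\) update formula in the old state.

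The decisive observation will be that, when an actual change \(\set_\sigma(i_0)\) occurs, the old values \(V_{v,\sigma}^{\text{old}}(i_0)\) and \(A_{v,>,\sigma}^{\text{old}}(i_0)\) coincide by construction with the true post-change quantities \(V_v^{\text{new}}(i_0)\) and \(A_{v,>}^{\text{new}}(i_0)\). Hence ``new \(Y\)-sequence contains \(v\)'' can be written quantifier-free as the disjunction of \(V_{v,\sigma}^{\text{old}}(i_0)\) (embedding avoids \(i_0\)) with, for each position \(\ell\) at which \(i_0\) is used in the embedding, the conjunction of a QF check on the new \(Y\)-value at \(i_0\), \(A_{v_{[1,\ell-1]},<}^{\text{old}}(i_0)\), and \(A_{v_{[\ell+1,|v|]},>,\sigma}^{\text{old}}(i_0)\). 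Any Boolean combination of such QF formulas remains QF, so the update of \(Q_p\) --- and hence of the answer bit for \(\Member{J\ast G}\) --- lies in \(\Sigma_1\), completing the construction.
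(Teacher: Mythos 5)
Your proposal is sound, but it takes a genuinely different route from the paper's proof. The paper does not work with the semidirect product at all: it invokes the characterization of \(\bJ\ast\bG\) as Boolean combinations of \emph{group monomials} \(L_{1}a_{1}L_{2}\cdots a_{n}L_{n+1}\) with each \(L_{i}\) a group language, pushes negations to the leaves, and for a single monomial maintains unary relations \(P_{k,g}\) and \(S_{k,g}\) recording, for each position, which prefix/suffix submonomials the strict prefix/suffix belongs to; these are updated by \(\Sigma_{1}\) formulas that existentially guess the positions of the marker letters \(a_{1},\ldots,a_{k}\) and verify the intervening group infixes with the quantifier-free infix formulas obtained from group inverses. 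The key point --- which plays the same role as your look-ahead trick --- is that the answer bit is then a \emph{quantifier-free} combination of \(P_{k,g}(i)\) and \(S_{k,g}(i)\) at the changed position \(i\), because strict prefixes and suffixes anchored at \(i\) are unaffected by a change at \(i\); this quantifier-freeness is exactly what lets the complement of a monomial be handled by negating the answer-bit formula. Your route stays with the wreath-product presentation, uses Simon's theorem to reduce the \(\mathcal{J}\)-trivial evaluation to subword conditions, and then has to work harder precisely because the \(J\)-annotations at every position to the right of the change are altered by a single input change; the look-ahead relations \(V_{v,\sigma}\), \(A_{v,>,\sigma}\), \(A_{v,<}\) are a correct (if heavier) way to restore quantifier-free access to the subword atoms at update time, after which the negations in the piecewise-testable Boolean combination are harmless. (One routine detail you leave implicit: computing the post-change group prefixes needs the \emph{old} letter at the changed position, so the program must also carry the trivially maintainable relations storing the previous input word.) Both proofs rest on the same underlying mechanism --- precompute the existential content one step ahead so that the final Boolean combination ranges over quantifier-free atoms --- but the paper's group-monomial decomposition makes this precomputation essentially free, whereas your decomposition requires the extra layer of hypothetical-change relations.
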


Unfortunately, \(\bJ\ast\bG\) does not provide an exact characterization, because the language \(\regex{\Sigma^{*}aa\Sigma^{*}}\) is in \(\UDynSo\), but its ordered syntactic monoid is not in \(\bJ\ast\bG\).

\begin{restatable}{conjecture}{DynSoSigmatG}
  Let \(\cV\) be the positive lp-variety of regular languages \(L\) such that \(\Member{L}\) is in \(\UDynSo\).
  Then \(\bJ\ast \bG \subsetneq \cV \subseteq \Sigma_{2}\ast \bG\).
  In particular, \(\regex{ (\Sigma^{*}aa\Sigma^{*})^{c}}\) is not\footnote{Note that non-membership in \(\Sigma_{2}\ast \bG\) can be checked with the MeSCaL software~\cite{place_mescal_2025}.} in \(\cV\).
\end{restatable}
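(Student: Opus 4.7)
The plan is to address the three assertions separately: the strict inclusion $\bJ\ast\bG\subsetneq\cV$, the upper bound $\cV\subseteq\Sigma_{2}\ast\bG$, and the concrete exclusion $\regex{(\Sigma^{*}aa\Sigma^{*})^{c}}\notin\cV$. The inclusion $\bJ\ast\bG\subseteq\cV$ is immediate from the preceding lemma (restated as \texttt{JinUDynSO} in the paper). For strictness I would take $\regex{\Sigma^{*}aa\Sigma^{*}}$ as witness: the Example in Section~\ref{sec:prelim} places it in $\UDynSo$ and hence in $\cV$, while direct inspection of its small ordered syntactic monoid (or a MeSCaL check) shows it is not in $\bJ\ast\bG$. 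The final exclusion $\regex{(\Sigma^{*}aa\Sigma^{*})^{c}}\notin\cV$ then falls out of the upper bound together with the footnoted fact that this language is not in $\Sigma_{2}\ast\bG$.

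The real work lies in the upper bound $\cV\subseteq\Sigma_{2}\ast\bG$. The approach I would pursue is to mirror, in reverse, the construction of \cref{lemma:udynsigmaone:j-wreath-groups:upper}. Given a $\UDynSo$ program $\Pi$, the goal is to extract a static recognizer whose shape matches a cascade of a group followed by a $\Sigma_{2}$-recognizer. The group component would correspond to the quantifier-free-maintainable content of $\Pi$'s auxiliary data, in the spirit of the group-valued prefix evaluations from \cref{lem:groups_in_udynprop}, while the $\Sigma_{2}$ component should capture the existential update formulas applied to unary relations annotated by the group outputs. Closure of $\UDynSo$ under positive Boolean operations and under length-preserving inverse-morphism quotients would then allow passage from language-level maintainability to ordered-monoid recognition via the positive lp-variety analogue of \cref{claim:language_to_ordered_monoid}.

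The hard part will be precisely this upper bound: no analogue of the substructure lemma gives the converse direction, and translating a dynamic program into a static recognizer is not available even for $\DynFO$ at large. A promising line is to combine Higman-type well-quasi-order arguments, which bound the information a single unary predicate can carry across positions, with a careful analysis of how $\Sigma_{1}$ update formulas compose under a sequence of changes to express the answer bit via a single $\Sigma_{2}$ formula over a group-annotated word. If such a structural translation can be developed, both the upper bound and the exclusion follow. As a fallback for the last clause alone, one could try strengthening the type-monotonic substructure lemma (\cref{lem:extended_substructure}) to tolerate negated unary atoms under existential quantification, which would yield $\regex{(\Sigma^{*}aa\Sigma^{*})^{c}}\notin\cV$ without routing through the upper bound, though this still leaves the containment in $\Sigma_{2}\ast\bG$ open.
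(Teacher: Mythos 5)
This statement is labelled a \emph{conjecture} in the paper, and the paper offers no proof of it: the authors explicitly write that the problem ``appears challenging,'' that lower bounds against \(\UDynSo\) are scarce, and that \(\Sigma_{2}\ast\bG\) is poorly understood. What the paper does establish is exactly the part you dispatch first: \(\bJ\ast\bG\subseteq\cV\) via the lemma \texttt{JinUDynSO}, and strictness via \(\regex{\Sigma^{*}aa\Sigma^{*}}\), which Example~\ref{example:aa} places in \(\UDynSo\) while its ordered syntactic monoid is asserted not to lie in \(\bJ\ast\bG\). Your treatment of that portion coincides with the paper's.

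For the remaining two claims --- the upper bound \(\cV\subseteq\Sigma_{2}\ast\bG\) and the exclusion of \(\regex{(\Sigma^{*}aa\Sigma^{*})^{c}}\) --- your proposal is a research plan, not a proof, and you say as much. The sketched strategy (reverse-engineering a cascade recognizer from a \(\UDynSo\) program, combining Higman-type arguments with an analysis of how \(\Sigma_1\) updates compose) does not currently yield an argument: there is no known translation from a dynamic program back to a static recognizer even in much better-understood settings, the type-monotonic substructure lemma breaks down once negated unary atoms appear under existential quantifiers, and \(\UDynSo\) is only known to be a positive lp-variety, so even the passage from languages to syntactic ordered stamps requires the machinery of Section~\ref{sec:udynso} rather than \cref{claim:language_to_ordered_monoid}. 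None of this is a flaw relative to the paper --- the authors leave precisely the same gap open --- but you should present the upper bound and the ``in particular'' clause as conjectural rather than as steps of a proof. One minor caution: your fallback of strengthening \cref{lem:extended_substructure} to tolerate negations would, if it worked naively, contradict Example~\ref{example:aa} (the complement-flag trick there relies essentially on negated auxiliary atoms), so any such strengthening must be considerably more delicate than the monotone version.
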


Another direction is to explore (fragments of) the class of context-free languages.  It is known that every context-free language can be maintained in $\DynFO$ \cite{GeladeMS12}, yet the known dynamic program requires $\Sigma_1$-update formulas and $4$-ary auxiliary relations.

\begin{openquestion*}
	Which dynamic resources (quantifier-alterations, arity of auxiliary relations,\dots) are required for maintaining context-free languages?
\end{openquestion*}

\bibliography{bib}

@article{diekert_logic_2008,
  author       = {Volker Diekert and
                  Paul Gastin and
                  Manfred Kufleitner},
  title        = {A Survey on Small Fragments of First-Order Logic over Finite Words},
  journal      = {Int. J. Found. Comput. Sci.},
  volume       = {19},
  number       = {3},
  pages        = {513--548},
  year         = {2008},
  url          = {https://doi.org/10.1142/S0129054108005802},
  doi          = {10.1142/S0129054108005802},
  timestamp    = {Fri, 02 Nov 2018 09:32:31 +0100},
  biburl       = {https://dblp.org/rec/journals/ijfcs/DiekertGK08.bib},
  bibsource    = {dblp computer science bibliography, https://dblp.org}
}

@inproceedings{place_mescal_2025,
  author       = {Thomas Place and
                  Marc Zeitoun},
  editor       = {Giuseppa Castiglione and
                  Sabrina Mantaci},
  title        = {A First Taste of MeSCaL, a Tool for Solving Membership Problems for
                  Regular Languages},
  booktitle    = {Implementation and Application of Automata - 29th International Conference,
                  {CIAA} 2025, Palermo, Italy, September 22-25, 2025, Proceedings},
  series       = {Lecture Notes in Computer Science},
  volume       = {15981},
  pages        = {281--298},
  publisher    = {Springer},
  year         = {2025},
  url          = {https://doi.org/10.1007/978-3-032-02602-6\_20},
  doi          = {10.1007/978-3-032-02602-6\_20},
  timestamp    = {Thu, 11 Sep 2025 20:25:48 +0200},
  biburl       = {https://dblp.org/rec/conf/wia/PlaceZ25.bib},
  bibsource    = {dblp computer science bibliography, https://dblp.org}
}

@INPROCEEDINGS{place_groups_2019,
  author={Place, Thomas and Zeitoun, Marc},
  booktitle={2019 34th Annual ACM/IEEE Symposium on Logic in Computer Science (LICS)},
  title={Separation and covering for group based concatenation hierarchies},
  year={2019},
  volume={},
  number={},
  pages={1-13},
  keywords={Pins;Automata;Standards;Lattices;Formal languages;Force;Machinery},
  doi={10.1109/LICS.2019.8785655}}

@inproceedings{pin_bridges_98,
author = {Pin, Jean-Eric},
title = {Bridges for Concatenation Hierarchies},
year = {1998},
isbn = {3540647813},
publisher = {Springer-Verlag},
address = {Berlin, Heidelberg},
booktitle = {Proceedings of the 25th International Colloquium on Automata, Languages and Programming},
pages = {431–442},
numpages = {12},
series = {ICALP '98}
}

@article{Pin_cvar_2010,
abstract = { In an earlier paper, the second author generalized Eilenberg's variety theory by establishing a basic correspondence between certain classes of monoid morphisms and families of regular languages. We extend this theory in several directions. First, we prove a version of Reiterman's theorem concerning the definition of varieties by identities, and illustrate this result by describing the identities associated with languages of the form (a1a2...ak)+, where a1,...,ak are distinct letters. Next, we generalize the notions of Mal'cev product, positive varieties, and polynomial closure. Our results not only extend those already known, but permit a unified approach of different cases that previously required separate treatment. },
author = {Pin, Jean-Éric and Straubing, Howard},
journal = {RAIRO - Theoretical Informatics and Applications},
keywords = {monoid morphisms; varieties; regular languages; finitely generated free monoids; length-preserving morphisms; finite monoids; stamps; identities},
language = {eng},
month = {3},
number = {1},
pages = {239-262},
publisher = {EDP Sciences},
title = {Some results on C-varieties},
url = {http://eudml.org/doc/92759},
volume = {39},
year = {2010},
}

@book{mcnaughton_fo_star_free_1971,
author = {McNaughton, Robert and Papert, Seymour A.},
title = {Counter-Free Automata (M.I.T. research monograph no. 65)},
year = {1971},
isbn = {0262130769},
publisher = {The MIT Press}
}

@article{buchi_mso_reg_1960,
  title={Weak Second‐Order Arithmetic and Finite Automata},
  author={J. B{\"u}chi},
  journal={Mathematical Logic Quarterly},
  year={1960},
  volume={6},
  pages={66-92},
  doi = {10.1007/978-1-4613-8928-6_22}
}

@InProceedings{simon_j_1975,
author="Simon, Imre",
editor="Brakhage, H.",
title="Piecewise testable events",
booktitle="Automata Theory and Formal Languages",
year="1975",
publisher="Springer Berlin Heidelberg",
address="Berlin, Heidelberg",
pages="214--222",
isbn="978-3-540-37923-2",
doi={10.1007/3-540-07407-4_23}
}

@article{schutzenberger_aperiodic_1965,
title = {On finite monoids having only trivial subgroups},
journal = {Information and Control},
volume = {8},
number = {2},
pages = {190-194},
year = {1965},
issn = {0019-9958},
doi = {10.1016/S0019-9958(65)90108-7},
url = {https://www.sciencedirect.com/science/article/pii/S0019995865901087},
author = {M.P. Schützenberger},
abstract = {An alternative definition is given for a family of subsets of a free monoid that has been considered by Trahtenbrot and by McNaughton.}
}

@article{Schutzenberger_syntactic_1955,
author = {Schützenberger, M. P.},
journal = {Séminaire Dubreil. Algèbre et théorie des nombres},
language = {fre},
pages = {1-24},
publisher = {Secrétariat mathématique},
title = {Une théorie algébrique du codage},
url = {http://eudml.org/doc/111094},
volume = {9},
year = {1955},
}

@book{Lothaire_1997,
  author={Lothaire, M.},
 place={Cambridge},
 edition={2},
 series={Cambridge Mathematical Library},
 title={Combinatorics on Words},
 publisher={Cambridge University Press}, year={1997},
 collection={Cambridge Mathematical Library}}

@article{pin_semidirect_02,
  title={Semidirect products of ordered semigroups},
  author={Pin, Jean-Eric and Weil, Pascal},
  journal={Communications in Algebra},
  volume={30},
  number={1},
  pages={149--169},
  year={2002},
  publisher={Taylor \& Francis}
}

@article{pin_variety_1995,
  title={A variety theorem without complementation},
  author={Pin, Jean-{\'E}ric},
  journal={Russian Mathematics (Izvestija vuzov. Matematika)},
  volume={39},
  pages={80--90},
  year={1995}
}

@article{green_relations_51,
 ISSN = {0003486X, 19398980},
 URL = {http://www.jstor.org/stable/1969317},
 author = {J. A. Green},
 journal = {Annals of Mathematics},
 number = {1},
 pages = {163--172},
 publisher = {[Annals of Mathematics, Trustees of Princeton University on Behalf of the Annals of Mathematics, Mathematics Department, Princeton University]},
 title = {On the Structure of Semigroups},
 urldate = {2025-07-10},
 volume = {54},
 year = {1951}
}

@article{krohn_rhodes_theorem_1965,
 ISSN = {00029947},
 URL = {http://www.jstor.org/stable/1994127},
 author = {Kenneth Krohn and John Rhodes},
 journal = {Transactions of the American Mathematical Society},
 pages = {450--464},
 publisher = {American Mathematical Society},
 title = {Algebraic Theory of Machines. I. Prime Decomposition Theorem for Finite Semigroups and Machines},
 urldate = {2024-02-20},
 volume = {116},
 year = {1965}
}

@article{pin95,
  title={{BG=PG}: a success story},
  author={Pin, Jean-Eric},
  journal={NATO ASI Series C Mathematical and Physical Sciences-Advanced Study Institute},
  volume={466},
  pages={33--48},
  year={1995},
  publisher={Dordrecht, Holland, Boston, D. Reidel Pub. Co.}
}

@article{Frandsen_dynamic_1997,
author = {Skovbjerg Frandsen, Gudmund and Miltersen, Peter Bro and Skyum, Sven},
title = {Dynamic word problems},
year = {1997},
issue_date = {March 1997},
publisher = {Association for Computing Machinery},
address = {New York, NY, USA},
volume = {44},
number = {2},
issn = {0004-5411},
url = {10.1145/256303.256309},
doi = {10.1145/256303.256309},
journal = {J. ACM},
pages = {257–271},
numpages = {15}
}

@unpublished{pin_book_2014,
author = {Pin, Jean-Eric},
year = {2014},
title = {Mathematical foundations of automata theory},
url = {http://www.irif.fr/~jep/PDF/MPRI/MPRI.pdf}
}

@book{eilenberg_bookB_1976,
  author={Samuel Eilenberg},
  title={Automata, Languages and Machines, Vol. B},
  year={1976},
  publisher={Academic Press Inc},
  url={https://www.sciencedirect.com/bookseries/pure-and-applied-mathematics/vol/59/part/PB}
}

@phdthesis{hesse_phd_2003,
  title        = {Dynamic Computational Complexity},
  author       = {Hesse, William},
  year         = 2003,
  school       = {University of Massachusetts Amherst},
}

@article{GeladeMS12,
author = {Gelade, Wouter and Marquardt, Marcel and Schwentick, Thomas},
title = {The dynamic complexity of formal languages},
year = {2012},
issue_date = {August 2012},
publisher = {Association for Computing Machinery},
address = {New York, NY, USA},
volume = {13},
number = {3},
issn = {1529-3785},
url = {10.1145/2287718.2287719},
doi = {10.1145/2287718.2287719},
abstract = {The article investigates the power of the dynamic complexity classes DynFO, DynQF, and DynPROP over string languages. The latter two classes contain problems that can be maintained using quantifier-free first-order updates, with and without auxiliary functions, respectively. It is shown that the languages maintainable in DynPROP are exactly the regular languages, even when allowing arbitrary precomputation. This enables lower bounds for DynPROP and separates DynPROP from DynQF and DynFO. Further, it is shown that any context-free language can be maintained in DynFO and a number of specific context-free languages, for example all Dyck-languages, are maintainable in DynQF. Furthermore, the dynamic complexity of regular tree languages is investigated and some results concerning arbitrary structures are obtained: There exist first-order definable properties which are not maintainable in DynPROP. On the other hand, any existential first-order property can be maintained in DynQF when allowing precomputation.},
journal = {ACM Trans. Comput. Logic},
articleno = {19},
numpages = {36}
}

@phdthesis{Zeume15thesis,
  author       = {Thomas Zeume},
  title        = {Small dynamic complexity classes},
  school       = {Technical University Dortmund, Germany},
  year         = {2015},
  url          = {https://hdl.handle.net/2003/34163},
  urn          = {urn:nbn:de:101:1-201606013164},
  timestamp    = {Wed, 04 May 2022 12:59:37 +0200},
  biburl       = {https://dblp.org/rec/phd/dnb/Zeume15.bib},
  bibsource    = {dblp computer science bibliography, https://dblp.org}
}

@article{zeume_substructure_2015,
title = {On the quantifier-free dynamic complexity of Reachability},
journal = {Information and Computation},
volume = {240},
pages = {108-129},
year = {2015},
note = {MFCS 2013},
issn = {0890-5401},
doi = {10.1016/j.ic.2014.09.011},
url = {https://www.sciencedirect.com/science/article/pii/S0890540114001205},
author = {Thomas Zeume and Thomas Schwentick},
}

@article{PatnaikI97,
	Author = {Sushant Patnaik and Neil Immerman},
	Bibsource = {dblp computer science bibliography, http://dblp.org},
	Biburl = {http://dblp.dagstuhl.de/rec/bib/journals/jcss/PatnaikI97},
	Doi = {10.1006/jcss.1997.1520},
	Journal = {J. Comput. Syst. Sci.},
	Number = {2},
	Pages = {199--209},
	Timestamp = {Tue, 05 Jul 2011 11:11:09 +0200},
	Title = {Dyn-{FO}: {A} Parallel, Dynamic Complexity Class},
	Volume = {55},
	Year = {1997},
	Bdsk-Url-1 = {http://dx.doi.org/10.1006/jcss.1997.1520}}

\appendix

\section{Proofs from Section~\ref{sec:prelim}}
\invmorphismquotients*
\begin{proof}
	Let $L$ be a language over some alphabet $\Sigma$ and let $\prog$ be a dynamic unary \(\DynC\) program for $\Member{L}$.
	\begin{enumerate}
		\item Fix a mapping $h : \Gamma \to \Sigma^*$ and let $s \df \max_{\gamma \in \Gamma} |h(\gamma)|$ be the maximal size of a string in the image of $h$.
		We construct a dynamic unary \(\DynC\) program $\prog'$ for $\Member{h^{-1}(L)}$ that uses the following auxiliary relations:
		\begin{itemize}
			\item $s$ unary relations $Q_{\sigma,1}, \ldots, Q_{\sigma,s}$ for every $\sigma \in \Sigma$, and
			\item $s$ unary relations $R_1, \ldots, R_s$ for every unary auxiliary relation $R$ of $\prog$.
		\end{itemize}
		The relations of the form $Q_{\sigma, \ell}$ are used to store the word $h(w)$ of size (at most) $sn$, where $w \in \Gamma^*$ is the current input word of $\prog'$ and $n$ is the domain size of the input word. If $i \in Q_{\sigma, \ell}$ then the letter $\sigma$ is at position $(i-1)s + \ell$ of $h(w)$.
		So, we store $h(w)$ in blocks of $s$ positions, one block for each symbol of $w$. If $|h(w_i)|<s$ then intuitively the remaining positions are ``filled'' with $\epsilon$.
		The relations $R_\ell$ directly play the role of the auxiliary relations $R$ on the maintained input word $h(w)$ for $L$.
		
		The auxiliary relations are updated as follows. Suppose a change $\set_\gamma(i)$ occurs. Let $\sigma_1 \cdots \sigma_s \df h(\gamma)$, with $\sigma_j \in \Sigma \cup \{\epsilon\}$, be the block of symbols that changes in the stored word $h(w)$. Intuitively, $\prog'$ simulates $\prog$ for the $s$ corresponding changes. As $s$ is a constant, this is possible by nesting slight modifications of the update formulas of $\prog$.
		
		The update formulas are constructed in $s$ stages. The formulas of the first stage simulate the update after the first change, setting the symbol at position $(i-1)s+1$ to $\sigma_1$, as follows.
		The index $i$ is included into $Q_{\sigma_1,1}$ and removed from any other relation of the form $Q_{\sigma',1}$, which is easy to express by quantifier-free formulas.
		The formula for an auxiliary relation $R_\ell$ is obtained from the update formula $\varphi_\gamma^{R}(x;y)$ from $\prog$ by 
		\begin{itemize}
			\item replacing every atom of the form $S(x)$ by $S_\ell(x)$, as the variable $x$ denotes the position that is updated,
			\item replacing every atom of the form $S(y)$ by $S_1(y)$,  as $y$ denotes the position that is changed,
			\item iteratively replacing every formula $\exists z \psi$ by $\exists z \bigvee_{\ell_z = 1, \ldots, s} \psi_{\ell_z}$, where $\psi_{\ell_z}$ results from $\psi$ by replacing every atom of the form $S(z)$ by $S_{\ell_z}(z)$,
			\item an atom $z \leq z'$ is replaced by $z < z'$ if $\ell_z > \ell_{z'}$, and otherwise remains unchanged.
		\end{itemize}

		For the remaining $s-1$ stages, the formulas are obtained analogously, but any reference to an atom $R_\ell$ or $Q_{\sigma, \ell}$ is replaced by the formula defining this relation from the previous stage.
		\item 
		We construct a dynamic unary \DynC program $\Pi'$ for $\Member{L \sigma^{-1}}$, the argument for $\Member{\sigma^{-1}L}$ is analogous. The construction is very similar to (a), but simpler. The dynamic program $\Pi'$ uses the following auxiliary relations:
		\begin{itemize}
			\item two unary relations $Q_{\tau,1}$ and $Q_{\tau,2}$ for every $\tau\in\Sigma$,
			\item two unary relations $R_1$ and $R_2$ for every unary auxiliary relation $R$ of $\Pi$, and
			\item an additional unary relation $\textit{Max}$.
		\end{itemize}
		The relations $Q_{\tau,1}$ and $Q_{\tau,2}$ encode a word with which $\Pi$ is simulated as in Part (a). The relations $Q_{\tau,1}$ will be used to store $w$. The relations $Q_{\tau,2}$ will only be used to store the last $\sigma$.
		
		The relation $Q_{\sigma,2}$ is initialized to $\{n\}$, which means that initially, the symbol $\sigma$ is written to the last position of the maintained word.
		Also, $\textit{Max}$ is initialized to $\{n\}$.
		The relations \(R_{1}\) and \(R_{2}\) are then initialized according to the update formulas of \(\Pi\) for the case that the last letter is set to \(\sigma\). This does not require to existentially quantify that position, as any atom $x \leq y$ of the update formula involving the variable $x$ of the changed position can be replaced using $\textit{Max}$ and any atom $R_i(x), Q_{\tau,i}(x)$ can be replaced either by $\top$ or $\bot$.
		
		On a change $\set_\gamma(i)$, the index $i$ is included into $Q_{\gamma,1}$.
		The relations $R_1$ and $R_2$ are maintained as in (a). As a change of $w$ only leads to one change in the word represented by the relations, $Q_{\tau,i}$, it is not necessary to compose formulas. \qedhere
	\end{enumerate}  %
\end{proof}

\monoidtolanguage*
\begin{proof}
	Assume that \(\Member{M}\) is in \(\DynC\) and that \(L\) is recognised by some subset \(P \subseteq M\) with morphism \(\varphi\).
	The dynamic program \(\Pi\) for \(\Member{M}\) stores for each \(x\in M\) whether the maintained word evaluates to \(x\) in a bit \(q_{x}\).
	We can maintain \(\Member{L}\) using an output bit \(b\) that is updated via the disjunction of the update formulas for the \(q_{x}\) with \(x\in P\).
	This new program witnesses \(\Member{L}\in\DynC\) thanks to the closure under \(\lor\) of~\(\calC\).
\end{proof}

\closuredivision*
\begin{proof}
	Take \(M\) such that \(\Member{M}\in \DynC\). We only have to prove the result when  \(N\) is either a submonoid or a quotient of \(M\).
	
	First, let \(N\) be a submonoid of \(M\). %
	It means that up to relabelling of the elements of \(M\), \(N\) can be seen as a subset of \(M\).
	Therefore we have directly an algorithm for \(\Member{N}\) by using the algorithm for \(\Member{M}\).
	
	Second, let \(N\) be a quotient of \(M\) and \(\mu:M\rightarrow N\) be a surjective morphism.
	It means that for every \(x\in N\), we can find \(y_{x}\in M\) such that \(\mu(y_{x})=x\).
	We can use the algorithm for \(\Member{M}\) where the input word \(x_{1}\cdots x_{n}\in N^{*}\) is seen as the word \(y_{x_{1}}\cdots y_{x_{n}}\in M^{*}\).
	To do that, we replace every occurence in the formulas of the letter relations \(W_{y_{x}}\) by \(W_{x}\) for \(x\in N\).
	The relations \(W_{y}\) not of this form are replaced by False.
	Finally, the output bit for \(x\in N\) is maintained with the disjunction of all the update formulas for the elements in \(\varphi^{-1}(x)\).
\end{proof}

This proof can be adapted to catch ordered monoids as well.
The only difference is for the closure under quotients, where we have to use the fact than an ordered morphism preserves the order.
Thus \(\varphi^{-1}(\up{x})\) is an upset, and can be maintained with the disjunction of all update formulas for \(\up{y}\) with \(y\in\varphi^{-1}(\up{x})\).

\section{Proofs of Section \ref{sec:udynst}}
\subsection{Proof of Lemma \ref{fact:J_falling}}
\label{appendix:I-chain}

\Jfalling*
\begin{proof}
  We only prove the first part of the statement, the second one being symmetrical.
  We set \(\ell_{0} = j\) and iteratively identify the numbers \(\ell_{s}\) for $s>0$.
  Let \(\ell_{s+1}\)  be the least index greater than \(\ell_{s}\) such that \(w[j,\ell_{s+1}-1]\ \cancel{\greenJ}\ w[j,\ell_{s+1}]\).
  Property ii) is satisfied by definition.
  By maximality, we have that \(w[j,i-1]\ \greenJ\ w[j,i]\) for every \(\ell_{s}< i<\ell_{s+1}\).
  We obtain by induction that \( w[j,\ell_{s}]\ \greenJ\ w[j,\ell_{s}+1]\ \greenJ \ \cdots \ \greenJ\ w[j,\ell_{s+1}-1]  \), yielding Property i).

  All is left to see is that it terminates in less than \(|M|\) iterations.
  Let \(x_{s}= w[j,\ell_{s+1}-1]\), for each $0 \leq s < m$.
  By definition,  \(x_{s}\) is a prefix of \(x_{s+1}\) and thus \(x_{s} \geq_{\greenJ} x_{s+1} \).
  Then \(x_{s} \ \cancel{\greenJ}\ x_{s}w_{\ell_{s+1}}\) and it is impossible that \(x_{s+1}\) and \(x_{s}\) are in the same \(\greenJ\)-class.
  It implies that we have a chain $x_{0} >_{\greenJ} x_{1}  >_{\greenJ}\cdots  >_{\greenJ} x_{m-1}$ of strict relations.
  Such a chain has size bounded by the size of \(M\), concluding the proof.
\end{proof}

\subsection{Proof of Lemma \ref{claim:st_relations_exists}}
\label{appendix:sigma2}

\strelationsexists*
 We prove Part ii) in \cref{sec:app:expressibility} and Part i) in \cref{sec:app:maintainablity}
  \subsubsection{\texorpdfstring{Proof that evaluations of infixes can be expressed using the relations in $\schema_M$}{Proof that evaluations of infixes can be expressed using the relations in tau-M}}
\label{sec:app:expressibility}
  We first prove Part ii) of the lemma, that the auxiliary relations from $\schema_M$ are sufficient to express the evaluation of infixes, afterwards we show that the relations can be updated using $\Sigma_2$-formulas.
We will repeatedly use the following fact about strengthening an order relation into an equivalence inside a \(\greenJ\)-class.

\begin{fact}
  \label{claim:R_order_to_R_equiv}
  Let \(x,y\in M\) such that \(x\greenJ y\) and \(x\leq_{\greenR}y\).
  Then \(x\greenR y\).

  Dually, if \(x\greenJ y\) and \(x\leq_{\greenL}y\), then \(x\greenL y\).
\end{fact}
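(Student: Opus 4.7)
The plan is the classical ``idempotent power'' argument that establishes stability of finite monoids. Unfolding the hypotheses, $x \leq_{\greenR} y$ yields some $\alpha \in M$ with $x = y\alpha$, and $y \leq_{\greenJ} x$ (which is part of $x \greenJ y$) yields $\beta, \gamma \in M$ with $y = \beta x \gamma$. Substituting the first equation into the second gives $y = \beta y \alpha\gamma$, and a straightforward induction produces the identity $y = \beta^n y (\alpha\gamma)^n$ for every $n \geq 1$.

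The key step is to specialize $n$ to the exponent for which $(\alpha\gamma)^n = (\alpha\gamma)^\omega$, the idempotent power whose existence in a finite monoid is recalled in the preliminaries. Writing $(\alpha\gamma)^\omega = (\alpha\gamma)^k$ for some $k \geq 1$, I would multiply the identity $y = \beta^\omega y (\alpha\gamma)^\omega$ on the right by $(\alpha\gamma)^\omega$; idempotency collapses the product and yields $y(\alpha\gamma)^\omega = y$. Now decomposing $(\alpha\gamma)^k = \alpha \cdot \gamma(\alpha\gamma)^{k-1}$ and substituting $y\alpha = x$, I obtain $y = y\alpha \cdot \gamma(\alpha\gamma)^{k-1} = x \cdot \gamma(\alpha\gamma)^{k-1}$, which witnesses $y \leq_{\greenR} x$. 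Together with the hypothesis $x \leq_{\greenR} y$ this gives $x \greenR y$.

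For the dual statement, the same reasoning applies after swapping left and right: starting from $x = \alpha y$ and $y = \beta x \gamma$ one derives $y = (\beta\alpha)^n y \gamma^n$, passes to the idempotent power of $\beta\alpha$, multiplies on the left by $(\beta\alpha)^\omega$, and arrives at $y = (\beta\alpha)^{k-1}\beta \cdot x$, which shows $y \leq_{\greenL} x$.

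There is no real obstacle: the entire argument is a few lines of monoid algebra, and the only non-trivial ingredient is the existence of the idempotent power $s^\omega$ in a finite monoid, which is already stated in the preliminaries. In fact, this fact is precisely the classical stability property of finite semigroups (see~\cite[Prop.~II.6.31 ff.]{pin_book_2014}), so an equally valid alternative would be to cite it directly rather than reproving it.
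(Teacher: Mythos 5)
Your proof is correct and follows essentially the same route as the paper's: substitute $x=y\alpha$ into $y=\beta x\gamma$, iterate to reach the idempotent power of $\alpha\gamma$, and regroup $y(\alpha\gamma)^{\omega}=y$ as $x\cdot\gamma(\alpha\gamma)^{\omega-1}=y$ to witness $y\leq_{\greenR}x$ (and dually for $\greenL$). The only cosmetic difference is that the paper simultaneously raises $\beta$ to an idempotent power, which is not actually needed, as your version shows.
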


\begin{proof}
  By definition, if \(x \leq_{\greenR} y\) then there is \(\alpha\in M\) such that \(x=y\alpha \).
  With the other assumption, \(y\greenJ y \alpha\) follows.
  Hence there are \(\beta, \gamma\in M\) such that \(y = \beta y\alpha\gamma\).
  Therefore, we also have \(y=\beta^{2}y(\alpha \gamma)^{2}\).
  We can repeat the process until with have an idemptotent on the left and the right: \(y=\beta^{\omega}y(\alpha \gamma)^{\omega}\).
  Thus, \(y\alpha [\gamma(\alpha\gamma)^{\omega-1}] = \beta^{\omega}y(\alpha \gamma)^{2\omega} = y\).
  So \(y\leq_{\greenR}y\alpha =x\), which implies \(x\greenR y\).
\end{proof}

  \paragraph{Evaluation of infixes}

  It is important to understand in which cases the multiplication of two monoid elements produces a fall in the \(\greenJ\)-classes.
  This is given by a first part of Green's lemma.

  \begin{claim}[see \protect{~\cite[Prop V.1.10]{pin_book_2014}}]
    \label{claim:green_lemma_I}
      Let \(x,y\in M\) such that \(x\greenL y\) and they both belong to the \(\greenJ\)-class \(J\).
  Then for any \(z\in M\):
  \[ xz\in J \Leftrightarrow yz \in J .  \]
  \end{claim}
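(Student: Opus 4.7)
My plan is to exploit the symmetry of the hypothesis (since $\greenL$ is an equivalence relation and both $x, y$ lie in the same $\greenJ$-class $J$) and prove only the implication $xz \in J \Rightarrow yz \in J$; the converse follows by swapping the roles of $x$ and $y$.

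To prove the forward implication, I would first unfold the assumption $x \greenL y$ to obtain an element $u \in M$ with $x = uy$, using the definition of $\leq_{\greenL}$ recalled in the preliminaries. Multiplying on the right by $z$ gives $xz = uyz$, which exhibits $xz$ as a ``left-multiple'' of $yz$, so that $xz \leq_{\greenL} yz$ and in particular $xz \leq_{\greenJ} yz$. (This is exactly the suffix remark already used in the excerpt: if $a$ is a suffix of $b$, then $b \leq_{\greenL} a$, applied here with $a = yz$ and $b = xz$.)

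The second ingredient is the obvious chain $yz \leq_{\greenJ} y$, together with $y \greenJ x$ and the hypothesis $xz \greenJ x$. Putting these together yields
\[
x \ \greenJ\ xz \ \leq_{\greenJ}\ yz \ \leq_{\greenJ}\ y \ \greenJ\ x,
\]
which forces $yz \greenJ x$ by antisymmetry of $\leq_{\greenJ}$ on $\greenJ$-classes. Hence $yz \in J$, as desired.

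There is no real obstacle here: this is the standard half of Green's lemma, and the proof is essentially a line of definitions plus one chain of $\leq_{\greenJ}$ inequalities. The only point that deserves a moment of care is the orientation convention, namely that $a = ub$ implies $a \leq_{\greenL} b$ (and not the reverse), as fixed in the preliminaries; once that is settled, the argument goes through symmetrically in $x$ and $y$.
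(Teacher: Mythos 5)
Your proof is correct, but it takes a different route from the paper's. You close a cycle of $\leq_{\greenJ}$-inequalities ($xz \leq_{\greenJ} yz \leq_{\greenJ} y \greenJ x \greenJ xz$) and invoke antisymmetry of the induced order on $\greenJ$-classes; the only ingredients are the trivial facts that left-multiplying preserves $\leq_{\greenL}$ and that right-multiplying can only decrease the $\greenJ$-class. The paper instead first upgrades the automatic inequality $xz \leq_{\greenR} x$ to the equivalence $xz \greenR x$ using \cref{claim:R_order_to_R_equiv} (which rests on idempotent powers, hence on finiteness of $M$), extracts a witness $\beta$ with $xz\beta = x$, and transports it along $y = \alpha x$ to get $yz\beta = y$, concluding $yz \greenR y$. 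The trade-off: your argument is more elementary and works in an arbitrary monoid, since it never needs \cref{claim:R_order_to_R_equiv}; the paper's argument is tied to finite (or stable) monoids but actually delivers the stronger conclusion $yz \greenR y$ rather than just $yz \in J$, which is the form of Green's lemma needed elsewhere (e.g.\ in \cref{lem:H_action}). Both establish the stated claim; just note that your symmetry reduction is legitimate precisely because the hypotheses are symmetric in $x$ and $y$, and that you correctly oriented the convention $a = ub \Rightarrow a \leq_{\greenL} b$.
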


  \begin{proof}
  The statement is symmetrical, we only need to prove one direction.  	
  By assumption, there exists \(\alpha\in M\) such that \(y=\alpha x\).
  Suppose that \(xz\in J\), that is to say \(xz \greenJ x\).
  Because \(xz\leq_{\greenR}x\) is always true, by~\cref{claim:R_order_to_R_equiv} we have \(xz\greenR x\).
  Hence there exists \(\beta\in M\) such that \(xz\beta = x\).
  Thus, \(yz\beta = \alpha x z \beta = \alpha x =y\).
  It implies that \(yz\greenR y\), and in particular that \(yz\in J\), as $\greenR \subseteq \greenJ$.
\end{proof}

This allows to obtain an algebraic characterization to evaluate the \(\greenR\)-classes and \(\greenL\)-classes of infixes.

 \begin{lemma}
   \label{lemma:find_r_class}
   Let \(M\) be a monoid and \(w\in M^{*}\). Let  \(j<k\) be two positions and \(R\) a \(\greenR\)-class.
   Then \(w[j+1,k-1]\) evaluates to an element in \(R\) if and only if there exists \(x_{1},\ldots, x_{m}\in M\) and \(j=\ell_{0}<\ell_{1}<\cdots <k\leq \ell_{m}\) with \(m\leq |M|\) such that:
   \begin{enumerate}[i)]
     \item \(x_{m}\in R\),
     \item for \(1\leq s< m\), \(R_{\geq \greenJ(x_{s})}(j) = x_{s}\),
     \item for \(0\leq s< m\) and \(\ell_{s}< i< \ell_{s+1}\), \(L_{\geq \greenJ(x_{s+1})}(i)\cdot w_{i} \greenJ x_{s+1}\),
     \item for \(1\leq s< m\), \(x_{s}\cdot w_{\ell_{s}} \not\geq_{\greenJ} x_{s} \) and $x_{s} \cdot w_{\ell_s} \greenJ x_{s+1}$.
   \end{enumerate}

   There is a similar statement for \(\greenL\)-classes.
 \end{lemma}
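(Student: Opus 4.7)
The plan is to identify the witnesses $(x_s, \ell_s)$ with the decomposition of the prefix sequence $w[j+1, \cdot]$ provided by \cref{fact:J_falling}. For the \emph{only if} direction, assume $w[j+1, k-1]$ lies in the \(\greenR\)-class $R$. Apply the left-to-right form of \cref{fact:J_falling} to obtain break points $j = \ell_0 < \ell_1 < \cdots$ at which the \(\greenJ\)-class of the prefix strictly drops, and truncate at the first $\ell_m \geq k$. Define $x_s := w[j+1, \ell_s - 1]$ for $1 \leq s \leq m-1$, and let $x_m$ be the evaluation of $w[j+1, k-1]$ itself, which lies in $R$ by assumption, giving~(i). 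Condition~(ii) follows from the maximality of each block: the longest prefix of $w[j+1, \cdot]$ with \(\greenJ\)-class \(\geq_\greenJ \greenJ(x_s)\) ends precisely at position $\ell_s - 1$, so $R_{\geq \greenJ(x_s)}(j) = x_s$. Condition~(iv) is exactly the drop condition at $\ell_s$. For condition~(iii), note that for $i$ strictly inside a block, the prefix $w[j+1, i-1]$ itself has \(\greenJ\)-class $\greenJ(x_{s+1})$; hence any leftmost left-extension of $w[\cdot, i-1]$ that stays \(\geq_\greenJ \greenJ(x_{s+1})\) also lies in $\greenJ(x_{s+1})$, and appending $w_i$ remains in this class by \cref{claim:green_lemma_I} applied to the in-block constancy.

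For the \emph{if} direction, I would prove by induction on $s$ that the evaluation of $w[j+1, \ell_s - 1]$ equals $x_s$. The base case is condition~(ii) at $s = 1$. The inductive step combines (iii) and (iv) to identify $\ell_{s+1}$ as the next \(\greenJ\)-drop of the prefix sequence and $\greenJ(x_{s+1})$ as the resulting class; then~(ii) at $s+1$ forces $w[j+1, \ell_{s+1} - 1] = x_{s+1}$. For the last step, $w[j+1, k-1]$ lies inside the last block $[\ell_{m-1}, \ell_m - 1]$, and within this block successive one-letter extensions preserve the \(\greenJ\)-class (by (iii)); iterating \cref{claim:R_order_to_R_equiv} shows that all prefixes of the block are \(\greenR\)-equivalent to one another. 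The chain started by $x_{m-1} \cdot w_{\ell_{m-1}}$, which is \(\greenJ\)-equivalent to $x_m$ by (iv), pins down the \(\greenR\)-class, and together with $x_m \in R$ this yields $w[j+1, k-1] \in R$.

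The main obstacle lies in condition~(iii): the $L$-based formulation is indirect --- one would prefer to state ``$w[j+1, i] \greenJ x_{s+1}$'' --- but the given wording is deliberately chosen so that the whole characterization can later be expressed by a $\Sigma_2$-formula using only the unary relations of $\schema_M$, with existential quantifiers for the $(\ell_s, x_s)$ and universal quantifiers over intermediate positions $i$. Care must be taken to verify that the leftmost extension in the definition of $L_{\geq J}(i)$ does not sit in a strictly larger \(\greenJ\)-class that could drop below $\greenJ(x_{s+1})$ upon appending $w_i$. The symmetric statement for \(\greenL\)-classes follows by swapping the roles of the $R$- and $L$-relations and using the right-to-left form of \cref{fact:J_falling}.
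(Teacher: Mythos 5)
Your overall strategy is the same as the paper's: both directions hinge on the left-to-right decomposition of \cref{fact:J_falling}, with $x_s := w[j+1,\ell_s-1]$, and the converse is an induction recovering each $x_s$ from condition~(ii) combined with (iii) and (iv). Two smaller points first: in the forward direction, establishing~(iii) needs more than $\greenJ$-equivalence of $L_{\geq\greenJ(x_{s+1})}(i)$ with $w[j+1,i-1]$ --- \cref{claim:green_lemma_I} requires $\greenL$-equivalence, which you must extract via the dual of \cref{claim:R_order_to_R_equiv} (one word is a left multiple of the other); and the base case of your converse induction needs (iii) at $s=0$ and (iv) at $s=1$ in addition to (ii), for the same reason as the inductive step.

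The genuine gap is the last step of the ``if'' direction. You correctly obtain that all evaluations of prefixes within the final block are $\greenR$-equivalent, so $w[j+1,k-1]\ \greenR\ x_{m-1}\cdot w_{\ell_{m-1}}$, and (iv) gives $x_{m-1}\cdot w_{\ell_{m-1}}\ \greenJ\ x_m$. But $\greenJ$-equivalence does not ``pin down the $\greenR$-class'': a $\greenJ$-class may contain several $\greenR$-classes (e.g.\ in the Brandt monoid $B_2$), so from $w[j+1,k-1]\ \greenJ\ x_m$ and $x_m\in R$ you cannot conclude $w[j+1,k-1]\in R$. Conditions (i), (iii) and (iv) constrain $x_m$ only through its $\greenJ$-class and its membership in $R$; the only condition that could tie $x_m$ to the actual word is (ii), which as written excludes $s=m$. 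The paper's proof in effect uses (ii) at $s=m$: it takes $x_m:=w[j+1,\ell_m-1]$ in the forward direction (so that $x_m$ is the evaluation of a prefix extending $w[j+1,k-1]$ inside the same $\greenJ$-class, hence $\greenR$-equivalent to it by \cref{claim:R_order_to_R_equiv}), and in the converse first recovers $w[j+1,\ell_m-1]=x_m$ before invoking that same fact. Your choice $x_m:=w[j+1,k-1]$ makes (i) immediate but is exactly what breaks this mechanism, since then $R_{\geq\greenJ(x_m)}(j)$ equals $w[j+1,\ell_m-1]$ rather than $x_m$, and the converse has no means left to recover the $\greenR$-class of the infix.
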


\tikzset{
	subsetneq/.style={
		draw=none,
		edge node={node [sloped, allow upside down, auto=false, fill=white]{$\subsetneq$}}},
	iseq/.style={
		draw=none,
		edge node={node [sloped, allow upside down, auto=false, fill=white, inner sep=2pt]{$\stackrel{?}{=}$}}}
}

\begin{figure}
	\centering
   \begin{tikzpicture}
  	\node (w1) at (0,0) {\LARGE \(w_{j+1}\)};
  	\node (wl1) at (2.5,0) {\LARGE\(w_{\ell_1-1}\)};
  	\node (wl1p) at (3.5,0) {\LARGE\(w_{\ell_1}\)};
  	\node (wl2) at (6,0) {\LARGE\(w_{\ell_2-1}\)};
  	\node (wl2p) at (7,0) {\LARGE\(w_{\ell_2}\)};
  	\node (wl2b) at (10,0) {\LARGE\(w_{k-1}\)};
  	\node (wn) at (11.8,0) {\LARGE\(w_{l_{m}-1}\)};
  	
  	\draw (w1) edge[dotted] (wl1);
  	\draw (wl1p) edge[dotted] (wl2);
  	\draw (wl2p) edge[dotted] (wl2b);
  	\draw (wl2b) edge[dotted] (wn);
  	
  	\draw [decorate,
  	decoration = {calligraphic brace, raise = 0cm, amplitude=0.2cm },
  	line width = 0.2mm] ($(wl1.south east) - (0.1,0)$) --  ($(w1.south west) + (0.1,0)$)
  	node[pos=0.5, black, yshift = -0.4cm] (x1) {\(x_{1}\)};
  	
  	\draw [decorate,
  	decoration = {calligraphic brace, raise = 1cm, amplitude=0.2cm },
  	line width = 0.2mm] ($(wl2.south east) - (0.1,0)$) --  ($(w1.south west) + (0.1,0)$)
  	node[pos=0.5, black, yshift = -1.4cm] (x2) {\(x_{2}\)};
  	
  	\draw [decorate,
  	decoration = {calligraphic brace, raise = 2cm, amplitude=0.2cm },
  	line width = 0.2mm] ($(wn.south east) - (0.1,0)$) --  ($(w1.south west) + (0.1,0)$)
  	node[pos=0.5, black, yshift = -2.4cm] (xn) {\(x_{m}\)};
  	
  	\node (x1p) at (x1 -| x2) {\(x_{1}w_{\ell_{1}}\)};
  	\node (x2p) at (x2 -| xn) {\(x_{2}w_{\ell_{2}}\)};
  	\node (x0) at (x1 -| w1) {\(w_{1}\)};
  	
  	\node at ($(x1.east)!0.6!(x1p.west)$) {\(\textcolor{red}{>_{\mathcal{J}}}\)};
  	\node at ($(x2.east)!0.6!(x2p.west)$) {\(\textcolor{red}{>_{\mathcal{J}}}\)};
  	\node at ($(x0.east)!0.5!(x1.west)$) {\(\textcolor{red}{\mathcal{J}}\)};
  	
  	\node at ($(x1p.south)!0.4!(x2.north)$) {\(\textcolor{red}{\mathcal{J}}\)};
  	\node at ($(x2p.south)!0.2!(xn.north)$) {\(\textcolor{red}{\vdots}\)};

  \end{tikzpicture}
  \caption{Representation of the situation in the proof of~\cref{lemma:find_r_class}}
  \label{fig:falling_J}
\end{figure}

 \begin{proof}
  First, assume that \(w[j+1,k-1]\) evaluates to some \(x\in R\).
  Let \(j=\ell_{0}<\cdots<\ell_{m-1}<k\leq \ell_{m}\) with \(m\leq |M|\) be as given by \cref{fact:J_falling}.
  Note that we have discarded all indices after the first one after \(k\).
  For \(1\leq i\leq m\), let \(x_{i}= w[j+1,\ell_{i}-1]\).
  The situation is depicted in~\cref{fig:falling_J}.
  For i), we observe that \(x_{m}\) and \(x\) are both \(\greenJ\)-equivalent to \(w[j+1,\ell_{m}-1]\) by the choice of the indices $\ell_i$ and the definition of $x_m$. 
  Also by definition, \(x_{m} \leq_{\greenR} x\) holds and thus $x_m \greenR x$ by~\cref{claim:R_order_to_R_equiv}.

  The first property of~\cref{fact:J_falling} asserts that \(w[j+1,l_{s}]\greenJ w[j+1,l_{s+1}-1]\), which is exactly the second part of iv).
  The second property of~\cref{fact:J_falling} gives that for any \(s\), \(x_{s}\) and \(x_{s}\cdot w_{l_{s}}\) are not \(\greenJ\)-equivalent.
  Because \(x_{s}\cdot w_{l_{s}} \leq_{\greenJ} x_{s}\), this implies the first part of iv).
  This also implies that any further prefixes \(w[j+1,i]\) for \(i\geq l_{s}\) cannot be \(\greenJ\)-greater than \(x_{s}\) (indeed, multiplication can only decrease the \(\greenJ\)-class).
  Thus \(w[j+1,l_{s}-1]\) is the greatest prefix starting from \(j+1\) that is \(\greenJ\)-equivalent to \(x_{s}\), yielding ii).
  We now prove that iii) holds. Let \(1\leq s< m\) and \(\ell_{s} < i< \ell_{s+1}\).
  \Cref{fact:J_falling} gives that \(w[j+1,l_s] \greenJ w[j+1,i-1] \greenJ x_{s+1}\).
	Let \(i'\) be the index defining \(L_{\geq \greenJ(x_{s+1})}\) as \(w[i',i-1]\).
	If \(i'>j+1\), then \(w_{i'-1}\cdot w[i',i-1]\) is not \(\greenJ\)-greater than \(x_{s+1}\), by definition of \(L\).
  But \(w_{i'-1}\cdot w[i',i-1]\) is a suffix of \(w[j+1,i-1]\) which is in \(\greenJ(x_{s+1})\).
  This is a contradiction, as multiplying can only decrease the \(\greenJ\)-class. Thus \(i' \leq j+1\).
	By definition, \(w[i',i-1]\) is \(\greenJ\)-greater than \(x_{s+1} \greenJ w[j+1,i-1]\).
  We also have the other direction, as \(w[j+1,i-1]\) is a suffix of \(w[i',i-1]\): they are \(\greenJ\)-equivalent.
	By~\cref{claim:R_order_to_R_equiv}, they are \(\greenL\)-ordered (one is the suffix of the other), hence they are \(\greenL\)-equivalent.
	\Cref{fact:J_falling} gives that \(w[j+1,i-1]w_i J x_{s+1}\), hence we conclude by ~\cref{claim:green_lemma_I}.

  Conversely, assume that Properties i) to iv) are satisfied.
  By induction, we prove that for every \(s\), \(w[j+1,\ell_{s}-1]\) is the greatest prefix of \(w[j+1,|w|]\) that evaluates in \(\greenJ(x_{s})\), and that it evaluates to \(x_{s}\).
  We nest an induction on \(\ell_{s}<i\leq \ell_{s+1}\) to prove that \(w[j+1,i-1]\) belongs to \(\greenJ(x_{s+1})\).
  The case \(i=\ell_{s}+1\) is assumed to hold by the second part of iv).
  Now assume that \(w[j+1,i-1]\) is in \(\greenJ(x_{s+1})\).
  This gives that \(i'\leq j+1\) for \(i'\) the index defining \(L_{\geq \greenJ(x_{s+1})}\) as \(w[i',i-1]\).
	By definition, \(w[i',i-1]\) is \(\greenJ\)-greater than \(x_{s+1}\), which it itself \(\greenJ\)-equivalent to \(w[j+1,i-1]\) by the nested induction hypothesis.
	But also \(w[j+1,i-1]\) is a suffix of \(w[i',i-1]\) and hence the other direction follows.
	Thus \(w[i',i-1]\) and \(w[j+1,i-1]\) are \(\greenJ\)-equivalent, but one in the suffix on the other so there are further \(\greenL\)-equivalent by~\cref{claim:R_order_to_R_equiv}.
  By~\cref{claim:green_lemma_I}, \(L_{\geq \greenJ(x_{s})}(i)\cdot w_{i}\) in \(\greenJ(x_{s})\) implies \(w[j+1,i] \) in \(\greenJ(x_{s})\).
  This achieves the nested induction.
  By iv), \(w[j+1, \ell_{s+1}]\) does not belong to \(\greenJ(x_{s+1})\).
  So by ii), the greatest word greater than \(\greenJ(x_{s+1})\) starting at \(j+1\), which is necessarily \(w[j+1, \ell_{s+1}-1]\), evaluates to \(x_{s+1}\).
  This achieves the induction.
  Because \(x_{m}\) is \(\greenR\)-equivalent with the evaluation of \(w[j+1,k-1]\) (with~\cref{claim:R_order_to_R_equiv}), this concludes the proof.

\end{proof}

This immediately gives a formula to compute infixes.

\begin{corollary}
  \label{cor:r_classes_formulas}
   Let $M$ be a monoid and \(w \in M^*\). 
   For every \(\greenR\)-class \(R\) of \(M\) there is a \(\Sigma_{2}\)-formula \(\psi_{R}(j,k)\) over \(\schema_{M}\) that is satisfied if and only if \(w[j+1,k-1] \) evaluates to an element in \(R\).

   There is a similar statement for \(\greenL\)-classes, giving formulas \(\psi_{L}(j,k)\).
 \end{corollary}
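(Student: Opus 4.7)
The plan is to translate the characterization of~\cref{lemma:find_r_class} directly into a logical formula over $\schema_M$. Since $m \leq |M|$ is bounded by a constant and $M$ is finite, the quantification over the monoid elements $x_1, \ldots, x_m$ (and the choice of $m$ itself) can be unfolded into a finite disjunction at formula-construction time; only the positions $\ell_1, \ldots, \ell_m$ and the universally-quantified position $i$ appearing in condition~(iii) need to remain as bound first-order variables.

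First I would write the outer structure as a finite disjunction $\bigvee_{m, (x_1, \ldots, x_m)} \exists \ell_1 \cdots \exists \ell_m \, \varphi_{m, x_1, \ldots, x_m}(j, k, \ell_1, \ldots, \ell_m)$ ranging over $m \leq |M|$ and tuples $(x_1, \ldots, x_m) \in M^m$ with $x_m \in R$, which discharges condition~(i). The boundary constraints $j < \ell_1 < \cdots < \ell_{m-1} < k \leq \ell_m$ become quantifier-free atoms using $\leq$.

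Inside $\varphi$, I would encode conditions~(ii) and~(iv) quantifier-free. Condition~(ii) becomes the atom $R_{\geq \greenJ(x_s), 1, x_s}(j)$ directly. Condition~(iv) requires the letter at position $\ell_s$, which I would express by $\bigvee_{\sigma \in M} (W_\sigma(\ell_s) \wedge c_{s, \sigma})$, where $c_{s, \sigma} \in \{\top, \bot\}$ is evaluated at formula-construction time by testing $x_s \cdot \sigma \not\geq_\greenJ x_s$ and $x_s \cdot \sigma \greenJ x_{s+1}$ in the fixed finite monoid. The genuinely universal condition~(iii) I would handle with a single $\forall i$ shared across all $s$: the inner clause, conjoined over $s$, reads $(\ell_s < i \wedge i < \ell_{s+1}) \rightarrow \bigvee_{y, \sigma \in M \,:\, y \sigma \greenJ x_{s+1}} (L_{\geq \greenJ(x_{s+1}), 1, y}(i) \wedge W_\sigma(i))$, which is quantifier-free.

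Combining the outer finite disjunction, the $\exists$-block over the $\ell_s$, and the single $\forall i$-block, the resulting formula has exactly one alternation and therefore lies in $\Sigma_2$. The $\greenL$-class variant $\psi_L$ is obtained by a symmetric translation using the second part of~\cref{lemma:find_r_class} and the relations $\bar{L}_{\geq J, y, x}, \bar{R}_{\geq J, y, x}$. I do not anticipate a serious obstacle; the only care required is to keep each occurrence of $W_\sigma$ inside an appropriate quantifier scope so that no negation is introduced and so that the single-alternation bound is preserved.
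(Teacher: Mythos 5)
Your proposal is correct and follows essentially the same route as the paper's proof: a finite disjunction over the tuples $(x_1,\ldots,x_m)$ with $x_m\in R$, an existential block for the positions $\ell_s$, quantifier-free atoms for conditions~(ii) and~(iv), and a universal block for condition~(iii), with correctness delegated to~\cref{lemma:find_r_class}. The only (harmless) differences are that you merge the universal quantifiers into a single shared $\forall i$ and that you encode both conjuncts of condition~(iv), where the paper's displayed formula only tests the first.
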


 \begin{proof}
 	The formula $\psi_R$ is a disjunction over all \(x_{1},\ldots,x_{m}\in M\) such that \(x_{m}\in R\). It then existentially quantifies the positions $\ell_0, \ldots, \ell_m$ and checks the following conditions:
   \begin{itemize}
     \item \(j=\ell_{0}<\ell_{1}< \cdots < k \leq \ell_{m}\),
     \item for every \(1\leq s<m\), \(R_{\geq \greenJ(x_{s}), x_{s}}(j)\),
     \item for every \(0\leq s<m\), \(\forall \ell_{s}< i<\ell_{s+1}, \bigvee_{y\cdot z \greenJ(x_{s})} L_{\geq \greenJ(x_{s}),y}(i) \land W_{z}(i)\), %
           \item  for every \(1\leq s<m\), \(   \bigvee_{x_{s}\cdot y \not\geq_{\greenJ} \greenJ(x_{s})} W_{y}(\ell_{s})  \).
   \end{itemize}
   The correctness is given by~\cref{lemma:find_r_class}.
 \end{proof}

 This enables us to find the \(\greenR\)-class and \(\greenL\)-class of every infix, but not the precise element inside the \(\greenH\)-class.
 To obtain it, we need the second part of Green's lemma.

\begin{claim}[see \protect{~\cite[Prop V.1.10]{pin_book_2014}}]
  \label{lem:H_action}
  Let \(x,z\in M\) such that \(x\) and \(xz\) are in the \(\greenH\)-classes \(H_{1}\) and \(H_{2}\), respectively.
  Assuming that \(H_{1}\) and \(H_{2}\) are part of the same \(\greenJ\)-class, the function
  \[ m_{z}: \left\{\begin{array}{rcl}
          H_{1} & \rightarrow & H_{2} \\
          y &\mapsto & yz
        \end{array}\right. \]
  is well-defined and a bijection.
\end{claim}

\begin{proof}
  We note that by~\cref{claim:R_order_to_R_equiv}, because \(xz\leq_{\greenR}x\) is always true, we have that \(xz\greenR x\).

  First of all, we need to show that for every \(y\in H_{1}\), the image \(m_{z}(y)=yz\) is in \(H_{2}\).
  On one hand, \(x\) and \(y\) are in particular \(\greenL\)-equivalent, there are \(\alpha,\beta\in M\) such that \(\alpha x =y\) and \(\beta y=x\).
  Hence \(\alpha xz = yz\) and \(\beta yz = xz\), which means that \(xz \greenL yz\).
  On the other hand, \(yz\leq_{\greenR}y\) and \(yz\greenJ y\) (by the preceding point).
  By~\cref{claim:R_order_to_R_equiv}, we have that \(yz\greenR y\).
  Thus there is a chain of equivalences \(yz\greenR y\greenR x \greenR xz\).
  All together, it stands that \(yz\greenH xz\) and then belongs to \(H_{2}\).

  All is left is to see that \(m_{z}\) is a bijection.
  Because \(xz\greenR x\) there is \(t\in M\) such that \(x = xzt\).
  We consider the function \(m_{t}: H_{2}\rightarrow H_{1}; x \mapsto xt\), which is well-defined with the same reasoning as before.
  For \(y\in H_{1}\), take \(\alpha\) such that \(\alpha x= y\).
  We have \(m_{t}(m_{z}(y))=yzt = \alpha x zt = \alpha x = y\).
  Similarly, for all \(y\in H_{2}\), we have \(m_{z}(m_{t}(y))=y\).
  Hence \(m_{z}\) and \(m_{t}\) are inverses of each other, and \(m_{z}\) is a bijection.
\end{proof}

\begin{lemma}
  \label{lemma:h_class_alg}
  Let \(M\) be a monoid and \(w\in M^{*}\). Let \(j<k\) be two positions and \(H\) a \(\greenH\)-class in a \(\greenJ\)-class \(J\).
  Let \(t\) be any element in \(H\).
  If \(w[j+1,k-1]\in H\) then \(w[j+1,k-1] = m^{-1}_{z}(R_{\geq J}(j))\), where \(z= \bar{R}_{\geq J,t}(k)\) and \(m_{z}\) is the function defined in~\cref{lem:H_action}.
\end{lemma}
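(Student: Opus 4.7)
The plan is to exploit that $R_{\geq J}(j)$ records the longest prefix starting at position $j+1$ that still lies in a $\greenJ$-class $\geq J$, while $\bar{R}_{\geq J,t}(k)$ records the complementary ``tail'' beyond position $k-1$, so that the bijection from Claim~\ref{lem:H_action} lets us invert the right-multiplication. Concretely, let $k^{\ast}$ be the largest position with $w[j+1,k^{\ast}]\geq_{\greenJ}J$, so that $R_{\geq J}(j)=w[j+1,k^{\ast}]$. Since $w[j+1,k-1]\in H\subseteq J$ and $\greenJ$-classes can only decrease as prefixes grow, we have $k^{\ast}\geq k-1$ and $w[j+1,k^{\ast}]\in J$. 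If we can establish that $\bar{R}_{\geq J,t}(k)=w[k,k^{\ast}]$, then $w[j+1,k-1]\cdot z = R_{\geq J}(j)$ and the conclusion is immediate from Claim~\ref{lem:H_action}, applied with $x=w[j+1,k-1]$, which guarantees that $m_z$ is a bijection from $H$ onto $H_{R_{\geq J}(j)}$.

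The main obstacle is therefore to show $\bar{R}_{\geq J,t}(k)=w[k,k^{\ast}]$, i.e.\ that the largest $k'$ with $t\cdot w[k,k']\geq_{\greenJ}J$ equals $k^{\ast}$. Since $t$ and $w[j+1,k-1]$ both lie in $H$, they are $\greenH$-equivalent. The plan here is to prove the stronger equivalence that, for every $k'\geq k-1$,
\[
t\cdot w[k,k']\in J \quad\Longleftrightarrow\quad w[j+1,k-1]\cdot w[k,k']\in J,
\]
since the right-hand side equals $w[j+1,k']$, so taking maxima over $k'$ yields the desired equality with $k^{\ast}$. For the equivalence, note that if $tu\in J$ for $u=w[k,k']$, then Claim~\ref{lem:H_action} applied with $x=t$ and $z=u$ produces a bijection $m_u\colon H\to H_{tu}$ whose image lies inside $J$; applying $m_u$ to $w[j+1,k-1]\in H$ shows $w[j+1,k-1]\cdot u\in J$. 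The converse direction is symmetric, with the roles of $t$ and $w[j+1,k-1]$ swapped.

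Combining the two pieces gives $z=w[k,k^{\ast}]$ and $m_z(w[j+1,k-1])=R_{\geq J}(j)$; inverting the bijection from Claim~\ref{lem:H_action} yields the claimed identity $w[j+1,k-1]=m_z^{-1}(R_{\geq J}(j))$. The only real subtlety is this double invocation of Green's lemma---once to transfer the predicate ``right-multiplying by $u$ stays inside $J$'' from $t$ to the $\greenH$-equivalent element $w[j+1,k-1]$, and once to invert the resulting right-multiplication map $m_z$.
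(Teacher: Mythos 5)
Your proof is correct and follows essentially the same route as the paper: identify the maximal position $k^{\ast}$ defining $R_{\geq J}(j)$, show that $\bar{R}_{\geq J,t}(k)$ evaluates the complementary factor $w[k,k^{\ast}]$ by transferring the ``stays in $J$'' condition between the $\greenH$-equivalent elements $t$ and $w[j+1,k-1]$, and then invert the bijection $m_z$ from \cref{lem:H_action}. The only cosmetic difference is that you obtain both directions of the transfer symmetrically from \cref{lem:H_action}, whereas the paper uses \cref{claim:green_lemma_I} for the ``no further than $k^{\ast}$'' direction; both are instances of Green's lemma and the arguments coincide.
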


\begin{proof}
  Assume that \(R_{\geq J}(j)\) is defined as the evaluation of \(w[j+1,i]\).
  Necessarily, \(i\geq k\) as \(w[j+1,k-1]\) already evaluates in \(J\).
  Let \(w=xyzt\) with \(x=w[1,j]\), \(y=w[j+1,k-1]\), \(z=w[k,i]\) and \(t=w[i+1,|w|]\).
  Thus \(R_{\geq J}(j)=yz\).
  Furthermore, for any \(y'\in H\), in particular for \(y'=y\) or \(y'=t\), \( \bar{R}_{\geq J, y'}(k)=z\).
  Indeed, let \(i'\) be the greatest index such that \(y'\cdot w[k,i']\) evaluates to a \(\greenJ\)-class greater than \(J\).
  By~\cref{lem:H_action}, \(y'z\) and \(yz\) are \(\greenH\)-equivalent and thus \(i'\geq i\).
  Moreover, \(yzw_{i+1}\notin J\) and thus \(y'zw_{i+1}\notin J\) by~\cref{claim:green_lemma_I}, yielding \(i'=i\).

  Finally, \cref{lem:H_action} tells us that \(m_{z}\) is a bijection from \(H\) to the \(\greenH\)-class of \(yz\).
  Therefore, now that we know the values of \(z\) and \(yz\), we can find the desired value of \(y\) with the formula \(y=m_{z}^{-1}(yz)\).
\end{proof}

We finally prove the second Part of~\cref{claim:st_relations_exists}.

\begin{corollary}
  \label{cor:computing_infixes}
   Let $M$ be a monoid and \(w \in M^*\). 
   For every \(x\in M\), there is a \(\Sigma_{2}\)-formula \(\psi_{x}(j,k)\) over \(\schema_{M}\) that is satisfied if and only if \(w[j+1,k-1] \) evaluates to \(x\).
 \end{corollary}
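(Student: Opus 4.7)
The plan is to combine Corollary~\ref{cor:r_classes_formulas} with Lemma~\ref{lemma:h_class_alg}. The former already provides $\Sigma_2$-formulas identifying the $\greenR$-class and the $\greenL$-class of the infix $w[j+1,k-1]$, and together these pin down the $\greenH$-class of the infix. The latter then lets me read off the exact element inside that $\greenH$-class from finitely many values stored in the unary relations of $\schema_M$, without any further quantifiers.

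First, fix $x \in M$. Let $H$ be its $\greenH$-class, with $\greenR$-class $R_x$, $\greenL$-class $L_x$, and common $\greenJ$-class $J$. I would take the conjunction $\psi_{R_x}(j,k) \wedge \psi_{L_x}(j,k)$ of the two formulas given by Corollary~\ref{cor:r_classes_formulas} (instantiated at $R_x$ and $L_x$). Since $H = R_x \cap L_x$, this conjunction holds iff $w[j+1,k-1]$ lies in $H$, and by merging the two $\exists^*\forall^*$ blocks into a single $\exists^*\forall^*$ prefix, the result remains a $\Sigma_2$-formula.

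Second, once the $\greenH$-class is fixed to $H$, Lemma~\ref{lemma:h_class_alg} tells me that the actual evaluation of the infix equals $m_z^{-1}(R_{\geq J}(j))$ where $z = \bar{R}_{\geq J, t}(k)$ for any chosen $t \in H$. Since $M$ is finite, equality of this value to $x$, i.e.\ $xz = R_{\geq J}(j)$, can be expressed as a quantifier-free disjunction over the relations of $\schema_M$:
\[
  \bigvee_{\substack{z,u \in M \\ xz = u}} \bar{R}_{\geq J, t, z}(k) \wedge R_{\geq J, u}(j).
\]
Appending this quantifier-free disjunct to the previous formula via conjunction preserves the $\Sigma_2$ shape, yielding $\psi_x(j,k)$.

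The main obstacle I anticipate is bookkeeping for degenerate cases where Lemma~\ref{lemma:find_r_class} and Lemma~\ref{lemma:h_class_alg} do not literally apply: the empty infix ($k \leq j+1$, which must evaluate to $1$) and the one-letter infix ($k = j+2$, whose value is read from the letter predicates $W_\sigma$). These are handled by additional disjuncts using the linear order and the $W_\sigma$, all quantifier-free, and hence do not increase the quantifier complexity. The desired $\Sigma_2$-formula $\psi_x(j,k)$ is then the disjunction of the main case with these special cases.
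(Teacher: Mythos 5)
Your proposal matches the paper's proof essentially verbatim: the paper also forms $\psi_{R}(j,k)\wedge\psi_{L}(j,k)$ from Corollary~\ref{cor:r_classes_formulas} to pin down the $\greenH$-class and then conjoins the quantifier-free disjunction $\bigvee_{x=m_{z}^{-1}(y)} R_{\geq J,y}(j)\wedge\bar{R}_{\geq J,t,z}(k)$ obtained from Lemma~\ref{lemma:h_class_alg}, which is exactly your disjunction over $xz=u$. Your explicit treatment of the degenerate short-infix cases is extra care the paper omits, but it does not change the argument.
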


 \begin{proof}

  For \(R\) and \(L\) be the \(\greenR\)-class and \(\greenL\)-class of \(x\), let \(\psi_{R}(j,k)\), \(\psi_{L}(j,k)\) be the formulas given by~\cref{cor:r_classes_formulas}.
  With those, we can determine the \(\greenH\)-class of the evaluation of \(w[j+1,k-1]\): it is \(H=R\cap L\).
  Let \(t\) be an arbitrary element of \(H\).
  To futher determine the element inside that \(\greenH\)-class, we use~\cref{lemma:h_class_alg}.
  This give the formula:
  \[ \psi_{x}(j,k) = \psi_{R}(j,k)  \land \psi_{L}(j,k) \land \bigvee_{x=m_{z}^{-1}(y)} R_{\geq J,y}(j) \land \bar{R}_{\geq J,t,z}(k).\]

  The formula \(\psi_{x}\) is a finite Boolean combination without negations of formulas in \(\Sigma_{2}\) and hence is also in \(\Sigma_{2}\).
 \end{proof}

\subsubsection{\texorpdfstring{Proof that the relations in $\schema_M$ can be maintained}{Proof that the relations in tau-M can be maintained}}
\label{sec:app:maintainablity}
  We now show Part i) of~\cref{claim:st_relations_exists}.
  To that end, we use the formulas computing infixes introduced in the last subsection.

  \begin{lemma}
    For any monoid \(M\), the relations in \(\schema_{M}\) can be maintained with \(\Sigma_{2}\) formulas.
  \end{lemma}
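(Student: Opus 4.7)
The plan is to exhibit, for every relation symbol $R \in \schema_M$ and every change operation $\set_\sigma(i)$, a $\Sigma_{2}$ update formula over $\schema_M$. The central tool is a lift of the infix-evaluation formulas $\psi_x(j,k)$ provided by~\cref{claim:st_relations_exists}(ii) to formulas $\psi_x^\sigma(j,k;i)$ expressing that $w'[j+1,k-1]$ evaluates to $x$ after the change, where $w'$ is the updated word. This lift is constructed exactly as in the main argument of~\cref{thm:reg_in_udynst}: one splits on whether $i$ lies in the open interval $(j,k)$ and composes the $\psi$-formulas via positive boolean connectives. Since $\psi_x$ is $\Sigma_{2}$ and $\Sigma_{2}$ is closed under positive boolean combinations, each $\psi_x^\sigma$ remains $\Sigma_{2}$.

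We focus on updating $R_{\geq J, y, x}$ at a position $j$; the relations $\bar{R}_{\geq J, y, x}$, $L_{\geq J, y, x}$, and $\bar{L}_{\geq J, y, x}$ are handled symmetrically (for the $L$-variants, by swapping left and right). If $i \leq j$, the change lies outside the interval $w[j+1,\ldots]$ that defines $R_{\geq J,y}(j)$, so the old value can be copied. If $i > j$, we recompute: we existentially quantify the endpoint $k$ of the critical prefix together with auxiliary variables $k_s, k_{ss}$ forced to be the immediate successors of $k$ and $k_s$ respectively, encoded by $k_s > k \wedge \forall m\,(m > k \to m \geq k_s)$ and analogously for $k_{ss}$. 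The body then asserts $\psi_x^\sigma(j, k_s; i)$ to witness that $w'[j+1, k]$ evaluates to $x$, together with the disjunction $\bigvee_{z \,:\, y \cdot z \,\not\geq_{\greenJ}\, J} \psi_z^\sigma(j, k_{ss}; i)$ to witness the drop of the $\greenJ$-class at position $k+1$. The edge cases $k = j$ (empty prefix, forcing $x = 1$) and $k = n$ (end of the word, captured by $\forall m\,(m \leq k)$) contribute finitely many further disjunctive branches, each still $\Sigma_{2}$.

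The main obstacle is the maximality condition on $k$. Phrased naively as $P(k) \wedge \forall m > k\; \neg P(m)$, where $P(m)$ asserts $y \cdot w'[j+1, m] \geq_{\greenJ} J$, this would place $\neg P$ under a universal; but $P$ is expressible only in $\Sigma_{2}$ via the $\psi^\sigma$-formulas, so $\neg P$ sits in $\Pi_{2}$ and the whole construction would land in $\Pi_{3}$. The saving observation is that $\leq_{\greenJ}$ is monotonically non-increasing under right multiplication, hence the unbounded universal collapses to a single local check at the immediate successor $k+1$, expressible \emph{positively} as a finite disjunction of $\psi^\sigma$-formulas over the elements $z$ with $y \cdot z \not\geq_{\greenJ} J$. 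After this reduction, each inner $\psi^\sigma = \exists^* \bar\ell\, \forall^* \bar i\, \phi$ can have its existentials hoisted past the successor-enforcing universals (they do not depend on $m$ or $m'$) and its universals merged with them into a single trailing $\forall$-block, yielding the required $\exists^*\forall^*$-prefix; closure of $\Sigma_{2}$ under positive boolean combinations then handles the disjunctions across cases.
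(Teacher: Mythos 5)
Your proposal is correct and follows essentially the same route as the paper: existentially quantify the maximal prefix endpoint, reuse the lifted infix-evaluation formulas $\psi^\sigma_x$, and exploit that right multiplication can only decrease the $\greenJ$-class so that maximality becomes a single positive local check rather than a negated universal, keeping the update formula in $\Sigma_2$. The only cosmetic difference is that you encode the witnessing positions via explicitly axiomatized successor variables, whereas the paper avoids this by using the open-interval convention of $\psi_y(j,k)$ together with the letter relation $W_{t'}(k)$ at the endpoint.
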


  \begin{proof}
    We only prove it for \(R_{\geq J,y,x}\), for some \(\greenJ\)-class \(J\) of \(M\) and \(x,y\in M\), the case for \(L\) and the barred versions are analogous.
    Let \(\psi_{x}^{\sigma}\) be the formula for infix evaluation after a change as given in the proof of~\cref{thm:reg_in_udynst}.
    Assume an operation \(\set_{\sigma}(i)\) is seen.
  The update formula simply finds the maximal position whose infix is in \(J\):
  \[ \varphi^{R_{\geq J,y',y}}_{\sigma}(j;i) = \exists k> j \Big[\ \bigvee_{\stackrel{y=tz \ y'tz\geq_{\greenJ} J}{y'tzt' \not\geq_{\greenJ} M}} \psi^{\sigma}_{y}(j,k;i) \land W_{t}(j) \land W_{t'}(k) \Big]  . \]
  As only an existential quantifier is used and the $\Sigma_2$-formulas $\psi^{\sigma}_{y}$ are not in the scope of a negation, the full formula is also in \(\Sigma_{2}\).
  \end{proof}

\section{Proofs from Section~\ref{sec:udynsop}}
\orderedMonoidLanguages*
\begin{proof}
  Suppose $\prog$ is a dynamic program for \(M\) that stores whether $w \in \uparrow x$ in an auxiliary bit \(q_{x}\), for all \(x\in M\). Then \(\Member{L}\) can be maintained by adding a new auxiliary bit \(q\), that is updated with the disjunction of the update formulas for the \(q_{x}\) for \(x\in P\).
  Indeed, as \(P\) is an upset, it is the union of \(\uparrow x\) for \(x\in P\).
\end{proof}

\section{\texorpdfstring{Discussion: The regular languages of \(\UDynSo\)}{Discussion: The regular languages of UDynSigma-1}}
\label{sec:udynso}

We discuss approaches towards algebraic characterizations of \(\UDynSo\); expanding upon the discussion in the main part. Two obstacles to an algebraic characterization of $\UDynSo$ are (1) that $\Sigma_1$ is not closed under composition and thus $\UDynSo$ is a priori not a (positive) variety; and (2) an absence of lower bounds techniques against \(\UDynSo\). One path towards resolving (1) is via considering  positive lp-varieties, which only require closure under positive Boolean operations and under quotients of inverses of length-preserving morphisms~\cite{Pin_cvar_2010}. For (2), new lower bound techniques beyond the substructure lemma (and its variant used for $\UDynSop$) seem to be necessary. 

In the rest of this section, we first we discuss what we know about the expressivity of \(\UDynSo\) and then give some ideas of why lp-varieties may be helpful in characterizing~\(\UDynSo\).

\subsection{\texorpdfstring{Expressivity of \(\UDynSo\).}{Expressivity of UDynSigma-1.}}

We shortly discuss the expressive power of $\UDynSo$. Consider the class $\bJ$ of languages that can be described by a Boolean combination of \(\Sigma_{1}\)-formulas, i.e. a Boolean combination of languages of the form \(\Sigma^{*}a_{1}\Sigma^{*}\cdots \Sigma^{*}a_{k}\Sigma^{*}\), with \(a_{1},\ldots,a_{k}\in\Sigma\). This class is equal to the class of piecewise-testable languages~\cite{simon_j_1975}.

We next show that $\UDynSo$ is powerful enough to maintain all regular languages in \(\bJ\ast\bG\). This class can be equivalently characterized as Boolean combination of \emph{group monomials}, i.e. of languages of the form \(L_{1}a_{1}L_{2}\cdots a_{n}L_{n+1}\) where \(a_{1},\ldots,a_{n}\in \Sigma\) and \(L_{1},\ldots,L_{n+1}\) are languages over \(\Sigma^{*}\) recognized by a finite group~\cite[Theorem 6.1]{pin95}.

\JinUDynSO*

\begin{proof}
  We know that \(L\) is a Boolean combination of group monomials.
  We can push the negations to the leaves, and use that \(\UDynSo\) is closed under positive Boolean operations.
  Therefore, we only need to tackle the cases where \(L\) is equal to \(L_{1}a_{1}L_{2}\cdots a_{n}L_{n+1}\) where for \(1\leq i\leq n+1\), \(L_{i}\) is recognized by the group \(G_{i}\); or its complement.
  We start by the first one.
  Let \(G\) be the product of all the \(G_{i}\): It is clear that it recognizes every \(L_{i}\) with some morphism \(\mu\).
  A prefix submonomial of \(L\) is a language of the form \(L_{1}a_{1}\cdots L_{k}a_{k}\mu^{-1}(g)\) for some \(0\leq k\leq n\) and \(g\in G\).
  Similarly, a suffix submonomial of \(L\) is a language of the form \(\mu^{-1}(g) a_{k} L_{k+1}\cdots a_{n}L_{n+1}\) for some \(1\leq k\leq n+1\) and \(g\in G\).
  We use several auxiliary relations:
  \begin{itemize}
    \item \(P_{k,g}\) that contains \(\StrictPrefix{L'}\) for \(L'\) the prefix submonomial of \(L\) associated to \(k\) and \(g\),
    \item \(S_{k,g}\) that contains \(\StrictSuffix{L'}\) for \(L'\) the suffix submonomial of \(L\) associated to \(k\) and \(g\),
  \end{itemize}
  Remark that \(P_{0,g}\) and \(S_{n+1,g}\) contain respectively \(\StrictPrefix{G}\) and \(\StrictSuffix{G}\), and can be maintained with~\cref{lem:groups_in_udynprop}.
  Let \(\set_{a}(i)\) be a change.
  There is a quantifier-free formula \(\varphi^{a}_{g}(j,k;i)\) that expresses whether the value of the new infix \(w_{j+1}\cdots w_{k-1}\) is \(g\).
  It uses \(P_{0,g}\) that is already maintained.
  We describe how to update \(P_{k,g}(j)\) (the other case is symmetrical), for some \(k,j\) and \(g\), under this change.
  This is done with a formula that existentially quantifies over \(1\leq x_{1}<\cdots < x_{k}<j\) and asserts that (1) there is a letter \(a_{l}\) at every position \(x_{l}\), (2) for every \(0\leq l<k\), \(\varphi^{a}_{h}(x_{l},x_{l+1};i)\) holds for some \(h\in\mu(L_{l})\), and (3) \(\varphi^{a}_{g}(x_{k},j;i)\) holds.

  Now the answer bit is maintained via a disjunction over:
  \begin{itemize}
    \item all \(1\leq k\leq n\), and all \(g\in\mu(L_{k})\) and \(g'\in \mu(L_{k+1})\), of \(P_{k-1,g}\land S_{k+1,g'}\),
    \item all \(1\leq k\leq n+1\), and all \(g\cdot \mu(a) \cdot  g'\in \mu(L_{k})\), of \(P_{k,g}\land S_{k,g'}\).
  \end{itemize}
  Indeed, if \(w\) is in \(L\) there is a factorization \(w=w_{1}a_{1}w_{2}\cdots a_{n} w_{n+1}\) with \(w_{k}\in L_{k}\).
  Then either some \(a_{k}\) falls at position \(i\) or not.
  The first case is detected by the first part of the formula and the second case by the second part of the formula,
  as strict prefixes and suffixes from position \(i\) are unaltered by the change.

  Notice that the answer bit is maintained with a quantifier-free formula. Therefore, in the case of the complement of a group monomial, we can simply use the same dynamic program where the formula for the answer bit is negated.
\end{proof}

One might conjecture that \(\bJ\ast\bG\) provides an exact characterization. Unfortunately, this is not the case because the language \(\regex{\Sigma^{*}aa\Sigma^{*}}\) is in \(\UDynSo\), see Example~\ref{example:aa}, but its ordered syntactic monoid is not in \(\bJ\ast\bG\).

We conjecture, however, that no further expressibility is possible beyond  \(\Sigma_{2}\ast \bG\),
where \(\Sigma_{2}\) is the positive variety of ordered monoids recognizing a language definable by a \(\Sigma_{2}[<]\) sentence.

\DynSoSigmatG*

This problem appears challenging.
Lower bounds are scarce, and the class \(\Sigma_{2}\ast\bG\) is poorly understood.
Although it is known to correspond to the level 3/2 of the group-based concatenation hierarchy~\cite{pin_bridges_98}, and it is decidable~\cite{place_groups_2019}, it is unclear if there is a simplification to \(\bE \Sigma_{2}\) as in many other cases.
Moreover, the computational power of \(\UDynSo\) within \(\Sigma_{2}\ast\bG\) seems very subtle, as reflected by the fact that is not a variety but a less studied lp-variety, making it difficult to locate precisely.

\subsection{\texorpdfstring{Varieties tailored for $\UDynSop$}{Varieties tailored for UDynSigma-1+}}
 
As stated above, \(\UDynSo\) is not a priori a variety, since \(\Sigma_{1}\) is not closed under composition.
Consequently, \cref{lem:inv_morphism_quotients} does not yield closure under inverse morphisms.
Nevertheless, the proof idea of the lemma  still yields closure under quotients and also closure under length-preserving morphisms $h \colon \Gamma \to \Sigma$ can be shown. 

This motivates to look for variety theories for these weaker closure properties. It turns out that   such varieties have been explored in \cite{Pin_cvar_2010}. A class of languages is called a \emph{positive lp-variety} if it is closed under positive Boolean operations and under quotients of inverses of length-preserving morphisms.

\begin{fact}
  The class of languages maintainable in \(\UDynSo\) is a positive lp-variety.
\end{fact}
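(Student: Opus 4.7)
The plan is to verify the three defining closure properties of a positive lp-variety: closure under positive Boolean operations, under quotients, and under inverses of length-preserving morphisms.

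\textbf{Closure under positive Boolean operations.} Given two dynamic \(\UDynSo\)-programs \(\prog_{1}\) and \(\prog_{2}\) for languages \(L_{1}\) and \(L_{2}\), I would run them side by side on disjoint schemata of unary auxiliary relations, and then define the answer bit of the combined program as the disjunction (resp.\ conjunction) of the two answer bits of \(\prog_{1}\) and \(\prog_{2}\). Since \(\Sigma_{1}\) is closed under \(\lor\) and \(\land\), all update formulas and the answer-bit formulas remain in \(\Sigma_{1}\).

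\textbf{Closure under quotients.} This is already recorded in the \(L\sigma^{-1}\)/\(\sigma^{-1}L\) part of \cref{lem:inv_morphism_quotients}, although stated there only for \(\UDynProp\) and \(\UDynSop\). I would observe that an inspection of the construction in part~(b) of the appendix proof shows it preserves the syntactic shape of each update formula of the original program: the new update formulas are obtained either by copying an old formula with a simple relabelling of the auxiliary relations, or by initialising certain bits once using \(\Sigma_{1}\)-formulas in the case that $n$ positions are set to $\sigma$ (no quantifier alternation is introduced). Hence, if the original program is in \(\UDynSo\), so is the new one.

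\textbf{Closure under inverses of length-preserving morphisms.} I would invoke the construction from part~(a) of \cref{lem:inv_morphism_quotients}, specialised to the case \(s=1\). With \(s=1\) the ``block''-encoding collapses: each symbol of \(w\) corresponds to exactly one symbol of \(h(w)\), so the auxiliary schema needs only a single copy \(R_{1}\) of each relation \(R\), and only one stage of rewriting is needed. In particular, the iterative substitution that replaces \(\exists z\,\psi\) by \(\exists z\,\bigvee_{\ell_{z}=1,\ldots,s}\psi_{\ell_{z}}\) becomes trivial, so no composition of update formulas across stages occurs. The resulting update formulas are obtained from those of the original program by renaming relations and replacing letter atoms \(W_{\sigma}(y)\) by their counterparts in the new alphabet; this renaming preserves the \(\Sigma_{1}\) prefix. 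Hence the new program witnesses \(\Member{h^{-1}(L)} \in \UDynSo\).

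The main obstacle is the third point: in general, the construction of \cref{lem:inv_morphism_quotients}(a) for arbitrary morphisms \(h\) requires composing the update formula of one ``virtual'' change into the update formula of the next, which can increase the quantifier alternation and take us outside \(\Sigma_{1}\). The point of restricting to length-preserving morphisms is precisely to avoid this composition, so the argument collapses to a syntactic relabelling that \(\Sigma_{1}\) clearly tolerates.
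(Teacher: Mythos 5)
Your proposal is correct and follows essentially the same route as the paper, which states this fact without a detailed proof but explicitly motivates it by the observations that positive Boolean combinations pose no problem for $\Sigma_1$, that the quotient construction of \cref{lem:inv_morphism_quotients} carries over syntactically, and that the general inverse-morphism construction fails only because of the cross-stage composition of update formulas, which disappears for length-preserving morphisms. Your identification of that composition step as the precise obstacle, and of the $s=1$ specialisation as the reason the obstacle vanishes, is exactly the intended argument.
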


Unlike varieties, lp-varieties may fail to contain two languages with the same syntactic ordered monoid.

\begin{example}
 Consider the class \(\mathcal M\) of regular languages \(L\) whose membership only depends on the length, that is for all \(w,w'\) such that \(|w|=|w'|\) we have \(w\in L\) iff \(w'\in L\).
 It is clearly closed under Boolean operations and quotients.
 As a length-preserving morphism preserves the length of words by definition, \(\mathcal M\) is also closed under length-preserving morphisms.
 However, consider \(L= \regex{((a+b)(a+b))^{*}} \in \mathcal M\) and \(\mu: \{a,b\}^{*}\rightarrow \{a,b\}^{*}\) defined by \(\mu(a)=a\) and \(\mu(b)=aa\).
 Then \(L' = \mu^{-1}(L) = \regex{(b^{*}ab^{*}ab^{*})^{*}}\) is not in \(\mathcal M\).
 This shows that \(\mathcal M\) is not a variety but a lp-variety.

 But \(L\) and \(L'\) both have the group \(\mathbb{Z}_{2}=\{0,1\}\) as syntactic monoid.
 To distinguish them algebraically, we have to look at morphisms recognizing them.
 Indeed, \(L\) is recognized by \(\mu(a)=\mu(b)=1\) and \(L'\) is recognized by \(\mu(a)=1\) and \(\mu(b)=0\).
\end{example}

This motivates the use of a finer syntactic object, which preserves more information about the language.
An ordered stamp is a morphism \(\mu:\Sigma^{*}\rightarrow (M,\leq) \) where \(\Sigma\) is a finite alphabet and \((M,\leq)\) a finite ordered monoid.
Let \(L\) be a regular language and recall that the syntactic ordered monoid \((M,\leq)\) is defined as \(\delta(\Sigma^{*})\), where \(\delta\) is the transition function of the minimal automaton of \(L\).
We defined the syntactic ordered stamp as the morphism \(\mu: \Sigma^{*}\rightarrow (M,\leq)\) such that for every \(a\in\Sigma\), \(\mu(a)=\delta(a)\).
There is a counterpart to~\cref{thm:eilenberg} and~\cref{thm:positive_eilenberg}.

\begin{lemma}[\protect{\cite[Theorem 5.1]{Pin_cvar_2010}}]
  Let \(\cV\) be a positive lp-variety of regular languages and let \(\mu\) be the syntactic ordered stamp of a language \(L\in\cV\).
  Then any language recognized by \(\mu\) is also in \(\cV\).
\end{lemma}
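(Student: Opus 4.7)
The plan is to follow the standard Eilenberg-style argument, suitably adapted to positive lp-varieties. I need to show that for every upset \(P \subseteq M\) the language \(\mu^{-1}(P)\) lies in \calV. Since \(M\) is finite, every upset decomposes as \(P = \bigcup_{x \in P} \up{x}\), so \(\mu^{-1}(P) = \bigcup_{x \in P} \mu^{-1}(\up{x})\), and closure of \calV under finite unions (part of positive Boolean closure) reduces the task to showing \(\mu^{-1}(\up{x}) \in \calV\) for every \(x \in M\).

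For the core step, let \(P_0 \subseteq M\) be the unique upset with \(L = \mu^{-1}(P_0)\). The definition of the syntactic order gives \(y \geq x\) if and only if, for every \(u,v \in M\), \(uxv \in P_0\) implies \(uyv \in P_0\). Consequently
\[
\mu^{-1}(\up{x}) \;=\; \bigcap_{\substack{(u,v)\in M\times M \\ uxv \in P_0}} \{\,w \in \Sigma^* : u\,\mu(w)\,v \in P_0\,\},
\]
a finite intersection because \(M\times M\) is finite. Because \(\mu\) is the syntactic stamp of \(L\) its image is all of \(M\), so for each pair \((u,v)\) I can fix representatives \(s_u,t_v \in \Sigma^*\) with \(\mu(s_u) = u\) and \(\mu(t_v) = v\). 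Then \(\{w : u\,\mu(w)\,v \in P_0\} = \{w : s_u w t_v \in L\} = s_u^{-1} L\, t_v^{-1}\). Iterating the letter-quotient closure of \calV (using \((\sigma\tau)^{-1} L = \tau^{-1}(\sigma^{-1} L)\) and its right analogue) yields closure under arbitrary two-sided word-quotients, so each \(s_u^{-1} L\, t_v^{-1}\) lies in \calV. The finite intersection then lies in \calV by positive Boolean closure, finishing the proof.

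The main point to verify carefully is the direction of the syntactic order, so that \(\up{x}\) really is the set of elements at least as accepting as \(x\) and the finite intersection above assembles into \(\mu^{-1}(\up{x})\) rather than its complement. A noteworthy feature of the argument is that it never invokes closure under inverses of length-preserving morphisms --- the very closure that distinguishes lp-varieties from ordinary (positive) varieties: the whole reduction stays inside \(\Sigma^*\), and only positive Boolean operations and letter-quotients enter. Length-preservation is needed only in the complementary direction of the Eilenberg-style correspondence for lp-varieties (the converse that recovers the class from its stamp invariants), not in this ``every language recognized by the syntactic stamp lies in \calV'' direction.
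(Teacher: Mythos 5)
The paper does not actually prove this lemma --- it is imported verbatim from Pin's work via the citation --- so there is no in-paper argument to compare against; your proof is the standard Eilenberg-style argument for (positive) $\mathcal{C}$-varieties and it is correct, including the direction of the syntactic order and the identity \(\mu^{-1}(\up{x}) = \bigcap_{uxv \in P_0} s_u^{-1}L\,t_v^{-1}\) obtained from surjectivity of the syntactic stamp. Your closing observation is also accurate: this direction of the correspondence uses only positive Boolean operations and (iterated letter) quotients, all of which stay over the fixed alphabet \(\Sigma\), so closure under inverses of length-preserving morphisms plays no role here.
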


Thus positive lp-varieties together with syntactic ordered stamps may be a good starting point towards characterizing $\UDynSop$.
 
\end{document}